\documentclass[a4paper,12pt]{article}

\setlength{\oddsidemargin}{-2.9mm}
\setlength{\evensidemargin}{-2.9mm}
\setlength{\textwidth}{168mm}
\setlength{\topmargin}{-20.4mm}
\setlength{\textheight}{252mm}

\makeatletter
 
  \@addtoreset{equation}{section}
 \makeatother
\usepackage{amsfonts}
\usepackage{amssymb}
\usepackage{mathrsfs}
\usepackage{amsmath,amsthm}
\usepackage{amsmath}

\usepackage{graphicx}
\usepackage{color}
\usepackage{indentfirst}

\usepackage{multirow}

\begin{document}

\newtheorem{theorem}{Theorem}
\newtheorem{lemma}{Lemma}
\newtheorem{proposition}{Proposition}
\newtheorem{corollary}{Corollary}
\theoremstyle{definition}
\newtheorem{defn}{Definition}
\newtheorem{remark}{Remark}
\newtheorem{step}{Step}

\newcommand{\Cov}{\mathop {\rm Cov}}
\newcommand{\Var}{\mathop {\rm Var}}
\newcommand{\E}{\mathop {\rm E}}
\newcommand{\const }{\mathop {\rm const }}
\everymath {\displaystyle}

\newcommand{\ruby}[2]{
\leavevmode
\setbox0=\hbox{#1}
\setbox1=\hbox{\tiny #2}
\ifdim\wd0>\wd1 \dimen0=\wd0 \else \dimen0=\wd1 \fi
\hbox{
\kanjiskip=0pt plus 2fil
\xkanjiskip=0pt plus 2fil
\vbox{
\hbox to \dimen0{
\small \hfil#2\hfil}
\nointerlineskip
\hbox to \dimen0{\mathstrut\hfil#1\hfil}}}}

\everymath {\displaystyle}

\allowdisplaybreaks[4]

\def\qedsymbol{$\blacksquare$}
\renewcommand{\thefootnote }{\fnsymbol{footnote}}
\renewcommand{\refname }{References}
\renewcommand{\figurename}{Figure}

\title{Order Estimates for the Exact Lugannani--Rice Expansion}
\author{Takashi Kato
\footnote{Division of Mathematical Science for Social Systems, 
              Graduate School of Engineering Science, 
              Osaka University, 
              1-3, Machikaneyama-cho, Toyonaka, Osaka 560-8531, Japan, 
E-mail: \texttt{kato@sigmath.es.osaka-u.ac.jp}} \and 
Jun Sekine 
\footnote{Division of Mathematical Science for Social Systems, 
              Graduate School of Engineering Science, 
              Osaka University, 
              1-3, Machikaneyama-cho, Toyonaka, Osaka 560-8531, Japan, 
E-mail: \texttt{sekine@sigmath.es.osaka-u.ac.jp}} \and 
Kenichi Yoshikawa 
\footnote{Sumitomo Mitsui Banking Corporation, 
E-mail: \texttt{k.yoshi6208@gmail.com}}
}
\date{First version: October 12, 2013\\
This version: June 15, 2014}
\maketitle

\begin{abstract}
The Lugannani--Rice formula
is a saddlepoint approximation method 
for estimating the tail probability distribution function,
which was originally studied 
for the sum of independent identically distributed random variables.
Because of its tractability, 
the formula is now widely used 
in practical financial engineering 
as an approximation formula 
for the distribution of a (single) random variable.
In this paper, 
the Lugannani--Rice approximation formula is derived 
for a general, parametrized sequence 
$(X^{(\varepsilon)})_{\varepsilon>0}$ 
of random variables and the order estimates 
(as $\varepsilon\to 0$) 
of the approximation are given.

\footnote[0]{Mathematical Subject Classification (2010) \  62E17, 91G60, 65D15}\\
\footnote[0]{JEL Classification (2010) \ C63 , C65}
{\bf Keywords}: 
Saddlepoint approximation, The Lugannani--Rice formula, 
Order estimates, Asymptotic expansion, Stochastic volatility models
\end{abstract}

\section{Introduction}\label{sec_Intro}

Saddlepoint approximations (SPAs) provide effective methods 
for approximating
probability density functions and 
tail probability distribution functions, 
using their cumulant generating functions (CGFs).
In mathematical statistics, SPA methods originated with 
Daniels (1954), in which 
an approximation formula was given 
for the density function of the sample mean 
$\bar{X}_n = (X_1 + \cdots + X_n)/n$ of 
independent identically distributed (i.i.d.)\hspace{2mm}random variables $(X_i)_{i\in {\mathbb N}}$,
provided that the law of $X_1$ has the density function. 
Lugannani and Rice (1980) derives 
the following approximation formula 
for the right tail probability: 
\begin{equation}\label{original_LR}
P(\bar{X}_n > x) = 1-\Phi (\hat{w}_n) + 
\phi (\hat{w}_n)\left( \frac{1}{\hat{u}_n} - \frac{1}{\hat{w}_n}\right)  
+ O(n^{-3/2}) 
\end{equation}
as $n\to\infty$.
Here, $\Phi (w)$ and $\phi (w)$ are the standard normal distribution function 
and its density function $\phi:=\Phi'$, respectively, 
and $\hat{u}_n$ and $\hat{w}_n$ are expressed by using 
the CGF $K(\cdot )$ of $X_1$ and 
the saddlepoint $\hat{\theta }$ of $K(\cdot)$. 
That is,  $\hat{\theta}$ satisfies $K'(\hat{\theta }) = x$. 
Related SPA formulae have been studied in 
Daniels (1987), 
Jensen (1995), 
Kolassa (1997), 
Butler (2007), 
the references therein, and others. 

Strictly, the Lugannani--Rice (LR) formula (\ref {original_LR})
should be interpreted as an asymptotic result as $n\rightarrow \infty$.
However, it is popular in many practical applications of financial engineering 
as an approximation formula for the right tail probability because of its tractability.
This approximation is
\begin{eqnarray}\label{0th_LR}
P(X_1 > x) \approx   1-\Phi (\hat{w}_1) + 
\phi (\hat{w}_1)\left( \frac{1}{\hat{u}_1} - \frac{1}{\hat{w}_1}\right) . 
\end{eqnarray}

In other words, 
LR formula (\ref {original_LR}) 
is applied even when $n$ is $1$ !
For financial applications of SPA formulae, 
we refer the readers to papers such as Rogers and Zane (1999), 
Xiong, Wong, and Salopek (2005),
A\"it-Sahalia and Yu (2006), 
Yang, Hurd, and Zhang (2006),
Glasserman and Kim (2009),
and Carr and Madan (2009). 
It is interesting that
the approximation formula (\ref {0th_LR}) 
still works surprisingly well in many financial examples, 
despite its lack of theoretical justification.

The aim of this paper is to provide a measure of the effectivity 
of the ``generalized usage'' of the LR formula (\ref {0th_LR})
from an asymptotic theoretical viewpoint.
We consider a general parametrized sequence of random variables
$(X^{(\varepsilon)})_{\varepsilon> 0}$ 
and assume that the $r$th cumulant of $X^{(\varepsilon)}$
has order $O(\varepsilon^{r-2})$ 
as $\varepsilon \rightarrow 0$ 
for each $r\geq 3$. 
This implies that 
$X^{(\varepsilon)}$ 
converges in law to 
a normally distributed random variable 
(a motivation is provided for this assumption in Remark \ref {rem_A5} of Section \ref {sec_order}).
We next derive the expansion
\begin{equation}\label{exact_LR}
P\left(X^{(\varepsilon)}>x \right) 
= 1 - \Phi (\hat{w}_\varepsilon ) 
+ \sum ^\infty _{m = 0}\Psi ^\varepsilon _m(\hat{w}_\varepsilon ),
\end{equation}
which we call the exact LR expansion (see Theorem 1 of Section 2). 
Here, $\hat{w}_\varepsilon$ is given by 
(\ref {def_SP}) and (\ref {def_w_eps}), 
and the
$\Psi ^\varepsilon _m(\hat{w}_\varepsilon )$ 
($m\in {\mathbb Z}_+$) are given by 
(\ref {def_Psi_m}).
We then show that
\begin{equation}\label{order_01}
\text{
$\Psi ^\varepsilon _0(\hat{w}_\varepsilon ) = O(\varepsilon )$ 
and $\Psi ^\varepsilon _m(\hat{w}_\varepsilon ) = O(\varepsilon ^3)$
as $\varepsilon\rightarrow 0$ for all $m\in {\mathbb N}$ }
\end{equation}
under some conditions. 
This is the main result of the paper
(see Theorem \ref {th_main} in Section \ref {sec_order} for the details). 
\begin{remark}\label{rem_intro}
We note that the expansion (\ref {exact_LR}) 
with the order estimates (\ref {order_01})
and the classical LR formula (\ref {original_LR}) 
treat different situations, 
although they may have some overlap. 
Let
\begin{eqnarray*}
\varepsilon:=\frac{1}{\sqrt{N}}
\quad\text{and}\quad
X^{(\varepsilon)}:=\varepsilon \sum_{i=1}^{1/\varepsilon^2} X_i,  
\end{eqnarray*}
where $(X_i)_{i\in {\mathbb N}}$ is an i.i.d.\hspace{1mm}sequence of random variables.
Then, we can check that the law of  $X^{(\varepsilon)}$
satisfies the conditions
necessary to apply Theorem \ref {th_main} in Section \ref {sec_order} 
(see Remark \ref {rem_A5} (iv) in Section \ref {sec_order}).
So, (\ref {exact_LR}) holds with (\ref {order_01}).
On the other hand, the classical LR formula (\ref {original_LR}) 
gives an approximation formula of the {\it far}-right tail probability: 
\begin{eqnarray*}
P\left({X}^{(\varepsilon)} > \frac{x}{\varepsilon} \right) 
= 1-\Phi (\hat{w}_\varepsilon) + 
\phi (\hat{w}_\varepsilon)
\left( \frac{1}{\hat{u}_\varepsilon} - \frac{1}{\hat{w}_\varepsilon}\right)  
+ O(\varepsilon^{3}) \ \ \mbox {as} \ \ \varepsilon \rightarrow 0. 
\end{eqnarray*}
In this paper, 
with motivation from financial applications
(e.g., call option pricing in Section \ref {sec_eg}), 
we choose to analyse the right tail probability $P(X^{(\varepsilon)}>x)$ 
instead of the far-right tail probability 
$P(X^{(\varepsilon)}>x/\varepsilon)$.
For a related remark, see (i) in Section \ref {sec_conclusion}. 
\end{remark}
The organisation of the rest of this paper is as follows. 
In Section \ref {sec_LR}, 
we introduce the ``exact'' LR expansion: 
we first derive it formally, 
and next provide a technical condition sufficient 
to ensure the validity of the expansion.
Section \ref {sec_order} states our main results:
we derive the order estimates of the higher order terms 
in the exact LR expansion (\ref {exact_LR}).
Section \ref {sec_eg} discusses some examples: 
we introduce two stochastic volatility (SV) models and 
numerically check the accuracy of the higher order LR formula. 
Section \ref {sec_proofs} contains the necessary proofs:
Subsection \ref {sec_proof_LR} gives the proof of Theorem \ref {th_exact_LR} 
and Subsection \ref {sec_proof_estimate} gives the proof of Theorem \ref {th_main}.
Section \ref {sec_extention} discusses some extensions of Theorem \ref {th_main}:
under additional conditions 
we obtain the sharper estimate 
$\Psi ^\varepsilon _m(\hat{w}_\varepsilon ) = O(\varepsilon ^{2m + 1})$ as 
$\varepsilon \rightarrow 0$ 
for $m\in {\mathbb N}$,  
and the related order estimate 
of the absolute error of the $M$th order LR formula.
In addition, we introduce error estimates for the Daniels-type formula,
which is an approximation formula for 
the probability density function.
The last Section \ref {sec_conclusion} contains concluding remarks. 
In Appendix, we present some
toolkits for deriving 
the explicit forms of 
$\Psi ^\varepsilon _2(\hat{w}_\varepsilon )$ and $\Psi ^\varepsilon _3(\hat{w}_\varepsilon )$.

\section{The Exact Lugannani--Rice Expansion}\label{sec_LR}

In this section we derive the exact LR expansion (\ref {exact_LR}), 
which is given as a natural generalisation of the original LR formula. 
For readability, we introduce here the formal calculations 
to derive that formula and leave 
rigorous arguments to Section \ref {sec_proof_LR} 
(see also Appendix in Rogers and Zane (1999)).

Let $(\mu _\varepsilon )_{0\leq \varepsilon \leq 1}$ 
be a family of probability distribution on $\Bbb {R}$ and 
define a distribution function $F_\varepsilon $ and a tail probability function $\bar{F}_\varepsilon $ by 
\begin{eqnarray*}
F_\varepsilon (x) = \mu _\varepsilon ((-\infty , x]), \ \ \bar{F}_\varepsilon (x) = 1 - F_\varepsilon (x). 
\end{eqnarray*}
We denote by $K_\varepsilon $ the CGF of $\mu _\varepsilon $, that is, 
\begin{eqnarray*}
K_\varepsilon (\theta ) = \log \int _\Bbb {R}e^{\theta x}\mu _\varepsilon (dx). 
\end{eqnarray*}
We assume the following conditions. 
\begin{itemize}
 \item [ {[A1]} ] 
For each $\varepsilon \in [0, 1]$, the effective domain 
$\mathcal {D}_\varepsilon = \{ \theta \in \Bbb {R}\ ; \ |K_\varepsilon (\theta )| < \infty  \}$ of $K_\varepsilon $ 
contains an open interval that includes zero. 
 \item [ {[A2]} ] 
For each $\varepsilon \in [0, 1]$, 
the support of $\mu _\varepsilon $ is equal to the whole line $\Bbb {R}$. 
Moreover, the characteristic function of $\mu _\varepsilon $ is integrable; that is, 
\begin{eqnarray*}
\int ^\infty _{-\infty }\left| \int ^\infty _{-\infty }e^{i\xi x}\mu _\varepsilon (dx)\right| d\xi < \infty , 
\end{eqnarray*}
where $i = \sqrt{-1}$ is the imaginary unit. 
\end{itemize}

It is well known that $K_\varepsilon $ is analytic and convex on the interior 
$\mathcal {O}_\varepsilon $ of $\mathcal {D}_\varepsilon $. 
Moreover, [A2] implies that $\mu _\varepsilon $ has a density function, and thus 
$K_\varepsilon $ is a strictly convex function (see Durrett (2010), for instance). 
Since the range of $K'_\varepsilon $ coincides with $\Bbb {R}$ under [A1]--[A2], 
we can always find the solution $\hat{\theta }_\varepsilon = \hat{\theta }_\varepsilon (x) \in \mathcal {O}_\varepsilon $ to 
\begin{eqnarray}\label{def_SP}
K'_\varepsilon (\hat{\theta }_\varepsilon ) = x
\end{eqnarray}
for any $x\in \Bbb {R}$. 
We call $\hat{\theta }_\varepsilon $ the saddlepoint of $K_\varepsilon $ given $x$. 
Here, note that $K_\varepsilon $ is analytically continued 
as the function defined on $\mathcal {O}_\varepsilon \times i\Bbb {R}$. 

Now, we derive (\ref {exact_LR}). 
Until the end of this section, we fix an $\varepsilon \in [0, 1]$ and an $x\in \Bbb {R}$. 
To derive (\ref {exact_LR}), we further that require the condition $\hat{\theta }_\varepsilon \neq 0$ be satisfied. 
Applying Levy's inversion formula, 
we represent $\bar{F}_\varepsilon (x)$ by the integral form 
\begin{eqnarray}\label{calc_tail}
\bar{F}_\varepsilon (x) 
\ = \ 
\frac{1}{2\pi i}\int ^{c + i\infty }_{c - i\infty } \exp (K_\varepsilon (\theta ) - x\theta )\frac{d\theta }{\theta } 
\label{Levy}
\end{eqnarray}
for arbitrary $c\in \mathcal {O}_\varepsilon \setminus \{0\}$ 
(see Proposition \ref {prop_inversion} in Subsection \ref {sec_proof_LR}). 

Next, we represent $\hat{w}_\varepsilon \in \Bbb {R}$ as 
\begin{eqnarray}\label{def_w_eps}
\hat{w}_\varepsilon  = \mathrm {sgn}(\hat{\theta }_\varepsilon )
\sqrt{2(x
\hat{\theta }_\varepsilon  - K_\varepsilon (\hat{\theta }_\varepsilon ))}, 
\end{eqnarray}
where $\mathrm {sgn }(a) = 1 \ (a \geq 0), \ -1 \ (a < 0)$. 
Note that $\hat{w}_\varepsilon $ is well defined because of the calculation 
\begin{eqnarray}\nonumber 
x\hat{\theta }_\varepsilon  - K_\varepsilon (\hat{\theta }_\varepsilon ) &=& 
K_\varepsilon (0) - K_\varepsilon (\hat{\theta }_\varepsilon ) + 
K'_\varepsilon (\hat{\theta }_\varepsilon )\hat{\theta }_\varepsilon \\
&=& 
\int ^1_0(1-u)K''_\varepsilon (-u\hat{\theta }_\varepsilon )du\hat{\theta }_\varepsilon ^2 \ \geq \ 0
\label{calc_K_diff}
\end{eqnarray}
by virtue of the convexity of $K_\varepsilon $ and Taylor's theorem. 
We consider the following change of variables between $w$ and $\theta $: 
\begin{eqnarray}\label{change_variable}
\frac{1}{2}w^2 - \hat{w}_\varepsilon w = K_\varepsilon (\theta ) - x\theta. 
\end{eqnarray}
Then, replacing the variable $\theta $ with $w$ in the right-hand side of (\ref {calc_tail}) 
and applying Cauchy's integral theorem, we see that 
\begin{eqnarray}\nonumber 
\bar{F}_\varepsilon (x) &=& 
\frac{1}{2\pi i}\int _{\gamma _\varepsilon }
\exp \left( \frac{1}{2}w^2 - \hat{w}_\varepsilon w\right) 
\frac{\theta '(w)}{\theta (w)}dw\\ 
&=& 
\frac{1}{2\pi i}\int ^{\hat{w}_\varepsilon + i\infty }_{\hat{w}_\varepsilon - i\infty }
\exp \left( \frac{1}{2}w^2 - \hat{w}_\varepsilon w\right) 
\frac{\theta '(w)}{\theta (w)}dw, 
\label{Cauchy_bar_F}
\end{eqnarray}
where 
$\gamma _\varepsilon $ is a Jordan curve in $w$-space 
corresponding to the line 
$\{ \hat{\theta }_\varepsilon \} \times i\Bbb {R}$ 
and $\theta (w) \ (= \theta _\varepsilon (w))$ is defined by (\ref {change_variable}) as an implicit function with respect to $w$. 
Note that $\theta (w)$ is well defined for each $w$ and is analytic on each contour 
under suitable conditions. 
Denoting 
\begin{eqnarray*}
\psi _\varepsilon (w) = \frac{\theta '(w)}{\theta (w)} - \frac{1}{w} = \frac{d}{dw}\log \left( \frac{\theta (w)}{w}\right), 
\end{eqnarray*}
we can decompose (\ref {Cauchy_bar_F}) into 
\begin{eqnarray*}
\bar{F}_\varepsilon (x) = N_\varepsilon (x) + 
\frac{1}{2\pi i}\int ^{\hat{w}_\varepsilon + i\infty }_{\hat{w}_\varepsilon - i\infty }
\exp \left( \frac{1}{2}w^2 - \hat{w}_\varepsilon w\right) \psi _\varepsilon (w)dw, 
\end{eqnarray*}
where 
\begin{eqnarray*}
N_\varepsilon (x) &=& 
\frac{1}{2\pi i}\int ^{\hat{w}_\varepsilon + i\infty }_{\hat{w}_\varepsilon - i\infty }
\exp \left( \frac{1}{2}w^2 - \hat{w}_\varepsilon w\right) \frac{dw}{w}. 
\end{eqnarray*}
$N_\varepsilon (x)$ is just the tail probability of the standard normal distribution; that is, 
$N_\varepsilon (x) = \bar{\Phi }(\hat{w}_\varepsilon )$, 
where 
\begin{eqnarray*}
\bar{\Phi }(w)\ = \ \int ^\infty _w\phi (y)dy, \ \ 
\phi (y)\ = \ \frac{1}{\sqrt{2\pi }}e^{-y^2/2}. 
\end{eqnarray*}
Here, if $\hat{w}_\varepsilon \neq 0$, we see that 
$\psi _\varepsilon $ is analytic on $\{ \hat{w}_\varepsilon \} \times  i\Bbb {R}$; 
hence, we obtain 
\begin{eqnarray}\nonumber 
&&
\frac{1}{2\pi i}\int ^{\hat{w}_\varepsilon + i\infty }_{\hat{w}_\varepsilon - i\infty }
\exp \left( \frac{1}{2}w^2 - \hat{w}_\varepsilon w\right) \psi _\varepsilon (w)dw\\\nonumber 
&=& 
\frac{1}{2\pi i}\int ^{\hat{w}_\varepsilon + i\infty }_{\hat{w}_\varepsilon - i\infty }
\exp \left( \frac{1}{2}w^2 - \hat{w}_\varepsilon w\right)
\sum ^\infty _{n = 0}\frac{\psi _\varepsilon ^{(n)}(\hat{w}_\varepsilon )}{n!}(w - \hat{w}_\varepsilon )^ndw\\\nonumber 
&=& 
\frac{1}{2\pi }e^{-\hat{w}_\varepsilon ^2/2}\int ^{\infty }_{-\infty }
e^{-y^2/2}
\sum ^\infty _{n = 0}\frac{\psi _\varepsilon ^{(n)}(\hat{w}_\varepsilon )}{n!}(iy)^ndy\\
&=& 
\frac{1}{2\pi }e^{-\hat{w}_\varepsilon ^2/2}\sum ^\infty _{n = 0}
i^n\psi _\varepsilon ^{(n)}(\hat{w}_\varepsilon )\int ^\infty _{-\infty }
e^{-y^2/2}
\frac{y^n}{n!}dy\ = \ 
\sum ^\infty _{m = 0}\Psi ^\varepsilon _m(\hat{w}_\varepsilon ), 
\label{integral_sum}
\end{eqnarray}
where we define
\begin{equation}\label{def_Psi_m}
\Psi_m^{\varepsilon}(w) 
= \phi (w)\frac{(-1)^m}{(2m)!!}\psi ^{(2m)}_{\varepsilon}(w) 
= \phi (w)\frac{(-1)^m}{(2m)(2m-2)\cdots 4\cdot 2}\psi ^{(2m)}_{\varepsilon}(w).
\end{equation}
This is the exact LR expansion (\ref {exact_LR}). 
Note here that the $0$th order approximation formula
\begin{eqnarray*}
\bar{\Phi }(\hat{w}_\varepsilon ) + \Psi _0^{\varepsilon}
(\hat{w}_\varepsilon )  
\end{eqnarray*}
corresponds to the original LR formula (\ref {original_LR}). 
Indeed, we see that
\begin{eqnarray*}
\Psi _0^{\varepsilon}(\hat{w}_\varepsilon ) 
= \phi (\hat{w}_\varepsilon )\left\{ \frac{1}{\hat{\theta }_\varepsilon \sqrt{K''_\varepsilon (\hat{\theta }_\varepsilon )}} - \frac{1}{\hat{w}_\varepsilon }\right\}.
\end{eqnarray*}
The $1$st order approximation formula 
\begin{eqnarray*}
\bar{\Phi }(\hat{w}_\varepsilon ) 
+ \Psi ^{\varepsilon}_0(\hat{w}_\varepsilon ) 
+ \Psi ^{\varepsilon}_1(\hat{w}_\varepsilon ) 
\end{eqnarray*}
is also often called the LR formula, where we have that
\begin{eqnarray*}
\Psi _1^{\varepsilon}(\hat{w}_\varepsilon ) = 
\phi (\hat{w}_\varepsilon )\left\{ \frac{1}{\hat{\theta }_\varepsilon 
\sqrt{K''_\varepsilon (\hat{\theta }_\varepsilon )}}
\left( \frac{1}{8}\hat{\lambda }_4 
- \frac{5}{24}\hat{\lambda }^2_3\right) 
- \frac{1}{2\hat{\theta }_\varepsilon ^2K''_\varepsilon 
(\hat{\theta }_\varepsilon )}\hat{\lambda }_3 - 
\left( \frac{1}{\hat{\theta }_\varepsilon ^3(
K''_\varepsilon (\hat{\theta }_\varepsilon ))^{3/2}} 
- \frac{1}{\hat{w}_\varepsilon ^3}\right) \right\}
\end{eqnarray*}
with
\begin{eqnarray*}
\hat{\lambda }_3 = \frac{K^{(3)}_\varepsilon (\hat{\theta }_\varepsilon )}{\left( K''_\varepsilon (\hat{\theta }_\varepsilon )\right) ^{3/2}}, \ \ 
\hat{\lambda }_4 = \frac{K^{(4)}_\varepsilon (\hat{\theta }_\varepsilon )}{\left( K''_\varepsilon (\hat{\theta }_\varepsilon )\right) ^2}. 
\end{eqnarray*}
The explicit forms of the higher order terms  
$\Psi _2^\varepsilon(\hat{w}_\varepsilon )$ 
and $\Psi _3^\varepsilon(\hat{w}_\varepsilon )$ 
are shown in Appendix. 

The above formal derivation of the exact LR expansion (\ref {exact_LR})
can be made rigorous under suitable conditions, such as the following. 
\begin{itemize}
 \item [ {[B1]} ] For each $\varepsilon \in [0, 1]$, 
there exists $\delta _\varepsilon , C_\varepsilon > 0$ such that 
$\delta _\varepsilon \leq |K''_\varepsilon | \leq C_\varepsilon $ on 
$\mathcal {O}_\varepsilon \times i\Bbb {R}$. 
 \item [ {[B2]} ] 
The range of the holomorphic map 
$\iota _\varepsilon : \mathcal {O}_\varepsilon \times i\Bbb {R} \longrightarrow \Bbb {C}$ defined by 
\begin{eqnarray*}
\iota _\varepsilon (\theta ) = 
K_\varepsilon (\theta ) - x\theta - 
(K_\varepsilon (\hat{\theta }_\varepsilon ) - x\hat{\theta }_\varepsilon ) 
\end{eqnarray*} 
includes a convex set that contains 
$\{ 2\iota _\varepsilon (\hat{\theta }_\varepsilon  + it)\ ; \ t\in \Bbb {R}\} $ and 
$(-\infty , 0]$, 
 \item [ {[B3]} ] $\sum ^\infty _{n=1}
|\psi _\varepsilon ^{(n)}(\hat{w}_\varepsilon )|/(n!!) < \infty $. 
\end{itemize}
Under these conditions, we obtain the following,  
whose proof is given in Subsection \ref {sec_proof_LR}.
\begin{theorem}\label{th_exact_LR} \ 
Assume $\mathrm {[A1]}$--$\mathrm {[A2]}$ and 
$\mathrm {[B1]}$--$\mathrm {[B3]}$. Then $(\ref {exact_LR})$ holds. 
\end{theorem}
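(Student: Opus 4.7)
The strategy is to make rigorous each step of the formal derivation above. First, I would apply Lévy's inversion formula (Proposition \ref{prop_inversion}) with $c = \hat\theta_\varepsilon$, which is legitimate under [A1]--[A2] together with $\hat\theta_\varepsilon \in \mathcal O_\varepsilon \setminus \{0\}$, to obtain (\ref{calc_tail}) as a genuinely convergent integral.

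The principal analytical step is to legitimise the change of variable (\ref{change_variable}). Completing the square rewrites it as $\frac{1}{2}(w - \hat w_\varepsilon)^2 = \iota_\varepsilon(\theta)$, so the correspondence $\theta \leftrightarrow w$ is a branch of the square root composed with $\iota_\varepsilon$. Since $\iota_\varepsilon'(\theta) = K_\varepsilon'(\theta) - x$ vanishes only at $\hat\theta_\varepsilon$ and $\iota_\varepsilon''(\hat\theta_\varepsilon) = K_\varepsilon''(\hat\theta_\varepsilon) > 0$, condition [B1] provides uniform upper and lower bounds on $|K_\varepsilon''|$ along $\hat\theta_\varepsilon + i\mathbb R$; this ensures that the implicit function $\theta(w)$ is single-valued and holomorphic on an open neighbourhood of the image Jordan curve $\gamma_\varepsilon$, with branch fixed by $\theta(\hat w_\varepsilon) = \hat\theta_\varepsilon$, and that $\gamma_\varepsilon$ extends to infinity in both directions. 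I would then invoke [B2] to construct a homotopy between $\gamma_\varepsilon$ and the straight contour $\{\hat w_\varepsilon\} + i\mathbb R$: the convex set sitting inside the range of $\iota_\varepsilon$ supplies a family of intermediate contours contained in the image, and the pole $w=0$ is not crossed, since $\theta(w)=0$ would force $\iota_\varepsilon = \tfrac{1}{2}\hat w_\varepsilon^2$, which lies outside the swept region. A standard Cauchy's theorem argument then delivers (\ref{Cauchy_bar_F}).

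The last step is to expand the integrand and exchange sum with integral. The contribution from the $1/w$ summand evaluates to $\bar\Phi(\hat w_\varepsilon)$ by the standard Bromwich formula, the sign convention matching because $\mathrm{sgn}(\hat w_\varepsilon) = \mathrm{sgn}(\hat\theta_\varepsilon)$. For the $\psi_\varepsilon$-part, analyticity of $\psi_\varepsilon$ near $\hat w_\varepsilon$ together with [B3] produces a convergent Taylor expansion that can be integrated termwise against $\exp(\tfrac{1}{2}w^2 - \hat w_\varepsilon w)$ on the vertical line, by dominated convergence against the Gaussian weight $e^{-y^2/2}$ and the summable bound from [B3]. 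Computing the Gaussian moments (odd orders vanish and $\int y^{2m}e^{-y^2/2}\,dy = \sqrt{2\pi}(2m-1)!!$, together with $(2m-1)!!/(2m)! = 1/(2m)!!$) matches each term to $\Psi_m^\varepsilon(\hat w_\varepsilon)$ defined in (\ref{def_Psi_m}), yielding (\ref{exact_LR}).

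The main obstacle will lie in the change-of-variable and contour-deformation step: one must show that $\theta(w)$ extends as a single-valued holomorphic function on an open region containing both $\gamma_\varepsilon$ and $\{\hat w_\varepsilon\} + i\mathbb R$, and that the homotopy between these contours avoids every branch cut and every spurious zero of $\theta$. This is a global rather than local property of $\iota_\varepsilon$, and it is precisely here that [B1] and [B2] must be used in tandem --- [B1] to preclude critical points of $\iota_\varepsilon$ along each intermediate contour and to control the growth of $\theta(w)$ at infinity, and [B2] to guarantee that the image geometry leaves enough room for a legitimate continuous deformation inside the univalent sheet of $\theta(\cdot)$.
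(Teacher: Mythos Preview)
Your proposal is correct and follows essentially the same architecture as the paper's proof: invoke Proposition~\ref{prop_inversion}, build the biholomorphic change of variable $\theta\leftrightarrow w$, deform the image contour $\gamma_\varepsilon$ to the vertical line $\{\hat w_\varepsilon\}+i\mathbb R$, split off the $1/w$ piece to get $\bar\Phi(\hat w_\varepsilon)$, and then use [B3] to justify termwise integration of the Taylor series of $\psi_\varepsilon$.

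Two small differences in execution are worth noting. First, the paper obtains non-vanishing of $w'(\theta)$ away from $\hat\theta_\varepsilon$ (equivalently, single-valuedness of $\theta(w)$) from [A2] via Corollary~\ref{cor_SP_unique}, not from [B1]; the bound $|K''_\varepsilon|\ge\delta_\varepsilon$ in [B1] does not by itself force $K'_\varepsilon(\theta)\neq x$ along $\hat\theta_\varepsilon+i\mathbb R$, since an integral of nonvanishing values can still vanish. In the paper, [B1] enters only through Lemma~\ref{lem_abs} and the estimate (\ref{temp_Lcl1}), i.e.\ to control the integrand on the closing segments so the box-contour Cauchy argument goes through. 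Second, rather than a continuous homotopy the paper carries out the deformation via rectangular closed contours $L^c_l$ whose real part stays in $(0,\infty)$, which immediately keeps the pole at $w=0$ outside the enclosed region; your argument that $\theta(w)=0$ forces $\iota_\varepsilon=\tfrac12\hat w_\varepsilon^2$ is correct, but you still need to check that this value (a strictly positive real number) is genuinely excluded by the convex set in [B2], and the box argument sidesteps that check. These are matters of attribution and bookkeeping rather than strategy---your overall plan matches the paper's.
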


\section{Order Estimates of Approximation Terms}\label{sec_order}

In practical applications, we need to truncate the formula (\ref {exact_LR}) 
with $M\in {\mathbb N}$
\begin{eqnarray}\label{LR_formula_sim}
\bar{F}_\varepsilon (x) \approx
\bar{F}^M_\varepsilon (x) := 
\bar{\Phi }(\hat{w}_\varepsilon ) 
+ \sum ^{M}_{m = 0}\Psi ^\varepsilon _m(\hat{w}_\varepsilon ). 
\end{eqnarray}
We call the right-hand side of (\ref {LR_formula_sim}) 
the $M$th LR formula. 
The aim of this section is to derive order estimates for 
$\Psi _m(\hat{w}_\varepsilon )$ ($m = 0, 1, \ldots , $) 
as $\varepsilon \rightarrow 0$. 

We fix $x\in \Bbb {R}$, which is an arbitrary value such that
\begin{equation}\label{cond_not_x}
\int _\Bbb {R}y\mu _0(dy) \neq x. 
\end{equation}
We then impose the following additional assumptions.
\begin{itemize}
 \item [ {[A3]} ] There is a $\delta _0 > 0$ such that $K''_\varepsilon (\theta ) \geq \delta _0$ for each 
$\theta \in \mathcal {O}_\varepsilon $ and $\varepsilon \in [0, 1]$. 
 \item [ {[A4]} ] For each $\varepsilon $, there is an interval 
$\mathcal {I}_\varepsilon \subset \mathcal {D}_\varepsilon $ such that 
$\mathcal {I}_\varepsilon \nearrow \Bbb {R}$ as $\varepsilon \rightarrow 0$; that is, 
$\mathcal {I}_\varepsilon \subset \mathcal {I}_{\varepsilon '}$ for each $\varepsilon \geq \varepsilon '$ and 
$\cup _{\varepsilon }\mathcal {I}_\varepsilon = \Bbb {R}$. 
 \item [ {[A5]} ] For each nonnegative integer $r$, 
$K^{(r)}_\varepsilon (\theta )$ 
converges uniformly to $K^{(r)}_0(\theta )$ with $\varepsilon \rightarrow 0$ 
on any compact subset of $\Bbb {R}$. 
Moreover, for each integer $r\geq 3$, 
$K^{(r)}_\varepsilon (\theta )$ has order $O(\varepsilon ^{r - 2 })$ as 
$\varepsilon \rightarrow 0$ in the following sense: 
For each compact set $C\subset \Bbb {R}$, it holds that 
\begin{eqnarray}\label{cond_order}
\limsup _{\varepsilon \rightarrow 0}\sup _{\theta \in C}\varepsilon ^{-(r - 2)}|K^{(r)}_\varepsilon (\theta )| < \infty. 
\end{eqnarray}
\end{itemize}

\begin{remark}\label{rem_A5} \ 
\begin{itemize}
 \item [ $\mathrm {(i)}$ ] To derive the formula $(\ref {exact_LR})$, 
we need that $\hat{\theta }_\varepsilon \neq 0$. 
This condition is satisfied for small $\varepsilon $ under $(\ref {cond_not_x})$ and $\mathrm {[A5]}$ jointly. 
See Corollary \ref {cor_not_zero} in Section \ref {sec_proof_estimate} for the details. 

 \item [ $\mathrm {(ii)}$ ] From $\mathrm {[A4]}$, 
we see that for each compact set $C\subset \Bbb {R}$ there is an $\varepsilon _0$ such that 
$C\subset \mathcal {D}_{\varepsilon }$ for $\varepsilon \leq \varepsilon _0$. 
Therefore, the assertions in $\mathrm {[A5]}$ make sense for small $\varepsilon $. 
Note that one of the sufficient conditions for $\mathrm {[A4]}$ is that
\begin{itemize}
 \item [ {[A4']} ] $\mathcal {D}_\varepsilon \nearrow \Bbb {R}$, \ $\varepsilon \rightarrow 0$. 
\end{itemize}

 \item [ $\mathrm {(iii)}$ ] $\mathrm {[A5]}$ implies that 
$K^{(r)}_0(\theta ) = 0$ holds for $r\geq 3$. Therefore, 
\begin{eqnarray*}
K_0(\theta ) = m\theta + \frac{1}{2}\sigma ^2\theta ^2
\end{eqnarray*}
with some $m\in {\mathbb R}$ 
and $\sigma > 0$, 
where the positivity of $\sigma$ follows from
([A2] or) [A3].
Hence, $\mu _0$ is the normal distribution 
with mean $m$ and variance $\sigma ^2$. 
Note here that the effective domain of $K_0$ 
is equal to $\Bbb {R}$, which is consistent with [A4]. 

 \item [ $\mathrm {(iv)}$ ] An example which satisfies
$\mathrm {[A5]}$ is the following.
Let $X_i$ for $i\in \Bbb {N}$ be i.i.d.\hspace{1mm}random variables 
with mean zero, let $\tilde{X}_n = (X_1 + \cdots X_n)/\sqrt{n}$,
and let $\mu _{1/\sqrt{n}}$ be its distribution.
We see that $(\mu _{1/\sqrt{n}})_{n}$ satisfies [A5]
by the central limit theorem 
(setting $\varepsilon:=1/\sqrt{n}$).
SV models with small ``vol of vol'' parameters 
are introduced as additional examples in Section \ref {sec_eg}.
\end{itemize}
\end{remark}

Now, we introduce our main theorem. 
\begin{theorem}\label{th_main} \ 
Assume that conditions $\mathrm {[A1]}$--$\mathrm {[A5]}$ hold. 
Then $\Psi ^\varepsilon _0(\hat{w}_\varepsilon ) = O(\varepsilon )$ 
and $\Psi ^\varepsilon _m(\hat{w}_\varepsilon ) = O(\varepsilon ^3)$, 
both as $\varepsilon\to 0$
for each $m\geq 1$. 
\end{theorem}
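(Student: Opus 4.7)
By Corollary~\ref{cor_not_zero}, under [A5] and~(\ref{cond_not_x}) we have $\hat\theta_\varepsilon$ well-defined and $\hat\theta_\varepsilon \to \hat\theta_0 = (x-m)/\sigma^2 \neq 0$; by the uniform convergence in [A5], also $K''_\varepsilon(\hat\theta_\varepsilon) \to \sigma^2 > 0$ and $\hat w_\varepsilon \to \sigma\hat\theta_0 \neq 0$, so each of these quantities is bounded and bounded away from~$0$ for small $\varepsilon$. From the defining identity $\frac{1}{2}v^2 = \sum_{k\geq 2}\frac{K^{(k)}_\varepsilon(\hat\theta_\varepsilon)}{k!}\tau^k$ (with $v = w - \hat w_\varepsilon$, $\tau = \theta_\varepsilon(w) - \hat\theta_\varepsilon$), extracting the square root and applying Lagrange inversion yields a local power series $\theta_\varepsilon(w) = \hat\theta_\varepsilon + \sum_{k\geq 1}\alpha_k v^k$ with $\alpha_1 = 1/\sqrt{K''_\varepsilon(\hat\theta_\varepsilon)}$ and $\alpha_k = O(\varepsilon^{k-1})$ for $k\geq 2$, the rate coming from $K^{(r)}_\varepsilon(\hat\theta_\varepsilon) = O(\varepsilon^{r-2})$ ($r \geq 3$) in [A5].

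\textbf{The case $m=0$.} Taylor-expanding $K_\varepsilon$ about $\hat\theta_\varepsilon$ and using $K_\varepsilon(0) = 0$ gives $\hat w_\varepsilon^2 = 2[x\hat\theta_\varepsilon - K_\varepsilon(\hat\theta_\varepsilon)] = \hat\theta_\varepsilon^2 K''_\varepsilon(\hat\theta_\varepsilon)(1 + O(\varepsilon))$, whence the explicit formula $\Psi^\varepsilon_0(\hat w_\varepsilon) = \phi(\hat w_\varepsilon)\bigl[(\hat\theta_\varepsilon\sqrt{K''_\varepsilon(\hat\theta_\varepsilon)})^{-1} - \hat w_\varepsilon^{-1}\bigr]$ immediately yields $\Psi^\varepsilon_0(\hat w_\varepsilon) = O(\varepsilon)$.

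\textbf{The case $m\geq 1$.} Writing $\psi_\varepsilon(w) = (d/dw)\log(\theta_\varepsilon(w)/w)$ and expanding $\theta_\varepsilon(w)/w = \sum_{n\geq 0}c_n v^n$, I would use the identity $\theta_\varepsilon(0) = 0$, which in terms of the Taylor series reads $\sum_{k\geq 0}\alpha_k(-\hat w_\varepsilon)^k = 0$ (with $\alpha_0 := \hat\theta_\varepsilon$), to rewrite $c_n$ for $n\geq 1$ as $c_n = \sum_{j\geq 1}(-1)^{j+1}\alpha_{n+j}\hat w_\varepsilon^{j-1}$. Combined with $\alpha_k = O(\varepsilon^{k-1})$, this gives $c_n = O(\varepsilon^n)$. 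The standard log-series then shows the coefficient of $v^n$ in $\log(\theta_\varepsilon(w)/w)$ is also $O(\varepsilon^n)$, and differentiating gives $\beta_n := [v^n]\psi_\varepsilon(w) = O(\varepsilon^{n+1})$. Hence $\psi^{(2m)}_\varepsilon(\hat w_\varepsilon) = (2m)!\,\beta_{2m} = O(\varepsilon^{2m+1})$, which is $O(\varepsilon^3)$ for $m \geq 1$, so $\Psi^\varepsilon_m(\hat w_\varepsilon) = O(\varepsilon^3)$.

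\textbf{Main obstacle.} The critical analytic step is the translation of $\theta_\varepsilon(0) = 0$ into the series identity $\sum_k\alpha_k(-\hat w_\varepsilon)^k = 0$; this requires the Taylor series of $\theta_\varepsilon$ around $\hat w_\varepsilon$ to have radius of convergence at least~$\hat w_\varepsilon$, which is a nontrivial complex-analytic fact about the implicit function $\theta_\varepsilon$. Under [A1]--[A5] alone this may need to be established via an analysis of the singularity structure of $K_\varepsilon$ in the complex plane (possibly invoking hypotheses of [B]-type from Theorem~\ref{th_exact_LR}). As a fallback, one can verify the cancellations giving $\beta_{2m} = O(\varepsilon^3)$ directly and order by order in $\varepsilon$, which is explicit and routine for small $m$ using the closed-form expressions of $\Psi^\varepsilon_1, \Psi^\varepsilon_2, \Psi^\varepsilon_3$ given in the Appendix.
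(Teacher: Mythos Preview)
Your argument has a genuine gap, and you have correctly located it yourself. The rewriting
\[
c_n \;=\; \sum_{j\ge 1}(-1)^{j+1}\alpha_{n+j}\,\hat w_\varepsilon^{\,j-1}
\]
is an \emph{infinite} series identity, obtained by substituting $w=0$ into the Taylor expansion of $\theta_\varepsilon$ about $\hat w_\varepsilon$. For this to be a legitimate equality of numbers one needs the radius of convergence of that Taylor series to be at least $\hat w_\varepsilon$; and even granting convergence, concluding $c_n=O(\varepsilon^n)$ from $\alpha_{n+j}=O(\varepsilon^{n+j-1})$ termwise requires uniform control of the constants in those $O$-bounds. Neither is delivered by [A1]--[A5]. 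In fact your argument, if it went through, would yield the sharper estimate $\Psi^\varepsilon_m(\hat w_\varepsilon)=O(\varepsilon^{2m+1})$ for every $m$: this is exactly the content of Theorem~\ref{th_main2} in Section~\ref{sec_error_estimate}, which the paper proves only under the \emph{additional} hypothesis [A6] (a summability condition on the constants $C_k$ in $|\hat\theta^{(k)}_\varepsilon|\le C_k\varepsilon^{k-1}$). Your fallback of checking small $m$ by hand does not prove the theorem for all $m\ge 1$.

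The paper's proof of Theorem~\ref{th_main} avoids infinite series entirely and works purely with finitely many derivatives evaluated at $\hat w_\varepsilon$. Writing $g(w)=\theta(w)/w$ and $h=\log g$ so that $\psi^{(2m)}_\varepsilon(\hat w_\varepsilon)=h^{(2m+1)}(\hat w_\varepsilon)$, the paper first shows by explicit l'H\^opital computations and finite Taylor expansions of $K_\varepsilon$ that $g'(\hat w_\varepsilon)=O(\varepsilon)$, $g''(\hat w_\varepsilon)=O(\varepsilon^2)$, and $g'''(\hat w_\varepsilon)=O(\varepsilon^3)$ (Propositions~\ref{prop_g_diff_1}, \ref{prop_g2}, \ref{prop_g3}). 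Separately it proves $\hat\theta^{(n)}_\varepsilon=O(\varepsilon^{n-1})$ for all $n\ge 2$ by induction on the structure of the differential relations for $\theta^{(n)}$ (Proposition~\ref{prop_theta_n}). The elementary recursion $\hat w_\varepsilon\, g^{(n)}(\hat w_\varepsilon)=\hat\theta^{(n)}_\varepsilon - n\,g^{(n-1)}(\hat w_\varepsilon)$ (Lemma~\ref{lem_g_diff}) then propagates the bound $g^{(n)}(\hat w_\varepsilon)=O(\varepsilon^3)$ to all $n\ge 3$ by induction. Finally, since $h^{(n)}$ is a polynomial in $g^{(0)},\dots,g^{(n)}$ and $1/g$ with total derivative weight $n$ (Lemma~\ref{lem_h_diff}), one reads off $h^{(2m+1)}(\hat w_\varepsilon)=O(\varepsilon^3)$ for every $m\ge 1$. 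The key point is that this argument never leaves the realm of finite-order derivative identities at a single point, so no convergence hypothesis beyond [A1]--[A5] is needed; the price is that it yields only $O(\varepsilon^3)$ rather than your $O(\varepsilon^{2m+1})$.
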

Recall here that the notation 
$a_\varepsilon = O(\varepsilon ^r)$ implies 
$\limsup _{\varepsilon \rightarrow 0}
\varepsilon ^{-r}|a_\varepsilon | < \infty$. 
\begin{remark}\label{rem_order} \ 
It may be natural to expect that
$\Psi _m^\varepsilon(\hat{w}_\varepsilon ) = O(\varepsilon ^{k_m})$ holds 
as $\varepsilon \rightarrow 0$ for some $k_m > 3$. 
In other words, to expect that the relation 
$\Psi _m^\varepsilon(\hat{w}_\varepsilon ) = \Theta (\varepsilon ^3)$ 
may not hold for $m\geq 2$. 
Here, $a_n = \Theta (b_n)$ 
is the Bachmann--Landau ``Big-Theta'' notation, 
meaning that
\begin{eqnarray*}
0 < \liminf _n \frac{a_n}{b_n} \leq \limsup _n \frac{a_n}{b_n} < \infty. 
\end{eqnarray*}
Under conditions [A1]--[A5], 
we have not obtained sharper estimates 
for $\Psi _m^\varepsilon(\hat{w}_\varepsilon )$  ($m\geq 2$) 
than given in Theorem \ref {th_main}.
In Section \ref {sec_error_estimate} we show that by assuming [A6]--[A7] additionally we obtain 
\begin{eqnarray}\label{higher_order_estimate}
\Psi _m^\varepsilon(\hat{w}_\varepsilon ) = O(\varepsilon ^{2m + 1})
\quad\text{as}\quad 
\varepsilon \rightarrow 0 
\end{eqnarray}
for each $m\geq 0$, and 
\begin{eqnarray}\label{error_estimate}
\bar{F}_\varepsilon (x) = \bar {\Phi }(\hat{w}_\varepsilon ) + 
\sum ^M_{m  = 0}\Psi _m^\varepsilon(\hat{w}_\varepsilon ) 
+ O(\varepsilon ^{2M + 3}) 
\quad\text{as}\quad 
\varepsilon \rightarrow 0 
\end{eqnarray}
for each $M\geq 0$. 
In the next section, 
we also numerically demonstrate these results by use of examples.
\end{remark}

\section{Examples}\label{sec_eg}

In this section, 
we introduce some examples and apply our results.

\subsection{The Heston SV model}\label{sec_Heston}

As the first example, 
we treat Heston's SV model (Heston (1993)). 
We consider the following stochastic differential equation (SDE): 
\begin{eqnarray*}
&&dX^\varepsilon _t = -\frac{1}{2}V^\varepsilon _tdt + \sqrt{V^\varepsilon _t}dB^1_t, \\
&&dV^\varepsilon _t = \kappa (b - V^\varepsilon _t)dt + \varepsilon \sqrt{V^\varepsilon _t}(\rho dB^1_t + \sqrt{1-\rho ^2}dB^2_t), \\
&&X^\varepsilon _0 = x_0, \ V^\varepsilon _0 = v_0, 
\end{eqnarray*}
where $\kappa , b > 0$, $\rho \in [-1, 1]$, and $\varepsilon \geq 0$. 
It is known that the above SDE has the unique solution $(X^\varepsilon _t, V^\varepsilon _t)_t$ 
when $2\kappa b\geq \varepsilon ^2$. 
The process $(X^\varepsilon _t)_t$ is regarded as the log-price process of a risky asset 
with the stocastic volatility process $(\sqrt{V^\varepsilon _t})_t$ 
under the risk-neutral probability measure (the risk-free rate is set as zero for simplicity). 
Our goal is to approximate the tail probability $\bar{F}_\varepsilon (x) = P(X^\varepsilon _T > x)$ 
for a time $T > 0$. 

Here $\varepsilon \geq 0$ is the ``vol of vol'' parameter, 
which describes dispersion of the volatility process. 
In this section, we consider the case of a small $\varepsilon $. 
Note that when $\varepsilon  = 0$, $X^\varepsilon _T$ has the normal distribution. 

To apply our main result, 
we verify that the conditions [A1]--[A5] for $\mu _\varepsilon = P(X^\varepsilon _T\in \cdot )$ hold. 
First, [A1] is satisfied and 
the explicit form of the CGF of $\mu _\varepsilon $ with $\varepsilon > 0$ is given as 
\begin{eqnarray}\label{CGF_Heston}
K_\varepsilon (\theta ) = 
x_0\theta + \frac{2\kappa b}{\varepsilon ^2}\left\{ 
\frac{1}{2}(\kappa - \varepsilon \rho \theta ) - \log q_\varepsilon (\theta )\right\} - 
\frac{v_0(\theta - \theta ^2)\sinh (\sqrt{p_\varepsilon (\theta )}T/2)}
{\sqrt{p_\varepsilon (\theta )}q_\varepsilon (\theta )} 
\end{eqnarray}
on a neighbourhood of the origin, where 
\begin{eqnarray*}
p_\varepsilon (\theta ) &=& (\kappa - \varepsilon \rho \theta )^2 + \varepsilon ^2(\theta - \theta ^2), \\
q_\varepsilon (\theta ) &=& \cosh \frac{\sqrt{p_\varepsilon (\theta )}T}{2} + 
\frac{\kappa - \varepsilon \rho \theta }{\sqrt{p_\varepsilon (\theta )}}
\sinh \frac{\sqrt{p_\varepsilon (\theta )}T}{2} 
\end{eqnarray*} 
(see Rollin, Castilla, and Utzet (2010) or Yoshikawa (2013)). 
Note that when $\varepsilon = 0$, we have 
\begin{eqnarray*}
K_0(\theta ) = \frac{1}{2}\sigma ^2(\theta ^2 - \theta ) + x_0\theta, 
\end{eqnarray*}
where $\sigma ^2 = bT + (v_0 - b)(1-e^{-\kappa T})/\kappa $. 
Moreover, Theorem 3.3 and Corollary 3.4 in Rollin, Castilla, and Utzet (2010) imply [A2]. 
The same source also tells us that 
when $\varepsilon \rho  < \kappa $, it is also true that 
$\mathcal {D}_\varepsilon \supset \mathcal {I}_\varepsilon  := [u_{\varepsilon , -}, u_{\varepsilon , +}]$, where 
$u_{\varepsilon , -} < 0 < u_{\varepsilon , +}$ are given by 
\begin{eqnarray*}
u_{\varepsilon , \pm} = \frac{\varepsilon - 2\kappa \rho \pm  
\sqrt{4\kappa ^2 + \varepsilon ^2 - 4\kappa \rho \varepsilon }}
{2\varepsilon (1-\rho ^2)}. 
\end{eqnarray*}
We can easily see that $u_{\varepsilon , +}\nearrow \infty $ and $u_{\varepsilon , -}\searrow -\infty $ as $\varepsilon \downarrow 0$; 
thus, [A4] is satisfied. 
[A5] is obtained by Theorem 3.1 in Yoshikawa (2013). 
Finally, we numerically compute the minimum value of $K''_\varepsilon $ for each $\varepsilon $ to confirm [A3]. 
We set the parameters as 
$\kappa = 1$, $b = 1$, $x_0 = 0$, $v_0 = 1$, $\rho = 0.3$, $T = 1$, and $x = 1$. 
Then we get Figure \ref {graph_min_KD2}, which implies that [A3] holds.

\begin{figure}[t]
\begin{center}
\includegraphics[width=10cm]{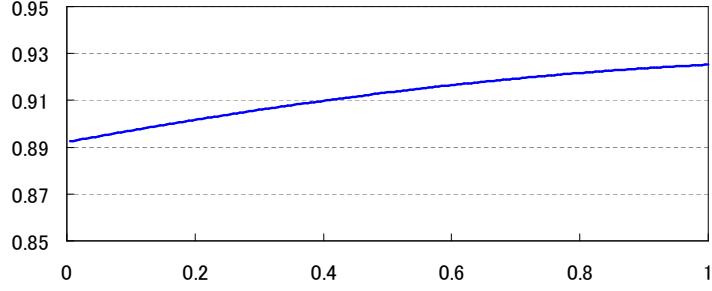}
\caption{Plots of $\inf _{\theta \in \mathcal {D}_\varepsilon }K''_\varepsilon (\theta )$. 
The horizontal axis corresponds to $\varepsilon $. }
\label{graph_min_KD2}
\end{center}
\end{figure}

\begin{remark}
Theorem 3.1 in Rollin, Castilla, and Utzet (2010) presents a method to calculate 
the lower bound $\theta ^*_{\varepsilon , -}$ and the upper bound $\theta ^*_{\varepsilon , +}$ of 
the effective domain $\mathcal {D}_\varepsilon $. 
When we set the parameters as above, 
the bounds $\theta ^*_{\varepsilon , \pm }$ are obtained by 
\begin{eqnarray*}
\theta ^*_{\varepsilon , +} &=& 
\mathrm {argmin}\{ \tilde{q}_\varepsilon (\theta )\ ; \ \theta \in (u_{\varepsilon , +}, \alpha _{\varepsilon , +1}) \} , \\
\theta ^*_{\varepsilon , +} &=& 
\mathrm {argmax}\{ \tilde{q}_\varepsilon (\theta )\ ; \ 
\theta \in (\alpha _{\varepsilon , -1}, u_{\varepsilon , -}) \} , 
\end{eqnarray*}
where 
\begin{eqnarray*}
\tilde{q}_\varepsilon (\theta ) = 
\cos \frac{\sqrt{-p_\varepsilon (\theta )}T}{2} + 
\frac{\kappa - \varepsilon \rho \theta }{\sqrt{-p_\varepsilon (\theta )}}
\sin \frac{\sqrt{-p_\varepsilon (\theta )}T}{2} 
\end{eqnarray*}
and $\alpha _{\varepsilon , -1} < 0 < \alpha _{\varepsilon , +1}$ are the solutions to 
$p_\varepsilon (\theta ) = -4\pi ^2/T^2$. 
Note that $K_\varepsilon (\theta )$ is given by (\ref {CGF_Heston}) 
on $[u_{\varepsilon , -}, u_{\varepsilon , +}]$ and by 
\begin{eqnarray*}
K_\varepsilon (\theta ) = 
x_0\theta + \frac{2\kappa b}{\varepsilon ^2}\left\{ 
\frac{1}{2}(\kappa - \varepsilon \rho \theta ) - \log \tilde{q}_\varepsilon (\theta )\right\} - 
\frac{v_0(\theta - \theta ^2)\sin (\sqrt{-p_\varepsilon (\theta )}T/2)}
{\sqrt{-p_\varepsilon (\theta )}\tilde{q}_\varepsilon (\theta )} 
\end{eqnarray*}
on $\mathcal {D}_\varepsilon \setminus [u_{\varepsilon , -}, u_{\varepsilon , +}]$. 
In Figure \ref {graph_effective}, 
we numerically calculate $u_{\varepsilon , \pm }$ and $\theta ^*_{\varepsilon , \pm }$ 
for $\varepsilon \in (0, 1]$. 
This suggests the modified condition [A4']. 
\end{remark}

\begin{figure}[t]
\begin{center}
\includegraphics[width=10cm]{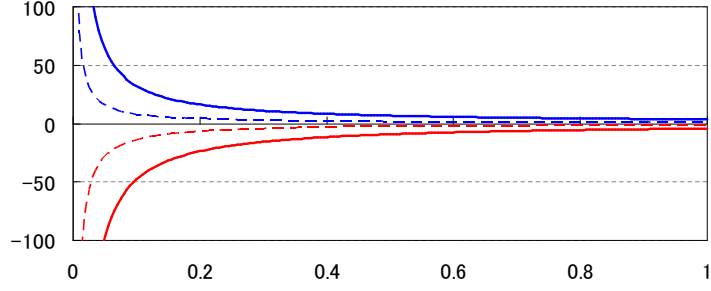}
\caption{Plots of $\theta ^*_{\varepsilon , \pm }$ (solid lines) and 
$u_{\varepsilon , \pm }$ (dashed lines). 
Note that $\theta ^*_{\varepsilon , +} > \theta ^*_{\varepsilon , -}$ and 
$u_{\varepsilon , +} > u_{\varepsilon , -}$. 
The horizontal axis corresponds to $\varepsilon $. }
\label{graph_effective}
\end{center}
\end{figure}

Now we verify the orders of approximate terms 
$\Psi ^\varepsilon _m(\hat{w}_\varepsilon )$ with $m = 0, 1, 2$. 
Figure \ref {graph_LL} 
represents the log-log plot of the approximations for small $\varepsilon $. 
In this figure, we can find the linear relationships between 
$\log |\Psi ^\varepsilon _m(\hat{w}_\varepsilon )|$ and $\log \varepsilon $. 
We estimate their relationship by linear regression and get 
\begin{eqnarray*}
\log |\Psi ^\varepsilon _0(\hat{w}_\varepsilon )| &=& 1.1365\log \varepsilon - 4.5611, \ \ R^2 = 0.9996, \\
\log |\Psi ^\varepsilon _1(\hat{w}_\varepsilon )| &=& 3.3152\log \varepsilon - 7.8203, \ \ R^2 = 0.9999, \\
\log |\Psi ^\varepsilon _2(\hat{w}_\varepsilon )| &=& 4.9068\log \varepsilon - 10.928, \ \ R^2 = 0.9999. 
\end{eqnarray*}
Then we can numerically confirm that 
$\Psi ^\varepsilon _m(\hat{w}_\varepsilon ) = O(\varepsilon ^{2m+1})$ as 
$\varepsilon \rightarrow 0$ for $m = 0, 1, 2$, 
which is consistent with Theorem \ref {th_main} and (\ref {higher_order_estimate}) 
(see also Theorem \ref {th_main2} in Section \ref {sec_error_estimate}). 

\begin{figure}[t]
\begin{center}
\includegraphics[width=10cm]{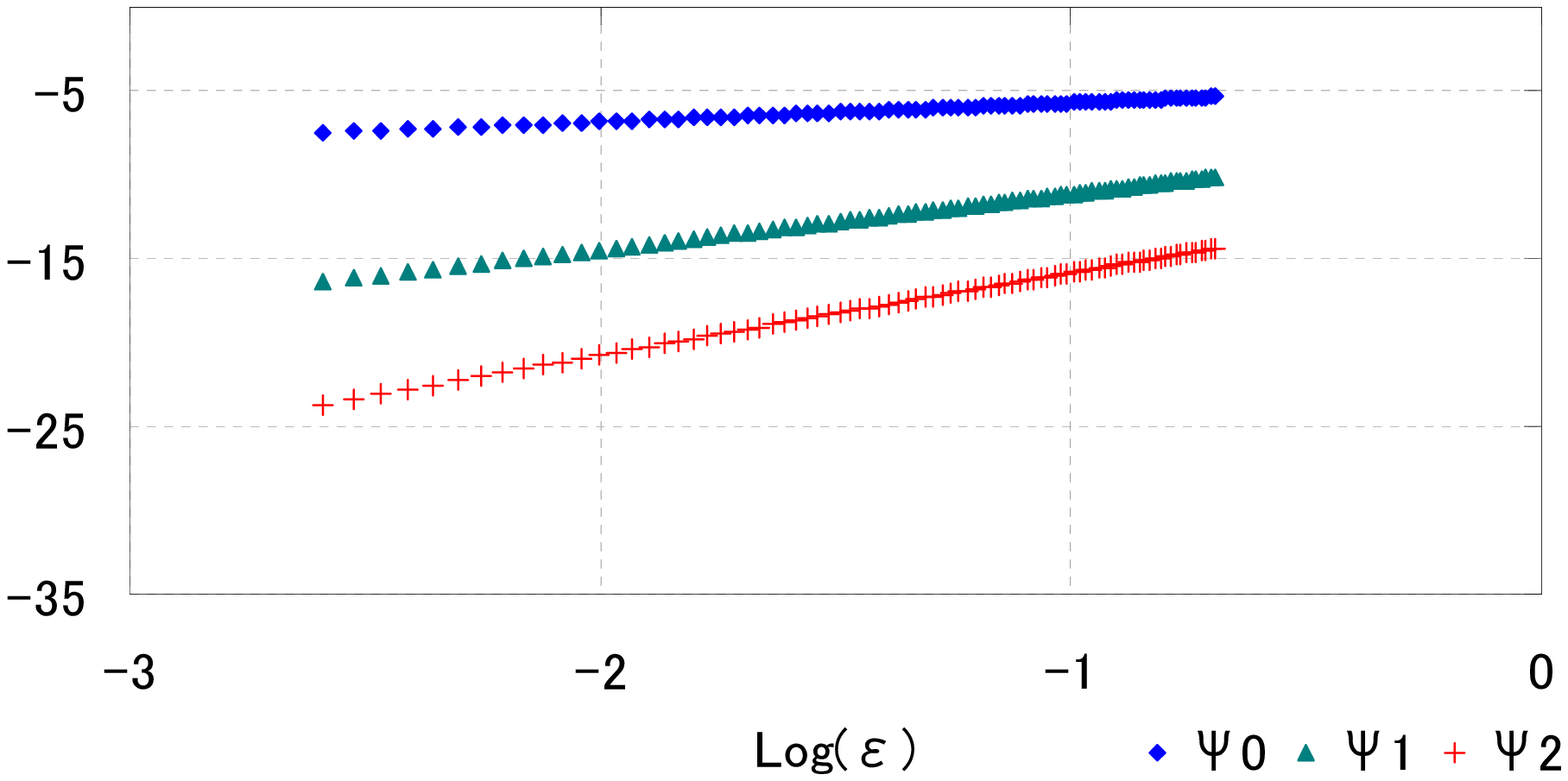}
\caption{Log-log plot of $|\Psi ^\varepsilon _m(\hat{w}_\varepsilon )|$ with $m = 0, 1, 2$ in the Heston SV model. 
The horizontal axis means $\log \varepsilon $. 
The vertical axis means $\log |\Psi ^\varepsilon _m(\hat{w}_\varepsilon )|$. }
\label{graph_LL}
\end{center}
\end{figure}

Next, we calculate the relative errors of the LR formula. 
We let 
\begin{eqnarray*}
\mbox{Normal formula} &=& \bar{\Phi }(\hat{w}_\varepsilon ), \\
0\mbox{th formula} &=& \bar{\Phi }(\hat{w}_\varepsilon ) + \Psi ^\varepsilon _0(\hat{w}_\varepsilon ), \\
1\mbox{st formula} &=& \bar{\Phi }(\hat{w}_\varepsilon ) + \Psi ^\varepsilon _0(\hat{w}_\varepsilon ) + \Psi ^\varepsilon _1(\hat{w}_\varepsilon ),\\
2\mbox{nd formula} &=& \bar{\Phi }(\hat{w}_\varepsilon ) + \Psi ^\varepsilon _0(\hat{w}_\varepsilon ) + \Psi ^\varepsilon _1(\hat{w}_\varepsilon ) + \Psi ^\varepsilon _2(\hat{w}_\varepsilon ). 
\end{eqnarray*}
We define the relative error for approximated value $\hat{P}$ of $P(X^\varepsilon _T > x)$: 
\begin{eqnarray}\label{def_RE}
\mathrm {RE} = \left | \frac{\hat{P}}{P(X^\varepsilon _T > x)}-1 \right |. 
\end{eqnarray}
To find the true value of $P(X^\varepsilon _T > x)$ (`True' in Table \ref {table_RE_Heston}), 
we directly calculate the integral (\ref {calc_tail}) with $c = \hat{\theta }_\varepsilon $. 

\begin{table}[hbtp]
\begin{center}
\scalebox{0.8}[0.8]{
\begin{tabular}{c|ccccc|rrrr} \hline 
 \multirow{2}{*}{$\varepsilon $}	& \multicolumn{5}{c|}{$P(X^\varepsilon _T > x)$}		& \multicolumn{4}{c}{RE} 	\\ \cline {2-10}
 	& True	& Normal	& 0th	& 1st	& 2nd	& \multicolumn{1}{c}{Normal}	& \multicolumn{1}{c}{0th}	& \multicolumn{1}{c}{1st}	& \multicolumn{1}{c}{2nd}	\\ \hline
 0.2	& 0.06622 	& 0.06788 	& 0.06622 	& 0.06622 	& 0.06622 	& 2.51E-02	& 2.84E-05	& 3.12E-07	& 3.18E-09	\\
 0.4	& 0.06521 	& 0.06894 	& 0.06523 	& 0.06521 	& 0.06521 	& 5.71E-02	& 2.88E-04	& 9.57E-06	& 4.04E-07	\\
 0.6	& 0.06385 	& 0.06996 	& 0.06392 	& 0.06385 	& 0.06385 	& 9.56E-02	& 1.11E-03	& 6.76E-05	& 6.28E-06	\\
 0.8	& 0.06219 	& 0.07093 	& 0.06237 	& 0.06217 	& 0.06219 	& 1.41E-01	& 2.82E-03	& 2.60E-04	& 4.11E-05\\
 1	& 0.06029 	& 0.07184 	& 0.06063 	& 0.06025 	& 0.06028 	& 1.92E-01	& 5.69E-03	& 7.22E-04	& 1.40E-04\\\hline 
\end{tabular}
}
\end{center}
\caption{Approximated values of $P(X^\varepsilon _T > x)$ and relative errors with $\varepsilon = 0.2, 0.4, 0.6, 0.8, 1$. 
}
\label{table_RE_Heston}
\end{table}

\begin{figure}[hbtp]
\begin{center}
\includegraphics[width=12cm]{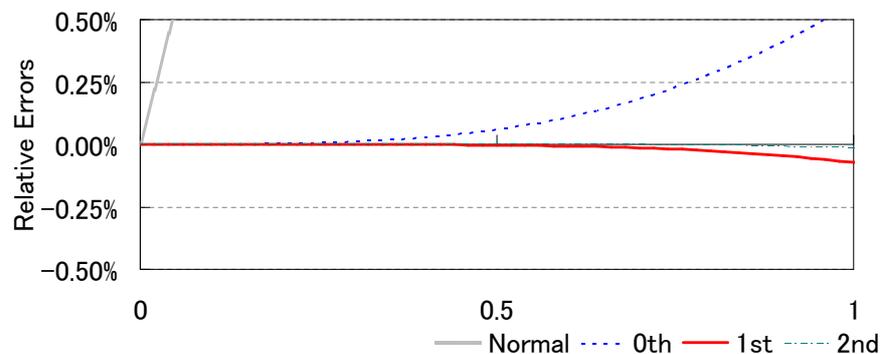}
\caption{Relative errors of $P(X^\varepsilon _T > x)$. 
The horizontal axis means $\varepsilon $. 
}
\label{graph_RE}
\end{center}
\end{figure}

The results are shown in Table \ref {table_RE_Heston} and Figure \ref {graph_RE}. 
We see that the relative errors decrease when $\varepsilon $ becomes small. 
Moreover, we can verify that the higher order LR formula gives a more accurate approximation. 
In particular, the accuracies of the `$1$st' and `$2$nd' formulae are quite high, even when $\varepsilon $ is not small. 

\begin{figure}[t]
\begin{center}
\includegraphics[width=10cm]{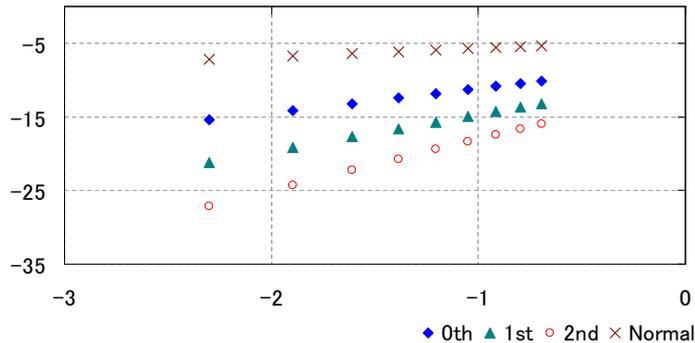}
\caption{Log-log plot of the absolute errors of $P(X^\varepsilon _T > x)$. 
The horizontal axis means $\log \varepsilon $. 
The vertical axis means $\log \mathrm {AE}$. 
}
\label{graph_LL_RE_Heston}
\end{center}
\end{figure}

Figure \ref {graph_LL_RE_Heston} shows the log-log plot of the absolute errors, defined by 
\begin{eqnarray}\label{def_AE}
\mathrm {AE} = \left| \hat{P} - P(X^\varepsilon _T > x) \right|. 
\end{eqnarray}
We see that there are linear relationships between $\log \varepsilon $ and the $\log \mathrm {AE}$ functions: 
by linear regression, we have 
\begin{eqnarray*}
\log \mathrm {AE}_{\mathrm{Normal}} &=& 1.1460\log \varepsilon - 4.5447, \ \ R^2 = 0.9996, \\
\log \mathrm {AE}_{0\mathrm{th}}    &=& 3.2951\log \varepsilon - 7.8692, \ \ R^2 = 0.9999, \\
\log \mathrm {AE}_{1\mathrm{st}}    &=& 4.9353\log \varepsilon - 9.7660, \ \ R^2 = 0.9999, \\
\log \mathrm {AE}_{2\mathrm{nd}}    &=& 6.9894\log \varepsilon - 11.050, \ \ R^2 = 0.9999. 
\end{eqnarray*}
These imply that the error of the $m$th LR formula has order $O(\varepsilon ^{2m + 3})$ as $\varepsilon \rightarrow 0$, 
which is consistent with (\ref {error_estimate}) and Theorem \ref {th_main3} in Section \ref {sec_error_estimate}. 

At the end of this section, we consider the application to option pricing. 
We calculate the European call option price 
\begin{eqnarray}\label{call_eq}
\mathrm {Call}^\varepsilon  = \mathrm {E} [\max \left\{  \exp \left( X^\varepsilon _T\right) - L, 0\right\}  ]
\end{eqnarray}
under the risk-neutral probability measure $P$, where $L > 0$ is the strike price. 

The explicit form of $\mathrm {Call}^\varepsilon $ was obtained by Heston (1993), 
so we can calculate the exact value, up to the truncation error associated with numerical integration. 
Applying the LR formula to (\ref {call_eq}) was proposed by 
Rogers and Zane (1999). 
Here, we briefly review the procedure to do so. 
First, we rewrite (\ref {call_eq}) as 
\begin{eqnarray*}
\mathrm {Call}^\varepsilon  = 
\mathrm {E} [\exp \left( X^\varepsilon _T\right)\ ; \ X^\varepsilon _T > l] - 
LP(X^\varepsilon _T > l), 
\end{eqnarray*}
where $l = \log L$. 
For the second term in the right-hand side of the above equality, 
we can directly apply the LR formula. 
To evaluate the first term, we define a new probability measure $Q$ 
(called the share measure) by the following Radon--Nikodym density 
\begin{eqnarray*}
\frac{dQ}{dP} = \frac{\exp \left(X^\varepsilon _T\right)}{\mathrm {E} [\exp \left( X^\varepsilon _T\right)]} = 
\exp \left( -K_\varepsilon (1)\right) \exp \left( X^\varepsilon _T\right). 
\end{eqnarray*}
From this we obtain 
\begin{eqnarray*}
\mathrm {E} [\exp \left( X^\varepsilon _T\right)\ ; \ X^\varepsilon _T > l] = 
\exp \left( K_\varepsilon (1)\right)Q(X^\varepsilon _T > l). 
\end{eqnarray*}
Now, we can easily find the CGF $\tilde{K}_\varepsilon (\theta )$ of the distribution $Q(X^\varepsilon _T\in \cdot )$: 
\begin{eqnarray*}
\tilde{K}_\varepsilon (\theta ) = K_\varepsilon (\theta + 1) - K_\varepsilon (1). 
\end{eqnarray*}
Obviously, $\tilde{K}_\varepsilon (\theta )$ satisfies our assumptions [A1]--[A5]. 
Therefore, we can apply the LR formula to $Q(X^\varepsilon _T > l)$. 

Now we set the initial price $e^{x_0}$ of the underlying asset as $100$ and 
the strike price $L$ as 105. 
For the model parameters, we set 
$\kappa = 6$, $b = 0.3^2$, $\rho = 0.3$, and $v_0 = 0.2^2$. 
We denote by $\mathrm {Call}^\varepsilon _{\mathrm {Normal}}$, 
$\mathrm {Call}^\varepsilon _{0\mathrm {th}}$, 
$\mathrm {Call}^\varepsilon _{1\mathrm {st}}$ and 
$\mathrm {Call}^\varepsilon _{2\mathrm {nd}}$ the approximations of 
$\mathrm {Call}^\varepsilon $ using the LR formulae `Normal,' `$0$th,' `$1$st' and `$2$nd', 
respectively. 
RE and AE are the same as in (\ref {def_RE}) and (\ref {def_AE}), respectively, 
with tail probabilities as option prices.

\begin{table}[hbtp]
\begin{center}
\scalebox{0.8}[0.8]{
\begin{tabular}{c|ccccc|rrrr} \hline 
 \multirow{2}{*}{$\varepsilon $}	& \multicolumn{5}{c|}{Call Option Price}		& \multicolumn{4}{c}{RE} 	\\ \cline {2-10}
 	& \multicolumn{1}{c}{True}	& \multicolumn{1}{c}{Normal}	& \multicolumn{1}{c}{0th}	& \multicolumn{1}{c}{1st}	& \multicolumn{1}{c|}{2nd}	& \multicolumn{1}{c}{Normal}	& \multicolumn{1}{c}{0th}	& \multicolumn{1}{c}{1st}	& \multicolumn{1}{c}{2nd}	\\ \hline
 0.2	& 9.352 	& 9.367 	& 9.352 	& 9.352 	& 9.352 	& 1.62E-03	& 8.93E-06	& 5.95E-08	& 7.04E-10	\\
 0.4	& 9.358 	& 9.419 	& 9.357 	& 9.358 	& 9.358 	& 6.46E-03	& 1.41E-04	& 3.78E-06	& 1.60E-07	\\
 0.6	& 9.337 	& 9.471 	& 9.330 	& 9.337 	& 9.337 	& 1.43E-02	& 7.00E-04	& 4.29E-05	& 3.43E-06	\\
 0.8	& 9.291 	& 9.523 	& 9.271 	& 9.293 	& 9.291 	& 2.50E-02	& 2.14E-03	& 2.38E-04	& 2.63E-05	\\
 1	& 9.223 	& 9.576 	& 9.177 	& 9.231 	& 9.224 	& 3.82E-02	& 5.01E-03	& 8.79E-04	& 1.16E-04	\\\hline 
\end{tabular}
}
\end{center}
\caption{Approximated values of $\mathrm {Call}^\varepsilon $ and relative errors with $\varepsilon = 0.2, 0.4, 0.6, 0.8, 1$ in the Heston SV model. 
}
\label{table_RE_Heston_Call}
\end{table}

\begin{figure}[t]
\begin{center}
\includegraphics[width=10cm]{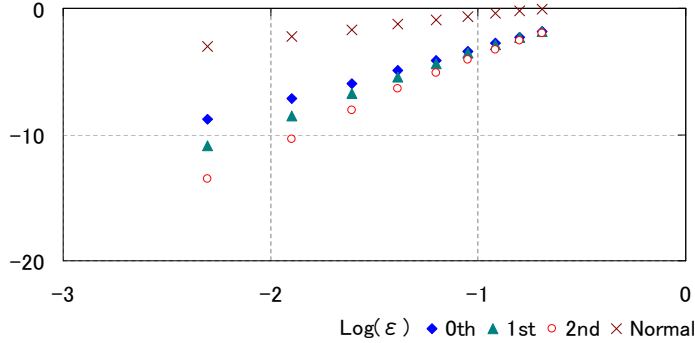}
\caption{Log-log plot of the absolute errors of $\mathrm {Call}^\varepsilon $ in the Heston SV model. 
The horizontal axis means $\log \varepsilon $. 
The vertical axis means $\log \mathrm {AE}$. 
}
\label{graph_LL_AE_Heston_Call}
\end{center}
\end{figure}

Table \ref {table_RE_Heston_Call} and 
Figure \ref {graph_LL_AE_Heston_Call} summerise the results. 
As in the tail probability case, we can see that the LR formulae yield highly accurate approximations.

\subsection{The Wishart SV Model}\label{sec_Wishart}

Next, we introduce the Wishart SV model. 
The Wishart process was first studied by Bru (1991); it was 
first used to describe multivariate stochastic volatility by 
Gouri\'eroux (2006). 
Since then, modelling of multivariate stochastic volatility 
by using the Wishart process has been studied in several papers, such as 
Fonseca, Grasselli, and Tebaldi (2007, 2008), 
Grasselli and Tebaldi (2008), Gouri\'eroux, Jasiak, and Sufana (2009), and 
Benamid, Bensusan, and El Karoui (2010). 

We consider the following SDE: 
\begin{eqnarray*}
&&dY^\varepsilon _t = -\frac{1}{2}\mathrm {tr}[\Sigma ^\varepsilon _t]dt + 
\mathrm {tr}[\sqrt{\Sigma ^\varepsilon _t}(dW_tR' + dB_t\sqrt{I - RR'})] , \\
&&d\Sigma ^\varepsilon _t = 
(\Omega '\Omega + M\Sigma ^\varepsilon _t + \Sigma ^\varepsilon _tM')dt + 
\varepsilon \left\{ \sqrt{\Sigma ^\varepsilon _t}dW_tQ + 
Q'(dW_t)'\sqrt{\Sigma ^\varepsilon _t}\right\} , \\
&&\ Y^\varepsilon _0 = y_0, \ \Sigma ^\varepsilon _0 = \Sigma _0, 
\end{eqnarray*}
where $I$ is the $n$-dimensional unit matrix, 
$R, M, Q\in \Bbb {R}^n\otimes \Bbb {R}^n$, and 
$\varepsilon \geq 0$. 
Here, $\mathrm {tr}[A]$ is the trace of $A$ and $A'$ denotes the transpose matrix of $A$. 
$\Omega \in \Bbb {R}^n\otimes \Bbb {R}^n$ is assumed to satisfy 
\begin{eqnarray*}
\Omega '\Omega = \beta Q'Q
\end{eqnarray*}
for some $\beta \geq (n-1)\varepsilon ^2$. 
$(W_t)_t$ and $(B_t)_t$ are $\Bbb {R}^n\otimes \Bbb {R}^n$-valued processes 
whose components are mutually independent standard Brownian motions. 
The process $(Y^\varepsilon _t)_t$ is regarded as the log-price of a security 
under a risk-neutral probability measure. 
$(\Sigma ^\varepsilon _t)_t$ is an $n$-dimensional matrix-valued process which 
describes multivariate stochastic volatility. 
We verify the validity of the approximation terms of the exact LR expansion for 
$\bar{F}_\varepsilon (x) = P(Y^\varepsilon _T > x)$. 

The explicit form of the CGF of $\mu _\varepsilon = P(Y^\varepsilon _T\in \cdot )$ is 
studied in Bru (1991), Fonseca, Grasselli and Tebaldi (2008), and others. 
To simplify, 
we only treat the case of $n = 2$ and 
restrict the forms of $R, M$ and $Q$ as follows: 
\begin{eqnarray*}
R = 
\left(
\begin{array}{cc}
 	r& 0	\\
 	0& r
\end{array}
\right), \ \ 
M = 
\left(
\begin{array}{cc}
 	-m& 0	\\
 	0& -m
\end{array}
\right), \ \ 
Q = 
\left(
\begin{array}{cc}
 	q& 0	\\
 	0& q
\end{array}
\right), \ \ 
\Sigma _0 = 
\left(
\begin{array}{cc}
 	\sigma ^2_0& 0	\\
 	0& \sigma ^2_0
\end{array}
\right). 
\end{eqnarray*}
We set parameters as 
$r = -0.7$, $q = 0.25$, $m = 1$, $\beta = 3$, $y_0 = 0$, $\sigma _0 = 1$, $T = 1$, and $x = 1$. 
Similar to the case in Section \ref {sec_Heston}, 
we can find linear relationships between 
$\log |\Psi ^\varepsilon _m(\hat{w}_\varepsilon )|$ and $\log \varepsilon $ 
in Figure \ref {graph_LL_Wishart} with $m = 0, 1, 2$. 
Linear regression gives 
\begin{eqnarray*}
\log |\Psi ^\varepsilon _0(\hat{w}_\varepsilon )| &=& 0.9740\log \varepsilon - 4.9496, \ \ R^2 = 0.9999, \\
\log |\Psi ^\varepsilon _1(\hat{w}_\varepsilon )| &=& 2.8128\log \varepsilon - 10.943, \ \ R^2 = 0.9995, \\
\log |\Psi ^\varepsilon _2(\hat{w}_\varepsilon )| &=& 5.2875\log \varepsilon - 14.474, \ \ R^2 = 0.9999. 
\end{eqnarray*}
Thus, for this case also we can numerically confirm that 
$\Psi ^\varepsilon _m(\hat{w}_\varepsilon ) = O(\varepsilon ^{2m+1})$, 
$\varepsilon \rightarrow 0$ for $m = 0, 1, 2$. 

\begin{figure}[t]
\begin{center}
\includegraphics[width=10cm]{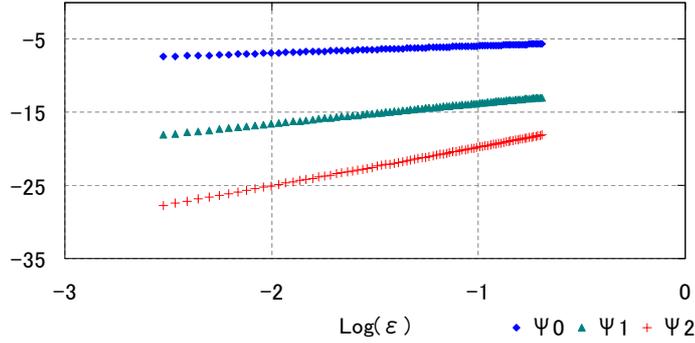}
\caption{Log-log plot of $|\Psi ^\varepsilon _m(\hat{w}_\varepsilon )|$ with $m = 0, 1, 2$ in the Wishart SV model. 
The horizontal axis means $\log \varepsilon $. 
The vertical axis means $\log |\Psi ^\varepsilon _m(\hat{w}_\varepsilon )|$. }
\label{graph_LL_Wishart}
\end{center}
\end{figure}

Now we investigate the relative errors of the LR formula. 
We compare the approximations of $P(Y^\varepsilon _{T} > x)$ 
by the formulae `Normal,' `$0$th,' `$1$st', and `$2$nd', 
defined in the same way as in Section \ref {sec_Heston}, 
with the true value, which is calculated by direct evaluation of the integral in (\ref{calc_tail}).

\begin{table}[hbtp]
\begin{center}
\scalebox{0.8}[0.8]{
\begin{tabular}{c|ccccc|rrrr} \hline 
 \multirow{2}{*}{$\varepsilon $}	& \multicolumn{5}{c|}{$P(Y^\varepsilon _T > x)$}		& \multicolumn{4}{c}{RE} 	\\ \cline {2-10}
 	& True	& Normal	& 0th	& 1st	& 2nd	& \multicolumn{1}{c}{Normal}	& \multicolumn{1}{c}{0th}	& \multicolumn{1}{c}{1st}	& \multicolumn{1}{c}{2nd}	\\ \hline
 0.2	& 0.06610 	& 0.06462 	& 0.06610 	& 0.06610 	& 0.06610 	& 2.24E-02	& 2.97E-06	& 4.62E-09	& 1.73E-11	\\
 0.4	& 0.06624 	& 0.06333 	& 0.06623 	& 0.06624 	& 0.06624 	& 4.38E-02	& 2.00E-05	& 1.88E-07	& 2.53E-09	\\
 0.6	& 0.06622 	& 0.06198 	& 0.06622 	& 0.06622 	& 0.06622 	& 6.40E-02	& 5.25E-05	& 1.77E-06	& 4.76E-08	\\
 0.8	& 0.06604 	& 0.06056 	& 0.06603 	& 0.06604 	& 0.06604 	& 8.30E-02	& 8.31E-05	& 7.92E-06	& -6.20E-07	\\
 1	& 0.06568 	& 0.05908 	& 0.06567 	& 0.06568 	& 0.06568 	& 1.00E-01	& 9.12E-05	& 4.59E-06	& -2.52E-05\\\hline 
\end{tabular}
}
\end{center}
\caption{Approximated values of $P(Y^\varepsilon _T > x)$ and relative errors for $\varepsilon = 0.2, 0.4, 0.6, 0.8, 1$. 
}
\label{table_RE_Wishart}
\end{table}

\begin{figure}[t]
\begin{center}
\includegraphics[width=10cm]{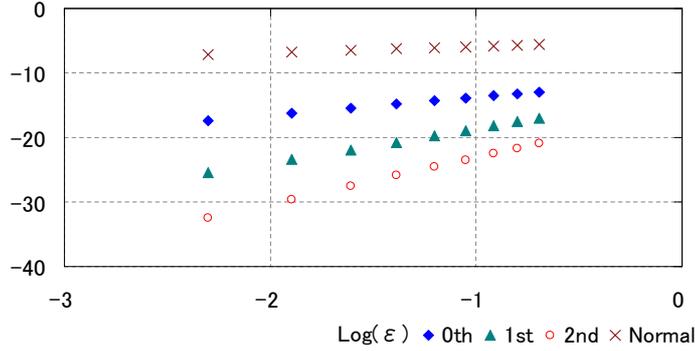}
\caption{Log-log plot of the absolute errors of $P(Y^\varepsilon _T > x)$. 
The horizontal axis means $\log \varepsilon $. 
The vertical axis means $\log \mathrm {AE}$. 
}
\label{graph_LL_AE_Wishart}
\end{center}
\end{figure}

Similar to the case in Section \ref {sec_Heston}, 
we show the relative errors and the log-log plot of absolute errors of the formulae in 
Table \ref {table_RE_Wishart} and Figure \ref {graph_LL_AE_Wishart}. 
We can also confirm that the LR formulae are highly accurate. 
Using the data shown in Figure \ref {graph_LL_AE_Wishart}, we get the linear regression results 
\begin{eqnarray*}
\log \mathrm {AE}_{\mathrm{Normal}} &=& 0.9732\log \varepsilon - 4.9507, \ \ R^2 = 0.9999, \\
\log \mathrm {AE}_{0\mathrm{th}}    &=& 2.7930\log \varepsilon - 10.979, \ \ R^2 = 0.9994, \\
\log \mathrm {AE}_{1\mathrm{st}}    &=& 5.3063\log \varepsilon - 13.339, \ \ R^2 = 0.9999, \\
\log \mathrm {AE}_{2\mathrm{nd}}    &=& 7.1747\log \varepsilon - 15.937, \ \ R^2 = 0.9999,  
\end{eqnarray*}
which suggest (\ref {error_estimate}). 

At the end of this section, we confirm the validity for application in option pricing. 
Similarly to (\ref {call_eq}), we consider the European call option 
\begin{eqnarray*}
\mathrm {Call}^\varepsilon  = \mathrm {E} [\max \left\{  \exp \left( Y^\varepsilon _T\right) - L, 0\right\}  ]
\end{eqnarray*}
with the strike price $L > 0$. 
To find the true value of the option price, we apply a closed-form formula 
proposed in Benabid, Bensusan, and El Karoui (2010). 
We set the initial price of the underlying asset as $e^{y0} = 100$ and $L = 105$. 
For the initial volatility, we put $\sigma _0 = 0.25$. 
Other parameters are the same as in the previous case. 

\begin{table}[hbtp]
\begin{center}
\scalebox{0.8}[0.8]{
\begin{tabular}{c|ccccc|rrrr} \hline 
 \multirow{2}{*}{$\varepsilon $}	& \multicolumn{5}{c|}{Call Option Price}		& \multicolumn{4}{c}{RE} 	\\ \cline {2-10}
 	& \multicolumn{1}{c}{True}	& \multicolumn{1}{c}{Normal}	& \multicolumn{1}{c}{0th}	& \multicolumn{1}{c}{1st}	& \multicolumn{1}{c|}{2nd}	& \multicolumn{1}{c}{Normal}	& \multicolumn{1}{c}{0th}	& \multicolumn{1}{c}{1st}	& \multicolumn{1}{c}{2nd}	\\ \hline
 0.2	& 10.90 	& 10.91 	& 10.90 	& 10.90 	& 10.90 	& 1.13E-03	& 1.50E-06	& 8.61E-09	& 3.15E-11	\\
 0.4	& 10.76 	& 10.80 	& 10.76 	& 10.76 	& 10.76 	& 4.58E-03	& 2.10E-05	& 4.56E-05	& 4.62E-05	\\
 0.6	& 10.60 	& 10.70 	& 10.59 	& 10.59 	& 10.59 	& 9.88E-03	& 4.37E-04	& 3.08E-04	& 3.02E-04	\\
 0.8	& 10.46 	& 10.60 	& 10.40 	& 10.41 	& 10.41 	& 1.27E-02	& 5.71E-03	& 5.29E-03	& 5.25E-03	\\
 1	& 10.15 	& 10.49 	& 10.20 	& 10.21 	& 10.21 	& 3.37E-02	& 4.33E-03	& 5.41E-03	& 5.55E-03\\\hline 
\end{tabular}
}
\end{center}
\caption{Approximations of $\mathrm {Call}^\varepsilon $ and relative errors with $\varepsilon = 0.2, 0.4, 0.6, 0.8, 1$ in the Wishart SV model. 
}
\label{table_RE_Wishart_Call}
\end{table}

\begin{figure}[t]
\begin{center}
\includegraphics[width=10cm]{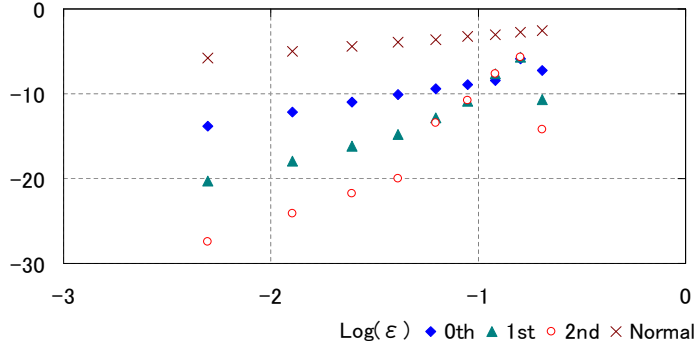}
\caption{Log-log plot of the absolute errors of $\mathrm {Call}^\varepsilon $ in the Wishart SV model. 
The horizontal axis means $\log \varepsilon $. 
The vertical axis means $\log \mathrm {AE}$. 
}
\label{graph_LL_AE_Wishart_Call}
\end{center}
\end{figure}

The results are shown in Table \ref {table_RE_Wishart_Call} and Figure \ref {graph_LL_AE_Wishart_Call}. 
Although the linear relationships are not as clear as in Figure \ref {graph_LL_AE_Heston_Call}, 
we can see that the LR formulae are highly accurate in each case.

\section{Proofs}\label{sec_proofs}

\subsection{Proof of Theorem \ref {th_exact_LR}}
\label{sec_proof_LR}

In this subsection, we justify the formal calculations shown in Section \ref {sec_LR}. 
For ease of readability, we omit $\varepsilon $ from the notation used in this section. 

\begin{proposition}\label{prop_inversion} \ 
Assume $\mathrm {[A1]}$--$\mathrm {[A2]}$ hold. 
Then 
\begin{eqnarray*}
\bar{F}(x) \ = \ 
\frac{1}{2\pi i}\int ^{c + i\infty }_{c - i\infty }\exp (K (\theta ) - x\theta  )\frac{d\theta }{\theta } 
\end{eqnarray*}
for $c \in \mathcal {O}\setminus \{0\}$. 
\end{proposition}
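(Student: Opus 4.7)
The plan is to invert the moment generating function using an Esscher-type exponential tilt and then recover $\bar{F}(x)$ by integrating the resulting density from $x$ to $\infty$, justifying each step by Fubini's theorem and, where needed, Cauchy's integral theorem.

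First, $\mathrm{[A2]}$ ensures that $\mu$ has a bounded continuous density $f$, obtained by the classical Fourier inversion formula. Next, for each $c\in\mathcal{O}$, I would introduce the Esscher-tilted probability measure $\mu_{c}(dy)=e^{cy-K(c)}\mu(dy)$. By analyticity of $K$ on $\mathcal{O}+i\mathbb{R}$ (a consequence of $\mathrm{[A1]}$), its characteristic function equals $\hat{\mu}_{c}(\xi)=e^{K(c+i\xi)-K(c)}$. Once one verifies that $\hat{\mu}_{c}\in L^{1}(\mathbb{R})$ (see the obstacle below), Fourier inversion for $\mu_{c}$ together with $f_{c}(y)=e^{cy-K(c)}f(y)$ yields the Bromwich-type density representation
\begin{eqnarray*}
f(y)\;=\;\frac{1}{2\pi i}\int_{c-i\infty}^{c+i\infty}e^{K(\theta)-\theta y}\,d\theta .
\end{eqnarray*}

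The second step is to integrate this identity over $y\in(x,\infty)$. For $c>0$, the pointwise identity $\int_{x}^{\infty}e^{-\theta y}dy=e^{-\theta x}/\theta$ holds on the contour, and Fubini's theorem applies because $|e^{K(\theta)-\theta y}|=e^{K(c)}|\hat{\mu}_{c}(\mathrm{Im}\,\theta)|\,e^{-cy}$ is integrable in $(\theta,y)$ thanks to $\hat{\mu}_{c}\in L^{1}$ and $\int_{x}^{\infty}e^{-cy}dy<\infty$; this yields the proposition for $c>0$. For $c<0$, the same computation applied to $F(x)=\int_{-\infty}^{x}f(y)dy$ (the half-line integral of $e^{-cy}$ now converges on the correct side) together with $\bar{F}=1-F$ produces the formula after absorbing the constant $1$ into the residue. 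Equivalently, one can slide the contour from $\mathrm{Re}(\theta)=c_{+}>0$ to $\mathrm{Re}(\theta)=c_{-}<0$ along a tall rectangle; the two horizontal sides vanish in the limit by the tail decay of $\hat{\mu}_{c}$, and the pole at $\theta=0$ of $e^{K(\theta)-\theta x}/\theta$ contributes its residue, namely $1$, which reconciles the two representations via $\bar{F}=1-F$.

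The principal technical obstacle is the preliminary verification that $\xi\mapsto e^{K(c+i\xi)}$ lies in $L^{1}(\mathbb{R})$ for every $c\in\mathcal{O}$, since $\mathrm{[A2]}$ only guarantees this at $c=0$. A clean route is a Phragm\'en--Lindel\"of / three-lines argument on a compact sub-strip of $\mathcal{O}+i\mathbb{R}$, combined with the $L^{1}$ bound at the edge $c=0$ and analyticity of $K$; an alternative is to note that $f_{c}(y)=e^{cy-K(c)}f(y)$ inherits enough smoothness and decay from $f$ (the MGF of $\mu$ is finite in a neighborhood of $c$) to force $\hat{\mu}_{c}\in L^{1}$ directly. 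Once this integrability is in hand, the Fubini swap and contour deformation described above are routine and the proposition follows.
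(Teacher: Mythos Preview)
Your Esscher-tilt strategy is natural, but the obstacle you single out is genuine and neither of your proposed patches closes it. The three-lines and Phragm\'en--Lindel\"of theorems control $\sup_{\xi}|e^{K(c+i\xi)}|$, not $\int|e^{K(c+i\xi)}|\,d\xi$, so they do not yield $\hat{\mu}_c\in L^1$. Your alternative---that $f_c(y)=e^{cy-K(c)}f(y)$ ``inherits enough smoothness and decay''---is also insufficient: multiplication by the analytic factor $e^{cy}$ leaves the regularity of $f$ unchanged, and mere continuity together with finite exponential moments does not force an integrable Fourier transform. Without $\hat{\mu}_c\in L^1$ the density representation on the line $\mathrm{Re}\,\theta=c$ is unjustified, and the Fubini swap that follows has no footing.

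The paper takes a route that avoids this difficulty entirely. It begins with Fourier inversion on the imaginary axis (where $[\mathrm{A2}]$ applies directly), expresses the \emph{finite-interval} probability $\mu((x,R])$ as a contour integral over $\mathrm{Re}\,\theta=0$, and only then shifts the contour to $\mathrm{Re}\,\theta=c$ by Cauchy's theorem on a rectangle of height $2l$. The point is that carrying the finite $y$-integral $\int_x^R e^{(s-y)z}\,dy$ through the estimate produces an explicit $(t^2+l^2)^{-1}$ factor on the horizontal segments, so their contribution is bounded by $\tfrac{4(l+c)c}{l^2}\,e^{K(c)}\to 0$, using only $K(c)<\infty$. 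The limit $R\to\infty$ is taken \emph{after} the contour has been moved. In short, the paper replaces your ``tilt first, then integrate in $y$'' by ``truncate in $y$ and shift the contour first, then untruncate,'' which is precisely what circumvents the need to establish $\hat{\mu}_c\in L^1$.
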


\begin{proof} 
Without loss of generality, we may assume $c > 0$ and $x \geq 0$. 
By [A2] and Theorem 3.3.5 in Durrett (2010), 
the density function $f$ of $\mu $ exists and is bounded and continuous. 
Moreover, 
\begin{eqnarray*}
f(y) = \frac{1}{2\pi }\int _{\Bbb {R}}e^{-i\xi y}\varphi (\xi )d\xi  
\end{eqnarray*}
holds, where $\varphi (\xi ) = \exp (K(i\xi ))$ is the characteristic function of $\mu $. 
Then we have for each $R > x$ that 
\begin{eqnarray*}
\mu ((x, R]) = \frac{1}{2\pi }\int ^R_x\int _{\Bbb {R}}e^{-i\xi y}\varphi (\xi )d\xi dy = 
\frac{1}{2\pi i}\int ^{i\infty }_{-i\infty }F(z)dz
\end{eqnarray*}
by Fubini's theorem, where 
\begin{eqnarray*}
F(z) = 
\int _{\Bbb {R}}\int ^R_xe^{(s - y)z}f(s)dyds. 
\end{eqnarray*}
Now, consider the four lines $\Gamma _1, \ldots , \Gamma _4\subset \Bbb {C}$, defined as 
\begin{eqnarray*}
&&\Gamma _1 = \{ it\ ; \ t\in [-l, l] \}, \ \ 
\Gamma _2 = \{ t + il\ ; \ t\in [0, c] \}, \\
&&\Gamma _3 = \{ t - il\ ; \ t\in [0, c] \}, \ \ 
\Gamma _4 = \{ c + it\ ; \ t\in [-l, l] \} 
\end{eqnarray*}
for a given $l > 0$. 
By Cauchy's integral theorem, we have 
\begin{eqnarray}\label{int_Cauchy}
\int _{\Gamma _1\cup \Gamma _2\cup \Gamma _3\cup \Gamma _4}F(z)dz = 0. 
\end{eqnarray}
Here, we observe that
\begin{eqnarray*}
\int _{\Gamma _2\cup \Gamma _3}F(z)dz 
&=& 
2i\int ^c_0\int _{\Bbb {R}}\int ^R_xe^{(s-y)t}f(s)\sin l(s-y)dydsdt\\
&=& 
2i\int ^c_0\int _{\Bbb {R}}\frac{f(s)}{t^2 + l^2}
\Big[ e^{(s-x)t}(-l\cos l(s-x) + t\sin l(s-x))\\&&\hspace{30mm} + 
e^{(s-R)t}(l\cos l(s-R) - t\sin l(s-R))\Big] dsdt
\end{eqnarray*}
to conclude 
\begin{eqnarray*}
\left | \int _{\Gamma _2\cup \Gamma _3}F(z)dz \right | \ \leq \ 
\frac{4(l+c)c}{l^2}\int _{\Bbb {R}} e^{cs}f(s)ds = 
\frac{4(l+c)c}{l^2}e^{K(c)}. 
\end{eqnarray*}
Since $c\in \mathcal {O}$, the integral on the right-hand side is finite. 
Thus, the left-hand side must converge to zero as $l\rightarrow \infty $. 
Combining this result with (\ref {int_Cauchy}), we obtain that
\begin{eqnarray}\nonumber 
&&\mu ((x, R]) \ =\  \frac{1}{2\pi i}\lim _{l\rightarrow \infty }\int _{\Gamma _1}F(z)dz \ =\  
\frac{1}{2\pi i}\lim _{l\rightarrow \infty }\int _{\Gamma _4}F(z)dz\\&&= 
\frac{1}{2\pi i}\int ^{c+i\infty }_{c-i\infty }
\int _{\Bbb {R}}\int ^R_xe^{(s - y)z}f(s)dydsdz\ = \ 
\frac{1}{2\pi i}\int ^{c+i\infty }_{c-i\infty }
\int ^R_xe^{K(z)-yz}dydz. \ \ \ \ \ 
\label{temp_integral}
\end{eqnarray}
Since 
\begin{eqnarray*}
\int ^{c+i\infty }_{c-i\infty }\int ^\infty _0|e^{K(z) - yz}|dy|dz| \leq 
\frac{1}{c}e^{K(c)} < \infty , 
\end{eqnarray*}
we can take the limit $R\rightarrow \infty $ on the right-hand side of (\ref {temp_integral}); we conclude that
\begin{eqnarray*}
\bar{F}(x) = \frac{1}{2\pi i}\int ^{c+i\infty }_{c-i\infty }
\int ^\infty _xe^{K(z)-yz}dydz = 
\frac{1}{2\pi i}\int ^{c+i\infty }_{c-i\infty }e^{K(z) - xz}\frac{dz}{z}, 
\end{eqnarray*}
which is the assertion of Proposition \ref {prop_inversion}. 
\end{proof}

Now, we present the rigorous definition of the change of variables (\ref {change_variable}). 
For each $\theta \in \mathcal {D}$, we can define $w = w(\theta )\in \Bbb {R}$ by 
\begin{eqnarray}\label{def_w_real}
w(\theta ) = 
\hat{w} + \mathrm {sgn}(\theta - \hat{\theta })
\sqrt{2\left \{ \left( K(\theta ) - x\theta \right) - 
\left( K(\hat{\theta }) - x\hat{\theta }\right) \right \}}. 
\end{eqnarray}
Obviously, $w(\theta )$ is analytic on $\mathcal {O}\setminus \{\hat{\theta }\}$. 
Moreover, by straightforward calculation we observe 
\begin{eqnarray}\label{w_diff}
\frac{dw}{d\theta } = \mathrm {sgn}(\theta - \hat{\theta })\cdot  \frac{K'(\theta ) - x}{\sqrt{\hat{w}^2 + 2(K(\theta ) - x\theta )}} = 
\frac{K'(\theta ) - x}{w(\theta ) - \hat{w}}. 
\end{eqnarray}
Here we see that $w(\theta )$ is also analytic at $\hat{\theta }$. 
Indeed, similar to (\ref {calc_K_diff}), we have 
\begin{eqnarray*}
w(\theta ) = \hat{w} + \mathrm {sgn}(\theta - \hat{\theta })
\sqrt{k(\theta )}|\theta - \hat{\theta }| = 
\hat{w} + 
\sqrt{k(\theta )}(\theta - \hat{\theta }), 
\end{eqnarray*}
where 
\begin{eqnarray*}
k(\theta ) = \int ^1_0K''(\hat{\theta } + u(\theta - \hat{\theta }))du. 
\end{eqnarray*}
By [A2], $k(\theta )$ is positive, and thus $\sqrt{k(\theta )}$ is real analytic. 
As a consequence, the function $w(\theta )$ is real analytic on $\mathcal {O}$. 
Now we can take the limit $\theta \rightarrow \hat{\theta }$ in (\ref {theta_diff}) to obtain 
\begin{eqnarray*}
w'(\hat{\theta }) = \lim _{\theta \rightarrow \hat{\theta }}\frac{K'(\theta ) - x}{w - \hat{w}} = 
\lim _{\theta \rightarrow \hat{\theta }}\frac{K''(\theta )}{w'(\theta )} 
\end{eqnarray*}
by using l'H\^opital's rule. 
This implies that $(w'(\hat{\theta }))^2 = K''(\hat{\theta }) \neq 0$. 
Therefore, we deduce that 
there exist a neighbourhood $U\subset \Bbb {C}$ of $w(\hat{\theta }) = \hat{w}$ and 
a holomorphic function $\theta (w)$ on $U$ such that $\theta (w(z)) = z$ for $z\in U$. 

Here we remark that 
\begin{lemma}\label{lem_K_diff_real} \ 
$\theta \notin \mathcal{D}$ implies $K'(\theta )$ does not lie on $\Bbb {R}$. 
\end{lemma}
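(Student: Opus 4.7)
The plan is to argue by contradiction via exponential tilting. Fix $\theta = a + ib$ with $a \in \mathcal{O}$ and $b \neq 0$, and suppose for contradiction that $K'(a+ib) \in \Bbb{R}$. The first step would be to introduce the tilted probability measure $\pi_a(dx) = e^{ax}\mu(dx)/e^{K(a)}$, which by $\mathrm{[A1]}$ is a bona fide probability measure and by $\mathrm{[A2]}$ has a density of full support. I would set $\psi(t) := \int e^{it(x-K'(a))}\pi_a(dx)$, the characteristic function of the $K'(a)$-centered variable under $\pi_a$. A direct computation gives the identity $K(a+it) = K(a) + itK'(a) + \log\psi(t)$ on a neighborhood of $t=0$, extended analytically along the imaginary axis, and differentiating in $t$ yields
\begin{equation*}
K'(a+it) = K'(a) - i\,\psi'(t)/\psi(t), \qquad \mathrm{Im}\,K'(a+it) = -\frac{d}{dt}\log|\psi(t)|.
\end{equation*}
By the conjugation symmetry $K'(\bar\theta) = \overline{K'(\theta)}$ together with analyticity on the strip, one also obtains the convenient integral form $\mathrm{Im}\,K'(a+ib) = \int_0^b \mathrm{Re}\,K''(a+is)\,ds$.

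\textbf{Reduction via Rolle.} Next I would apply Rolle's theorem to the smooth real function $t \mapsto \mathrm{Im}\,K'(a+it)$ on $[0,b]$. It vanishes at $t=0$ (since $K'(a)\in\Bbb{R}$) and, by the contradiction hypothesis, also at $t=b$; therefore some $t^* \in (0,b)$ satisfies $\mathrm{Re}\,K''(a+it^*) = 0$. The proof would then be reduced to showing that $\mathrm{Re}\,K''(a+it) > 0$ for every $t \in \Bbb{R}$, immediately contradicting the existence of $t^*$.

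\textbf{Main obstacle.} The hard part is establishing this strict positivity of $\mathrm{Re}\,K''$ along the imaginary line. It does not follow from $\mathrm{[A1]}$--$\mathrm{[A2]}$ alone: for generic densities (e.g.\ mixtures of Gaussians) one can manufacture examples where $\mathrm{Re}\,K''$ changes sign and $|\psi|$ has critical points away from the origin. The proof must therefore exploit $\mathrm{[B1]}$ crucially: the lower bound $|K''|\geq\delta>0$ on the entire strip $\mathcal{O}\times i\Bbb{R}$ ensures that $K''$ has a well-defined continuous argument, and continuity together with the real positivity $\arg K''(a)=0$ at $t=0$ would let me attempt to confine $\arg K''(a+it)$ to $(-\pi/2,\pi/2)$ for all $t$. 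Closing this winding argument is where the difficulty lies; the natural additional input is the geometric condition $\mathrm{[B2]}$, whose convexity constraint on the image of $\iota$ precludes precisely the kind of rotation of $\arg K''$ needed to reach the imaginary axis. I expect the technical core to be a careful monodromy-type argument along the imaginary line, combining $\mathrm{[B1]}$ and $\mathrm{[B2]}$ to pin $\mathrm{Re}\,K''$ to strictly positive values.
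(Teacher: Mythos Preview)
Your approach takes a very different route from the paper's, and the part you label ``Main obstacle'' is a genuine gap, not a technicality to be filled in later. The Rolle reduction is correct: if $K'(a+ib)\in\Bbb R$ with $b\ne 0$, then some $t^*\in(0,b)$ has $\mathrm{Re}\,K''(a+it^*)=0$. But as you yourself observe, the global positivity $\mathrm{Re}\,K''(a+it)>0$ is \emph{not} a consequence of $\mathrm{[A1]}$--$\mathrm{[A2]}$, so something extra must supply it. Your proposal to extract it from $\mathrm{[B1]}$--$\mathrm{[B2]}$ is not actually carried out: $\mathrm{[B1]}$ only says $|K''|\ge\delta$, which keeps $K''$ away from zero but does nothing to stop $\arg K''(a+it)$ from drifting past $\pm\pi/2$ as $t$ grows; and $\mathrm{[B2]}$ is a geometric constraint on the image of the antiderivative $\iota$ along one specific vertical line, from which a sign condition on $\mathrm{Re}\,K''$ on the whole strip is far from immediate. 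The phrase ``monodromy-type argument'' names the missing step rather than supplies it.

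The paper's proof is four lines and invokes only $\mathrm{[A2]}$, not $\mathrm{[B1]}$--$\mathrm{[B2]}$. It works on the \emph{target} side rather than trying to control $K''$ along vertical lines in the domain: for each real $y\in\mathcal O$ one has $K''(y)\ne 0$ by strict convexity, so the holomorphic inverse function theorem produces a local analytic inverse $(K')^{-1}$ on a neighbourhood $U$ of $K'(y)$; restricted to $U\cap\Bbb R$ this inverse is $\mathcal D$-valued, and the lemma is read off from that. So instead of attempting a global positivity statement for $\mathrm{Re}\,K''$, it is simpler to invert $K'$ locally near the real axis in the range and argue by contraposition.
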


\begin{proof} 
Let $y\in \Bbb {R}$. By [A2], we have $K''(y) \neq 0$. 
Thus, we can find a neighbourhood $U$ of $K'(y)$ and 
an analytic inverse function function $(K')^{-1}$ of $K'$ defined on $U$. 
On the other hand, [A2] implies that $(K')^{-1}|_{U\cap \Bbb {R}}$ is an analytic $\mathcal {D}$-valued function, 
hence $(K')^{-1}(y)\in \mathcal {D}$. 
\end{proof}

Lemma \ref {lem_K_diff_real} immediately implies 

\begin{corollary}\label{cor_SP_unique} \ Let $z\in \mathcal {D}\times i\Bbb {R}$. 
If $z \neq \hat{\theta }$, then $K'(z) \neq x$; hence, $w'(\theta ) \neq 0$. 
\end{corollary}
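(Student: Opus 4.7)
The plan is to argue by contradiction, treating separately the cases $\mathrm{Im}\,z = 0$ and $\mathrm{Im}\,z \neq 0$, with Lemma~\ref{lem_K_diff_real} handling the latter. Suppose that $z \in \mathcal{D} \times i\Bbb{R}$ satisfies $z \neq \hat{\theta}$ but $K'(z) = x$. Since $x \in \Bbb{R}$, the value $K'(z)$ is real, so the contrapositive of Lemma~\ref{lem_K_diff_real} yields $z \in \mathcal{D}$, forcing $\mathrm{Im}\,z = 0$ and reducing the entire question to the real line.

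Once $z$ is known to be real, assumption [A2] (which guarantees that $\mu$ has a density, hence that $K$ is strictly convex on $\mathcal{O}$) implies that $K'$ is strictly increasing on $\mathcal{O}$. Combined with $K'(\hat{\theta}) = x = K'(z)$, this forces $z = \hat{\theta}$, contradicting the hypothesis $z \neq \hat{\theta}$, and completing the proof of the first assertion.

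For the second assertion $w'(\theta) \neq 0$, I would read off non-vanishing directly from formula (\ref{w_diff}). For $\theta \neq \hat{\theta}$ the numerator $K'(\theta) - x$ is nonzero by the first part of the corollary, and the denominator $w(\theta) - \hat{w}$ is nonzero because the map $\theta \mapsto K(\theta) - x\theta$ is strictly convex on $\mathcal{O}$ with its unique minimum at the critical point $\hat{\theta}$, hence $K(\theta) - x\theta > K(\hat{\theta}) - x\hat{\theta}$ whenever $\theta \neq \hat{\theta}$. At $\theta = \hat{\theta}$ itself, the identity $(w'(\hat{\theta}))^2 = K''(\hat{\theta}) > 0$ was already established in the paragraph preceding the corollary (via l'H\^opital's rule).

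The principal (and fairly mild) obstacle is recognizing that Lemma~\ref{lem_K_diff_real} together with [A2] do all the work: the lemma rules out any non-real coincidence of $K'$ with $x$, and strict convexity pins down uniqueness on the real axis. No deeper complex-analytic argument is needed, and the claim on $w'$ is an immediate corollary of the quotient representation of $w'(\theta)$.
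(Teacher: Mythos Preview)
Your proof is correct and follows the same approach the paper compresses into a single line (``Lemma~\ref{lem_K_diff_real} immediately implies''): you simply make explicit the case distinction---Lemma~\ref{lem_K_diff_real} for non-real $z$, strict convexity of $K$ on $\mathcal{O}$ for real $z$---that the paper leaves to the reader. One minor simplification for the $w'(\theta)\neq 0$ clause: differentiating the defining relation gives $(w(\theta)-\hat w)\,w'(\theta)=K'(\theta)-x$, so $K'(\theta)\neq x$ forces $w'(\theta)\neq 0$ directly, without a separate check on the denominator; this also covers complex $\theta$, where your strict-convexity argument on $\mathcal{O}$ does not literally apply.
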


Now, we consider an analytic continuation of $\theta (w)$. 
Until the end of this section, 
we will assume [A1]--[A2] and [B1]--[B3] hold. 
By [B2], (\ref {w_diff}) and Corollary \ref {cor_SP_unique}, 
we can define the analytic function $\theta (w)$ on an open set $\hat{U}$ 
which contains a convex set that includes the line $\{ \hat{w} \} \times i\Bbb {R}$ and 
the curve $\{ w(\hat \theta  + it) \ ; \ t\in \Bbb {R} \}$. 
Note that (\ref {w_diff}) immediately implies 
\begin{eqnarray}\label{theta_diff}
\theta '(w) = \frac{w - \hat{w}}{K'(\theta (w)) - x} 
\end{eqnarray}
for each $w\neq \hat{w}$. 

By definition, the relation (\ref {change_variable}) holds everywhere on $\hat{U}$. 
Therefore, if we define the curves $\eta $ and $\gamma $ as 
\begin{eqnarray*}
\eta  = \{ \hat{\theta } + it\ ; \ t \in \Bbb {R}\} , \ \ 
\gamma  = \{ w(\theta )\ ; \ \theta \in \eta  \} , 
\end{eqnarray*}
then $\theta (w)$ can be also defined and is analytic on $\gamma $. 
Then, we can apply the change of variables to obtain 
\begin{eqnarray*}
\bar{F}(x) \ = \ 
\frac{1}{2\pi i}\int _{\eta }\exp (K (\theta ) - x\theta  )\frac{d\theta }{\theta } \ = \ 
\int _{\gamma }\exp \left (\frac{1}{2}w^2 - \hat{w}w\right )\frac{\theta '(w)}{\theta (w)}dw. 
\end{eqnarray*}

In Section \ref {sec_LR}, we need the condition $\hat{\theta } \neq 0$. 
In this section we only consider the case where $\hat{\theta } > 0$; 
the arguments are analogous for the case where $\hat{\theta } < 0$. 
In any case, we have $\hat{\theta } \neq 0$ and thus $\eta $ does not pass $0$. 
Here, we see that $\hat{w} > 0$. 
Indeed, if $\hat{w} = 0$, then the inequality in (\ref {calc_K_diff}) must be changed to equality. 
However, the assumptions $\hat{\theta } > 0$ and [A2] imply that the left-hand side of (\ref {calc_K_diff}) is positive. 
This is a contradiction. 
Moreover, by its definition, $\hat{w}$ must be nonnegative. 
These arguments imply that $\gamma $ does not exceed $0$. 

\begin{proposition}\label{prop_Cauchy_w} \ 
\begin{eqnarray*}
\int _{\gamma }\exp \left (\frac{1}{2}w^2 - \hat{w}w\right )\frac{\theta '(w)}{\theta (w)}dw &=& 
\int ^{\hat{w} + i\infty }_{\hat{w} - i\infty }\exp \left (\frac{1}{2}w^2 - \hat{w}w\right )\frac{\theta '(w)}{\theta (w)}dw. 
\end{eqnarray*}
\end{proposition}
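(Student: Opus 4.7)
The plan is to use Cauchy's integral theorem to deform the contour $\gamma$ onto the vertical line $\{\hat w\} \times i\Bbb {R}$, by enclosing both with a closed contour and showing the connecting arcs contribute nothing in the limit. Throughout, set
\begin{equation*}
f(w):=\exp\left(\tfrac{1}{2}w^2-\hat ww\right)\frac{\theta'(w)}{\theta(w)}.
\end{equation*}
The first task is to show that $f$ is holomorphic on the open region $D$ bounded by $\gamma$ and $\{\hat w\}\times i\Bbb {R}$. By [B2] and the construction of $\hat U$, one has $\overline D\subset \hat U$, where $\theta(w)$ and $\theta'(w)$ are both holomorphic; the exponential factor is entire. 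The only remaining issue is to ensure $\theta(w)\neq 0$ on $\overline D$. Since $\hat\theta>0$, and since $w(\theta)$ is strictly increasing on $\Bbb R$ (via (\ref{w_diff}) and the strict convexity of $K$), the unique preimage $w(0)$ of $0$ under $\theta\mapsto w(\theta)$ satisfies $w(0)<\hat w$; together with the structure of $\gamma$ this places $w(0)$ outside $\overline D$, so $\theta$ has no zero there.

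Next, for each $R>0$ let $w^R_{\pm}:=w(\hat\theta\pm iR)\in\gamma$ and form a closed contour $\Gamma_R$ by concatenating the arc of $\gamma$ from $w^R_{-}$ to $w^R_{+}$, a horizontal segment $\alpha^+_R$ from $w^R_{+}$ to $\hat w+i\,\mathrm{Im}(w^R_{+})$, the vertical segment of $\{\hat w\}\times i\Bbb R$ from $\hat w+i\,\mathrm{Im}(w^R_+)$ down to $\hat w+i\,\mathrm{Im}(w^R_-)$, and a horizontal segment $\alpha^-_R$ closing the contour. By the previous paragraph and Cauchy's theorem, $\oint_{\Gamma_R}f(w)\,dw=0$. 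The proof then reduces to showing $\int_{\alpha^\pm_R}f(w)\,dw\to 0$ along some sequence $R_n\to\infty$. The crucial estimate comes from (\ref{change_variable}): on $\gamma$,
\begin{equation*}
\left|\exp\left(\tfrac{1}{2}w^2-\hat ww\right)\right|_{w=w(\hat\theta+it)}=e^{-x\hat\theta}\bigl|\exp(K(\hat\theta+it))\bigr|,
\end{equation*}
which, up to a constant, equals the modulus of the characteristic function of the exponentially tilted measure $d\tilde\mu:=e^{\hat\theta y}d\mu/e^{K(\hat\theta)}$. Since $\tilde\mu$ inherits integrability of its characteristic function from [A2], $\bigl|\exp(K(\hat\theta+it))\bigr|\to 0$ at least along some subsequence $|t_n|\to\infty$. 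Combined with boundedness of $\theta'(w)/\theta(w)$ on $\overline D$ (from [B1] and the nonvanishing of $\theta$) and the asymptotic $w(\hat\theta+it)=\hat w+\sqrt{2\iota(\hat\theta+it)}$, this yields $\int_{\alpha^\pm_{R_n}}f\,dw\to 0$.

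The main obstacle is the decay estimate on the connecting arcs: one must show that the Gaussian decay of $|\exp(\tfrac{1}{2}w^2-\hat ww)|$ along $\{\hat w\}\times i\Bbb R$ and the (weaker, characteristic-function-type) decay along $\gamma$ together dominate the growth of the length of $\alpha^\pm_R$, which is controlled by $|w^R_\pm-\hat w|=\sqrt{2|\iota(\hat\theta\pm iR)|}=O(\sqrt R)$. This is delicate precisely because $\gamma$ itself escapes to infinity, so the connecting segments also grow; the argument must invoke (a suitably adapted form of) the integrability hypothesis [A2] for the tilted measure in order to extract a subsequence along which the suprema of $|f|$ on $\alpha^\pm_R$ decay fast enough to overcome this growth.
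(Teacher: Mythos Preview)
Your overall shape—Cauchy's theorem plus vanishing connecting arcs—matches the paper's, but the execution has genuine gaps and diverges from the paper's estimates in ways that matter.

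First, three concrete problems. (a) You assert ``boundedness of $\theta'(w)/\theta(w)$ on $\overline D$'' citing only [B1] and nonvanishing of $\theta$. But $D$ is unbounded, so this needs a quantitative argument; nonvanishing alone gives nothing at infinity. (b) Your arc-length estimate is off: by [B1] and Taylor, $|\iota(\hat\theta\pm iR)|\asymp R^2$ (not $O(R)$), so $|w^R_\pm-\hat w|=O(R)$, not $O(\sqrt R)$. (c) You invoke [A2] for the \emph{tilted} measure, which the hypotheses do not give you; and in any case a subsequence along which the characteristic function vanishes is not enough unless you also control the growth of the other factors quantitatively. As written, the balance ``$o(1)$ times something growing'' is never closed.

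The paper avoids all of this. It does not use characteristic-function decay along $\gamma$ at all; instead it exploits the intrinsic Gaussian factor. On a horizontal segment at height $l$ one has
\[
\left|\exp\!\left(\tfrac12 w^2-\hat w w\right)\right|=\exp\!\left(\tfrac12(\mathrm{Re}\,w-\hat w)^2-\tfrac12\hat w^2-\tfrac12 l^2\right),
\]
so for segments with bounded real part the modulus carries the factor $e^{-l^2/2}$. The logarithmic derivative is rewritten via $\theta'(w)=(w-\hat w)/(K'(\theta)-x)$, and then two bounds from [B1] are used: the lower bound on $|K''|$ gives $|K'(\theta)-x|\ge\delta|\theta-\hat\theta|$, while the upper bound yields the key lemma $|\theta(w)-\hat\theta|\ge|w-\hat w|/\sqrt{C}$. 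Together these force $\inf_{w\in L^c_l}|(K'(\theta(w))-x)\theta(w)|$ to grow at least like $l^2$, so every factor except $e^{-l^2/2}$ is polynomial in $l$ and the segment integral dies as $l\to\pm\infty$. This is the missing engine in your sketch: a \emph{polynomial lower bound on $|\theta|$ and $|K'(\theta)-x|$} in terms of $|w-\hat w|$, which turns your qualitative ``boundedness'' claim into something usable and removes any need for Riemann--Lebesgue or integrability of a tilted characteristic function.
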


To prove this proposition, we prepare a lemma. 

\begin{lemma}\label{lem_abs} \ 
$|\theta (w) - \hat{\theta }| \geq |w - \hat{w}|/\sqrt{C}$. 
\end{lemma}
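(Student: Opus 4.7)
The plan is to exploit the defining relation~(\ref{change_variable}) of the change of variables, combined with Taylor's theorem in integral form. Specialising~(\ref{change_variable}) at $w = \hat{w}$, where $\theta = \hat{\theta}$, gives $-\tfrac{1}{2}\hat{w}^2 = K(\hat{\theta}) - x\hat{\theta}$. Subtracting this identity from the general one in~(\ref{change_variable}) and invoking the saddlepoint equation $K'(\hat{\theta}) = x$, I would obtain
\[
\tfrac{1}{2}(w - \hat{w})^2 \;=\; K(\theta) - K(\hat{\theta}) - K'(\hat{\theta})(\theta - \hat{\theta}),
\]
which by Taylor's theorem with integral remainder equals $(\theta - \hat{\theta})^2 \int_0^1 (1 - u) K''\bigl(\hat{\theta} + u(\theta - \hat{\theta})\bigr)\, du$, an identity analogous to the one used in~(\ref{calc_K_diff}).

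Next, I would take absolute values of both sides and use [B1] to bound the integrand by $C$, yielding
\[
\tfrac{1}{2}|w - \hat{w}|^2 \;\leq\; |\theta - \hat{\theta}|^2 \int_0^1 (1-u)\, C \, du \;=\; \tfrac{C}{2}\, |\theta - \hat{\theta}|^2.
\]
Rearranging and taking square roots gives $|\theta(w) - \hat{\theta}| \geq |w - \hat{w}|/\sqrt{C}$, which is the assertion.

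The only subtlety is justifying that the straight-line segment $\{\hat{\theta} + u(\theta(w) - \hat{\theta}) : u \in [0,1]\}$ lies inside $\mathcal{O} \times i\Bbb{R}$ so that the bound $|K''| \leq C$ from [B1] applies along the full path of integration. This is however automatic: $\mathcal{O}$ is an open interval and hence convex, $\hat{\theta}$ is real and lies in $\mathcal{O}$, and for $\theta(w) = a + ib$ with $a \in \mathcal{O}$ the real part of the segment is the convex combination $\hat{\theta} + u(a - \hat{\theta}) \in \mathcal{O}$ while the imaginary part is real. Hence no substantial obstacle arises, and the entire proof is essentially a one-line consequence of Taylor's theorem together with [B1].
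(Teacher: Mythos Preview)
Your proposal is correct and follows essentially the same approach as the paper's proof: the paper also invokes Taylor's theorem together with the bound $|K''|\le C$ from [B1] to obtain $|w-\hat w|^2 = 2|K(\theta(w))-x\theta(w)-(K(\hat\theta)-x\hat\theta)| \le C|\theta(w)-\hat\theta|^2$, and then takes square roots. Your version is simply more explicit about the integral remainder and about why the segment of integration stays in $\mathcal O\times i\mathbb R$, a point the paper leaves implicit.
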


\begin{proof} 
By [B1] and Taylor's theorem, we have 
\begin{eqnarray*}
|w - \hat{w}|^2 = 2| K(\theta (w)) - x\theta (w) - (K(\hat{\theta }) - x\hat{\theta }) | 
\leq C|\theta (w) - \hat{\theta }|^2, 
\end{eqnarray*}
which implies the asserted statement. 
\end{proof}

\begin{proof}[Proof of Proposition \ref {prop_Cauchy_w}] 
By Cauchy's integral theorem, it suffices to show that 
\begin{eqnarray}\label{temp_Lcl0}
\lim _{l\rightarrow \pm \infty }\sup _{c\in \mathcal {O} \cap M}
\left| \int _{L^c_l}\exp \left (\frac{1}{2}w^2 - \hat{w}w\right )\frac{\theta '(w)}{\theta (w)}dw\right| = 0 
\end{eqnarray}
for each compact set $M$ in $(0, \infty )$, 
where $L^c_l = \{ \hat{w} + t(c-\hat{w}) + il\ ;\ t \in [0, 1] \}$ and $l\in \Bbb {R}$. 
By (\ref {theta_diff}), we get 
\begin{eqnarray}\nonumber 
&&\left| \int _{L^c_l}\exp \left (\frac{1}{2}w^2 - \hat{w}w\right )\frac{\theta '(w)}{\theta (w)}dw\right| \\\nonumber 
&\leq & 
|c-\hat{w}|\exp \left( -\frac{\hat{w}^2 + l^2}{2}\right) 
\int ^1_0\exp \left ( \frac{1}{2}t^2(c-\hat{w})^2 \right) 
\left| \frac{t(c - \hat{w}) + il}{(K'(\theta ) - x)\theta }\right|dt \\
&\leq & 
\exp \left( \frac{c^2 - l^2}{2}\right) 
\frac{(|c| + |\hat{w}|)(|c| + |\hat{w}| + |l|)}{\inf _{w\in L^c_l}|(K'(\theta (w)) - x)\theta (w)|}. 
\label{temp_Lcl1}
\end{eqnarray}
By [B1] and Lemma \ref {lem_abs}, we observe
\begin{eqnarray*}
\inf _{w\in L^c_l}|(K'(\theta (w)) - x)\theta (w)| \geq  
\delta \inf _{w\in L^c_l}|\theta (w) - \hat{\theta }|\inf _{w\in L^c_l}|\theta | \geq 
\delta \cdot \frac{|l|}{\sqrt{C}}\cdot \left( \frac{|l|}{\sqrt{C}} - |\hat{\theta }|\right) 
\end{eqnarray*}
for sufficiently large magnitudes of $l$. 
Hence, we obtain (\ref {temp_Lcl0}) from (\ref {temp_Lcl1}). 
\end{proof}

\begin{proof}[Proof of Theorem \ref {th_exact_LR}] 
From Propositions \ref {prop_inversion} and \ref {prop_Cauchy_w}, we get (\ref {Cauchy_bar_F}). 
Now we verify the holomorphicity of $\psi $ on $\{\hat{w}\} \times i\Bbb {R}$. 
We define 
\begin{eqnarray}\label{def_g}
h(w) = \log \theta (w) - \log w = \log g(w), \ \ g(w) = \frac{\theta (w)}{w}
\end{eqnarray}
when $\theta (w)$ is defined and let $w\neq 0$, 
where $\log z$ is the principal value of the logarithm of $z$. 
Since $\theta (w)$ is analytic on the line $\{\hat{w}\} \times i\Bbb {R}$, $h$ is also analytic. 
We can easily see that $h'(w) = \psi (w)$. 
This implies that $\psi (w)$ is also analytic; 
this permits the following Taylor series expansion: 
\begin{eqnarray}\label{psi_expansion}
\psi (w) = \sum ^\infty _{n = 0}\frac{\psi ^{(n)}(\hat{w})}{n!}(w - \hat{w})^n 
\end{eqnarray}
for $w\in \{\hat{w}\}\times i\Bbb {R}$. 

To complete the proof of Theorem \ref {th_exact_LR}, 
it suffices to check the calculations in (\ref {integral_sum}). 
Using (\ref {psi_expansion}) and the relation 
\begin{eqnarray}\label{int_normal}
\int ^\infty _{-\infty }e^{-y^2/2}y^ndy = \sqrt{2\pi }(n - 1)!! \ (\mbox{$n$ is even}), \ 0 \ (\mbox{$n$ is odd}), 
\end{eqnarray}
we have 
\begin{eqnarray*}
&&\sum ^\infty _{n = 0}\frac{1}{n!}\int ^\infty _{-\infty }e^{-y^2/2}|\psi ^{(n)}(\hat{w})|\cdot |y|^ndy \\
&\leq & 
\sqrt{2\pi }|\psi (\hat{w})| + 
\sum ^\infty _{m = 1}\frac{1}{(2m)!}\{ |\psi ^{(2m)}(\hat{w})| + |\psi ^{(2m-1)}(\hat{w})| \} 
\int ^\infty _{-\infty }e^{-y^2/2}(y^{2m} + 1)dy\\ 
&\leq & 
\sqrt{2\pi }\left\{ |\psi (\hat{w})| + 
2\sum ^\infty _{m = 1}\frac{|\psi ^{(2m)}(\hat{w})| + |\psi ^{(2m-1)}(\hat{w})|}{(2m)!!} \right\}. 
\end{eqnarray*}
By [B3], the right-hand side of the above inequality is finite. 
Thus, we can apply Fubini's theorem and we can interchange the sum and the integral in (\ref {integral_sum}). 
That is, 
\begin{eqnarray*}
\int ^\infty _{-\infty }e^{-y^2/2}\sum ^\infty _{n = 0}\frac{\psi ^{(n)}(\hat{w})}{n!}(iy)^ndy  = 
\sum ^\infty _{n = 0}\frac{\psi ^{(n)}(\hat{w})}{n!}i^n
\int ^\infty _{-\infty }e^{-y^2/2}y^ndy. 
\end{eqnarray*}
We finish the proof of Theorem \ref {th_exact_LR} by using (\ref {int_normal}) again. 
\end{proof}

\subsection{Proof of Theorem \ref {th_main}} \label{sec_proof_estimate}

For simplicity, we only consider the case $\hat{\theta }_0 > 0$. 
First, we introduce the following lemma. 

\begin{lemma}\label{lem_SP_conti} \ 
$\hat{\theta }_\varepsilon \longrightarrow \hat{\theta }_0$, \ 
$\hat{w}_\varepsilon \longrightarrow \hat{w}_0$ as $\varepsilon \rightarrow 0$. 
\end{lemma}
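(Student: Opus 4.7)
The plan is a two-step argument: first show $\hat{\theta}_\varepsilon \to \hat{\theta}_0$ by a compactness-plus-uniqueness argument, then deduce $\hat{w}_\varepsilon \to \hat{w}_0$ from the definition \eqref{def_w_eps} and continuity. Throughout, I would exploit the defining equation $K'_\varepsilon(\hat{\theta}_\varepsilon)=x$ together with the uniform bound $K''_\varepsilon \ge \delta_0$ from [A3] and the uniform-on-compacts convergence $K^{(r)}_\varepsilon \to K^{(r)}_0$ from [A5]. Note that by Remark \ref{rem_A5}(iii) the limit CGF is quadratic with $K'_0(\theta)=m+\sigma^2\theta$, so $K'_0$ is a bijection of $\Bbb{R}$ and $\hat{\theta}_0$ is uniquely determined by $K'_0(\hat{\theta}_0)=x$; under the standing assumption $\hat{\theta}_0>0$ we even know $\hat{\theta}_0 = (x-m)/\sigma^2 > 0$.

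The first step is to establish that $\{\hat{\theta}_\varepsilon\}_{\varepsilon\in(0,\varepsilon_1]}$ is bounded for some $\varepsilon_1>0$. By [A3] and the mean value theorem, for every $\varepsilon$ and every $\theta\in\mathcal{O}_\varepsilon$,
\begin{equation*}
K'_\varepsilon(\theta) \ge K'_\varepsilon(0) + \delta_0\,\theta \quad (\theta\ge 0), \qquad K'_\varepsilon(\theta) \le K'_\varepsilon(0) + \delta_0\,\theta \quad (\theta\le 0).
\end{equation*}
Substituting $\theta=\hat{\theta}_\varepsilon$ and using $K'_\varepsilon(\hat{\theta}_\varepsilon)=x$ yields $|\hat{\theta}_\varepsilon|\le (|x|+|K'_\varepsilon(0)|)/\delta_0$, and [A5] with $r=1$ gives $K'_\varepsilon(0)\to K'_0(0)=m$, so the right-hand side is bounded uniformly in small $\varepsilon$. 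Given boundedness, take any subsequence $\varepsilon_n\downarrow 0$ and extract a further subsequence along which $\hat{\theta}_{\varepsilon_n}\to\theta^\ast$. Writing
\begin{equation*}
|K'_{\varepsilon_n}(\hat{\theta}_{\varepsilon_n}) - K'_0(\theta^\ast)| \le \sup_{\theta\in C} |K'_{\varepsilon_n}(\theta) - K'_0(\theta)| + |K'_0(\hat{\theta}_{\varepsilon_n}) - K'_0(\theta^\ast)|
\end{equation*}
for a compact $C$ containing $\{\hat{\theta}_{\varepsilon_n}\}\cup\{\theta^\ast\}$, both terms vanish in the limit by [A5] and continuity of $K'_0$. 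Hence $K'_0(\theta^\ast)=x$, so $\theta^\ast=\hat{\theta}_0$ by injectivity of $K'_0$. Since every subsequential limit equals $\hat{\theta}_0$, the full limit $\hat{\theta}_\varepsilon\to\hat{\theta}_0$ follows.

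For the second step, since $\hat{\theta}_0>0$, we have $\hat{\theta}_\varepsilon>0$ for all sufficiently small $\varepsilon$, so $\mathrm{sgn}(\hat{\theta}_\varepsilon)=1=\mathrm{sgn}(\hat{\theta}_0)$ eventually. By the same uniform-on-compacts convergence (now with $r=0$) plus continuity of $K_0$, one has $K_\varepsilon(\hat{\theta}_\varepsilon)\to K_0(\hat{\theta}_0)$, and therefore
\begin{equation*}
\hat{w}_\varepsilon = \sqrt{2\bigl(x\hat{\theta}_\varepsilon - K_\varepsilon(\hat{\theta}_\varepsilon)\bigr)} \longrightarrow \sqrt{2\bigl(x\hat{\theta}_0 - K_0(\hat{\theta}_0)\bigr)} = \hat{w}_0.
\end{equation*}
The quantity inside the square root is nonnegative by the identity \eqref{calc_K_diff} and strictly positive in the limit because $\hat{\theta}_0\ne 0$, so the square root is continuous at the limit and no subtlety arises. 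The only step where one must think is the uniform bound on $\hat{\theta}_\varepsilon$; once that is in hand the proof is routine. The main potential pitfall is ensuring that the same compact $C$ can be used for the [A5]-convergence argument uniformly in $n$, which is immediate from the boundedness just established.
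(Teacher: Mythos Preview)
Your proof is correct. The boundedness step is essentially the same as the paper's: both use $K''_\varepsilon\ge\delta_0$ from [A3] together with $K'_\varepsilon(0)\to m$ from [A5] to obtain $|\hat{\theta}_\varepsilon|\le(|x|+\sup_\varepsilon|K'_\varepsilon(0)|)/\delta_0$. Where you diverge is in the convergence step. The paper expands $K'_\varepsilon$ to second order about $\hat{\theta}_0$,
\[
x-K'_\varepsilon(\hat\theta_0)=K''_\varepsilon(\hat\theta_0)(\hat\theta_\varepsilon-\hat\theta_0)+\tfrac12\!\int_0^1(1-u)^2K'''_\varepsilon\bigl(\hat\theta_0+u(\hat\theta_\varepsilon-\hat\theta_0)\bigr)\,du\,(\hat\theta_\varepsilon-\hat\theta_0)^2,
\]
and then reads off $|\hat\theta_\varepsilon-\hat\theta_0|\to0$ directly from $K'_\varepsilon(\hat\theta_0)\to x$ and $K'''_\varepsilon\to0$ uniformly on compacts. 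Your route---extract a convergent subsequence, pass to the limit in $K'_\varepsilon(\hat\theta_\varepsilon)=x$ using uniform convergence of $K'_\varepsilon$, and invoke injectivity of $K'_0$---is the standard compactness-plus-uniqueness pattern. It is slightly more economical in that it only uses the $r=1$ part of [A5], whereas the paper also uses $r=3$; on the other hand the paper's argument is quantitative and in the spirit of the later order estimates. Either argument is adequate here, and your treatment of $\hat w_\varepsilon\to\hat w_0$ via continuity is the same as the paper's ``follows immediately''.
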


\begin{proof} 
First, we check that $(\hat{\theta }_\varepsilon )_\varepsilon $ is bounded. 
By (\ref {def_SP}), we have 
\begin{eqnarray*}
\hat{\theta }_\varepsilon 
= (K'_\varepsilon )^{-1}(x) - (K'_\varepsilon )^{-1}(m_\varepsilon ) = 
\int ^1_0\frac{du}{K''_\varepsilon ((K'_\varepsilon )^{-1}(m_\varepsilon  + u(x - m_\varepsilon )))}(x-m_\varepsilon ), 
\end{eqnarray*}
where $m_\varepsilon = K'_\varepsilon (0)$. 
By [A5], we see that $(m_\varepsilon )_\varepsilon $ is bounded. 
Thus, from [A3], we get 
\begin{eqnarray*}
|\hat{\theta }_\varepsilon | \leq \frac{1}{\delta _0}(|x| + \max _\varepsilon |m_\varepsilon |) < \infty . 
\end{eqnarray*}
Second, we observe that 
\begin{eqnarray*}
x - K'_\varepsilon (\hat{\theta }_0) = K''_\varepsilon (\hat{\theta }_0)(\hat{\theta }_\varepsilon - \hat{\theta }_0) + 
\frac{1}{2}\int ^1_0(1-u)^2K'''_\varepsilon (\hat{\theta }_0 + u(\hat{\theta }_\varepsilon - \hat{\theta }_0))du(\hat{\theta }_\varepsilon - \hat{\theta }_0)^2
\end{eqnarray*}
to arrive at 
\begin{eqnarray*}
|\hat{\theta }_\varepsilon - \hat{\theta }_0| \leq 
\frac{1}{\delta _0}\left\{ |x - K'_\varepsilon (\hat{\theta }_0)| + 
\frac{1}{2}\sup _{y\in C}|K'''_\varepsilon (y)|\cdot \sup _\varepsilon |\hat{\theta }_\varepsilon - \hat{\theta }_0|^2\right\} 
\end{eqnarray*}
for some compact set $C\subset \Bbb {R}$. 
Letting $\varepsilon \rightarrow 0$, we get the former assertion. 
The latter assertion follows immediately. 
\end{proof}

The above lemma implies the following corollary.

\begin{corollary}\label{cor_not_zero} \ 
There is a $\delta _1 > 0$ such that 
$\hat{\theta }_\varepsilon , \ \hat{w}_\varepsilon  > 0$ for $\varepsilon \in [0, \delta _1)$. 
\end{corollary}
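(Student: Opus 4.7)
The plan is to deduce both positivity statements directly from Lemma \ref{lem_SP_conti} and the explicit representation (\ref{def_w_eps}) of $\hat{w}_\varepsilon$. Since the subsection fixes the case $\hat{\theta}_0 > 0$ (note that (\ref{cond_not_x}) together with Remark \ref{rem_A5}(iii) gives $K'_0(0) = \int y\,\mu_0(dy) \neq x$, so $\hat{\theta}_0 \neq 0$), Lemma \ref{lem_SP_conti} says $\hat{\theta}_\varepsilon \to \hat{\theta}_0 > 0$; hence there exists $\delta_1 > 0$ such that $\hat{\theta}_\varepsilon > 0$ for every $\varepsilon \in [0, \delta_1)$.

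Next I would handle $\hat{w}_\varepsilon$ using (\ref{def_w_eps}): on $[0, \delta_1)$ we have $\mathrm{sgn}(\hat{\theta}_\varepsilon) = 1$, so
\[
\hat{w}_\varepsilon = \sqrt{2\bigl(x\hat{\theta}_\varepsilon - K_\varepsilon(\hat{\theta}_\varepsilon)\bigr)}.
\]
The integral identity (\ref{calc_K_diff}) then gives
\[
x\hat{\theta}_\varepsilon - K_\varepsilon(\hat{\theta}_\varepsilon) = \hat{\theta}_\varepsilon^2 \int_0^1 (1-u) K''_\varepsilon(-u\hat{\theta}_\varepsilon)\,du,
\]
and assumption [A3] ($K''_\varepsilon \geq \delta_0$) bounds this below by $(\delta_0/2)\hat{\theta}_\varepsilon^2 > 0$. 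Hence $\hat{w}_\varepsilon > 0$ on $[0, \delta_1)$ (shrinking $\delta_1$ if necessary, though in fact no shrinking is required). There is no real obstacle here: the corollary is a two-line consequence of the lemma combined with the definition of $\hat{w}_\varepsilon$ and the uniform positivity of $K''_\varepsilon$ supplied by [A3], with the role of (\ref{cond_not_x}) being to guarantee $\hat{\theta}_0 \neq 0$ in the first place.
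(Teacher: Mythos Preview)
Your argument is correct. For $\hat{\theta}_\varepsilon>0$ you do exactly what the paper does: invoke Lemma~\ref{lem_SP_conti} and the standing hypothesis $\hat{\theta}_0>0$. For $\hat{w}_\varepsilon>0$ your route differs slightly from the paper's. The paper uses the \emph{second} convergence in Lemma~\ref{lem_SP_conti}, namely $\hat{w}_\varepsilon\to\hat{w}_0$, together with the explicit computation $\hat{w}_0=\sqrt{K''_0(0)}\,\hat{\theta}_0>0$ (coming from the Gaussian form of $K_0$), and then argues by continuity. You instead bypass this convergence and work directly from the definition~(\ref{def_w_eps}) and the integral identity~(\ref{calc_K_diff}), using the uniform lower bound in~[A3] to get the quantitative estimate $x\hat{\theta}_\varepsilon-K_\varepsilon(\hat{\theta}_\varepsilon)\ge(\delta_0/2)\hat{\theta}_\varepsilon^2>0$. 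Your version is marginally more self-contained (it does not need the $\hat{w}_\varepsilon\to\hat{w}_0$ part of Lemma~\ref{lem_SP_conti}) and gives a concrete lower bound $\hat{w}_\varepsilon\ge\sqrt{\delta_0}\,\hat{\theta}_\varepsilon$, while the paper's version is a line shorter since both positivity statements are handled by the same convergence mechanism. Either way the corollary is immediate.
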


\begin{proof} 
Since 
$\hat{\theta }_\varepsilon \longrightarrow \hat{\theta }_0 > 0$, we can find some $\delta _1 > 0$ such that 
$\hat{\theta }_\varepsilon > \hat{\theta }_0 / 2 > 0$ holds for $\varepsilon < \delta _1$. 
The relation $\hat{w}_\varepsilon > 0$ is obtained 
in the same way by using $\hat{w}_0 = \sqrt{K''_0(0)}\hat{\theta }_0 > 0$. 
\end{proof}

By the above corollary, 
we may assume that $\hat{\theta }_\varepsilon $ and $\hat{w}_\varepsilon $ are strictly positive. 

\begin{proposition} \ \label{diff_w_theta}
$\hat{w}_\varepsilon - \sqrt{K''_\varepsilon (\hat{\theta }_\varepsilon )}\hat{\theta }_\varepsilon = 
O(\varepsilon )$ as $\varepsilon \rightarrow 0$. 
\end{proposition}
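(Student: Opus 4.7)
The plan is to square both terms, exploit the integral Taylor representation of $\hat{w}_\varepsilon^2$ already recorded in (\ref{calc_K_diff}), and convert the bound on the squared difference back to a bound on the difference via $a-b=(a^2-b^2)/(a+b)$.

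First I would rewrite (\ref{calc_K_diff}) after the change of variable $v=1-u$ so that
\begin{eqnarray*}
\hat{w}_\varepsilon^{2} \ =\ 2\bigl(x\hat{\theta}_\varepsilon - K_\varepsilon(\hat{\theta}_\varepsilon)\bigr)
\ =\ 2\hat{\theta}_\varepsilon^{2}\int_{0}^{1} v\, K''_\varepsilon(v\hat{\theta}_\varepsilon)\,dv,
\end{eqnarray*}
and then use $\int_0^1 2v\,dv=1$ to write
\begin{eqnarray*}
\hat{w}_\varepsilon^{2} - K''_\varepsilon(\hat{\theta}_\varepsilon)\hat{\theta}_\varepsilon^{2}
\ =\ 2\hat{\theta}_\varepsilon^{2}\int_{0}^{1} v\bigl[K''_\varepsilon(v\hat{\theta}_\varepsilon)-K''_\varepsilon(\hat{\theta}_\varepsilon)\bigr]dv.
\end{eqnarray*}
Applying Taylor's theorem to $K''_\varepsilon$ once more, the bracketed term equals
$(v-1)\hat{\theta}_\varepsilon \int_0^1 K'''_\varepsilon\bigl(\hat{\theta}_\varepsilon+s(v-1)\hat{\theta}_\varepsilon\bigr)ds$.
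By Lemma \ref{lem_SP_conti}, $\hat{\theta}_\varepsilon$ stays in a fixed compact neighbourhood of $\hat{\theta}_0$, and by [A5] we have $\sup_{\theta\in C}|K'''_\varepsilon(\theta)|=O(\varepsilon)$ on any compact $C\subset\mathbb{R}$. Hence
\begin{eqnarray*}
\hat{w}_\varepsilon^{2} - K''_\varepsilon(\hat{\theta}_\varepsilon)\hat{\theta}_\varepsilon^{2} \ =\ O(\varepsilon) \quad\text{as}\quad \varepsilon\to 0.
\end{eqnarray*}

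To pass from the squares to the quantities themselves, I would use the factorisation
\begin{eqnarray*}
\hat{w}_\varepsilon - \sqrt{K''_\varepsilon(\hat{\theta}_\varepsilon)}\,\hat{\theta}_\varepsilon
\ =\ \frac{\hat{w}_\varepsilon^{2} - K''_\varepsilon(\hat{\theta}_\varepsilon)\hat{\theta}_\varepsilon^{2}}{\hat{w}_\varepsilon + \sqrt{K''_\varepsilon(\hat{\theta}_\varepsilon)}\,\hat{\theta}_\varepsilon}.
\end{eqnarray*}
The denominator is bounded away from zero: Corollary \ref{cor_not_zero} guarantees $\hat{w}_\varepsilon>0$ and $\hat{\theta}_\varepsilon>0$ for small $\varepsilon$, Lemma \ref{lem_SP_conti} shows they converge to the strictly positive limits $\hat{w}_0$ and $\hat{\theta}_0$, and [A3] gives $K''_\varepsilon(\hat{\theta}_\varepsilon)\geq\delta_0>0$. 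Combining this uniform lower bound with the $O(\varepsilon)$ estimate on the numerator yields the claim.

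The argument is essentially routine; the only mildly delicate point is the uniform positivity of the denominator, which is the reason the condition (\ref{cond_not_x}) (forcing $\hat{\theta}_0\neq 0$ through Corollary \ref{cor_not_zero}) is essential. Were it to fail, $\hat{w}_\varepsilon$ could vanish to first order and the factorisation above would not produce an $O(\varepsilon)$ bound.
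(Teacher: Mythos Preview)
Your proof is correct and follows essentially the same route as the paper: reduce to showing $\hat{w}_\varepsilon^{2}-K''_\varepsilon(\hat{\theta}_\varepsilon)\hat{\theta}_\varepsilon^{2}=O(\varepsilon)$ via a Taylor remainder involving $K'''_\varepsilon$ and condition [A5], then recover the unsquared estimate using that $\hat{w}_\varepsilon$ and $\hat{\theta}_\varepsilon$ are bounded away from zero (Lemma~\ref{lem_SP_conti} and Corollary~\ref{cor_not_zero}). The paper expands $K_\varepsilon(0)$ directly to third order around $\hat{\theta}_\varepsilon$, whereas you start from the second-order identity (\ref{calc_K_diff}) and then apply the mean-value theorem to $K''_\varepsilon$; the two computations are equivalent, and your explicit factorisation $a-b=(a^{2}-b^{2})/(a+b)$ just spells out what the paper summarises as ``it suffices to show\ldots''.
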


\begin{proof} 
Since $(\hat{\theta }_\varepsilon )_\varepsilon $ and 
$(\hat{w}_\varepsilon )_\varepsilon $ are bounded and away from zero, 
it suffices to show that 
$\hat{w}^2_\varepsilon - K''_\varepsilon (\hat{\theta }_\varepsilon )\hat{\theta }^2_\varepsilon = O(\varepsilon )$ 
as $\varepsilon \rightarrow 0$. 
From the definition of $\hat{w}_\varepsilon $, we have 
\begin{eqnarray*}
\hat{w}_\varepsilon ^2 = 2(-K_\varepsilon (\hat{\theta }_\varepsilon ) + K'_\varepsilon (\hat{\theta }_\varepsilon )\hat{\theta }_\varepsilon ). 
\end{eqnarray*}
Using $K_\varepsilon (0) = 0$ and Taylor's theorem, we get 
\begin{eqnarray*}
\hat{w}^2_\varepsilon - K''_\varepsilon (\hat{\theta }_\varepsilon )\hat{\theta }^2_\varepsilon = 
\hat{\theta }_\varepsilon ^2\int ^1_0K'''_\varepsilon (-u\hat{\theta }_\varepsilon )(1-u)^2du. 
\end{eqnarray*}
Therefore, 
\begin{eqnarray*}
|\hat{w}^2_\varepsilon - K''_\varepsilon (\hat{\theta }_\varepsilon )\hat{\theta }^2_\varepsilon |\leq 
\sup _{|y|\leq \hat{\theta }_\varepsilon }y^2|K'''_\varepsilon (y)| = O(\varepsilon ) \ \ \mbox {as} \ \ \varepsilon \rightarrow 0, 
\end{eqnarray*}
from which our assertion follows. 
\end{proof}

We write 
\begin{eqnarray*}
\hat{\theta }'_\varepsilon  = \frac{d\theta }{dw}(\hat{w}_\varepsilon ) = 
\lim _{w\rightarrow \hat{w}_\varepsilon }\frac{d\theta }{dw}(\hat{w}_\varepsilon ) . 
\end{eqnarray*}
Note that $\hat{\theta }'_\varepsilon $ exists, because 
$\theta (w)$ is analytic at $\hat{w}_\varepsilon $. 
Similarly, we can define 
\begin{eqnarray*}
\hat{\theta }^{(n)}_\varepsilon  = \frac{d^n\theta }{dw^n}(\hat{w}_\varepsilon ) = 
\lim _{w\rightarrow \hat{w}_\varepsilon }\frac{d^n\theta }{dw^n}(\hat{w}_\varepsilon ) 
\end{eqnarray*}
for each $n$. 
The next proposition is frequently used in the calculations shown later. 

\begin{proposition}\label{first_diff_theta} \ 
$\hat{\theta }'_\varepsilon  = 1/\sqrt{K''(\hat{\theta }_\varepsilon )}$. 
\end{proposition}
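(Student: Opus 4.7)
The plan is to exploit the defining change of variables (\ref{change_variable}) by differentiating it implicitly at $w=\hat{w}_\varepsilon$. Starting from the identity
\begin{eqnarray*}
\tfrac{1}{2}w^2 - \hat{w}_\varepsilon w \;=\; K_\varepsilon(\theta(w)) - x\,\theta(w),
\end{eqnarray*}
a first differentiation in $w$ yields $w - \hat{w}_\varepsilon = (K'_\varepsilon(\theta(w)) - x)\,\theta'(w)$, which is exactly (\ref{theta_diff}). Both sides vanish at $w=\hat{w}_\varepsilon$ because of the saddlepoint equation (\ref{def_SP}), so a second differentiation gives
\begin{eqnarray*}
1 \;=\; K''_\varepsilon(\theta(w))\,\theta'(w)^2 + (K'_\varepsilon(\theta(w))-x)\,\theta''(w).
\end{eqnarray*}
Evaluating at $w=\hat{w}_\varepsilon$ kills the second term and leaves $K''_\varepsilon(\hat{\theta}_\varepsilon)\,(\hat{\theta}'_\varepsilon)^2 = 1$, so by [A3],
\begin{eqnarray*}
\hat{\theta}'_\varepsilon \;=\; \pm\,\frac{1}{\sqrt{K''_\varepsilon(\hat{\theta}_\varepsilon)}}.
\end{eqnarray*}

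The only remaining point is to pin down the sign. For this I would invoke the real-analytic representation established in Subsection \ref{sec_proof_LR},
\begin{eqnarray*}
w(\theta) \;=\; \hat{w}_\varepsilon + \sqrt{k_\varepsilon(\theta)}\,(\theta - \hat{\theta}_\varepsilon),\qquad
k_\varepsilon(\theta) = \int_0^1 K''_\varepsilon(\hat{\theta}_\varepsilon + u(\theta-\hat{\theta}_\varepsilon))\,du,
\end{eqnarray*}
which is real-analytic on $\mathcal{O}_\varepsilon$ because $k_\varepsilon>0$ by [A3]. Differentiating at $\theta=\hat{\theta}_\varepsilon$ gives $w'(\hat{\theta}_\varepsilon) = \sqrt{k_\varepsilon(\hat{\theta}_\varepsilon)} = \sqrt{K''_\varepsilon(\hat{\theta}_\varepsilon)} > 0$. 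By the real-analytic inverse function theorem, $\theta(w)$ is therefore the local inverse and $\hat{\theta}'_\varepsilon = 1/w'(\hat{\theta}_\varepsilon)$ is strictly positive, forcing the $+$ sign.

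Neither step has real difficulty: the implicit differentiation is essentially algebraic, and [A3] guarantees that $K''_\varepsilon$ is bounded below by $\delta_0$ so the square root is well defined and real. The only mild subtlety is the sign choice, and that is settled cleanly by the $\sqrt{k_\varepsilon(\theta)}$ representation noted above rather than by any new estimate. Alternatively, one can obtain the same conclusion directly by applying l'H\^opital's rule to (\ref{theta_diff}) — writing $\hat{\theta}'_\varepsilon = \lim_{w\to\hat{w}_\varepsilon}(w-\hat{w}_\varepsilon)/(K'_\varepsilon(\theta(w))-x) = \lim 1/(K''_\varepsilon(\theta(w))\theta'(w)) = 1/(K''_\varepsilon(\hat{\theta}_\varepsilon)\hat{\theta}'_\varepsilon)$, which is the same quadratic equation — and then using the positive-slope argument above to choose the root.
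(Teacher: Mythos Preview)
Your proposal is correct and takes essentially the same route as the paper: both arrive at the quadratic relation $(\hat{\theta}'_\varepsilon)^2 K''_\varepsilon(\hat{\theta}_\varepsilon)=1$, the paper by applying l'H\^opital to (\ref{theta_diff}) and you by differentiating the change-of-variables identity twice (and you even list the l'H\^opital computation as your alternative). The one place you are actually more careful than the paper is the sign determination---the paper's proof just says ``solving this equation'' and silently takes the positive root, whereas you invoke the $\sqrt{k_\varepsilon(\theta)}$ representation from Subsection~\ref{sec_proof_LR} to pin it down; that is exactly the right justification and is implicit in the paper's earlier discussion of $w'(\hat{\theta})$.
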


\begin{proof} 
Since both the numerator and the denominator in the right-hand side of (\ref {theta_diff}) 
converge to zero with $w\rightarrow \hat{w}_\varepsilon $, 
we can apply l'H\^opital's rule to obtain 
\begin{eqnarray*}
\hat{\theta }'_\varepsilon = 
\lim _{w\rightarrow \hat{w}_\varepsilon }\frac{w - \hat{w}_\varepsilon }{K'(\theta (w)) - x} = 
\lim _{w\rightarrow \hat{w}_\varepsilon }\frac{1}{K''(\theta (w))\theta '(w)} = 
\frac{1}{K''(\hat{\theta }_\varepsilon )\hat{\theta }'_\varepsilon }. 
\end{eqnarray*}
Solving this equation for $\hat{\theta }'_\varepsilon $, we obtain the desired assertion. 
\end{proof}

Recall that the function $g(w)$ defined in (\ref {def_g}) is analytic on 
$\hat{\mathcal {O}}_{\varepsilon  ,+} := 
\{ w(\theta )\ ; \ \theta \in \mathcal {O}_\varepsilon \cap (0, \infty ) \}$. 
The following lemma is straightforward by using mathematical induction.

\begin{lemma} \ \label{lem_g_diff}For each $n = 0, 1, 2, \ldots $ and 
$w\in \hat{\mathcal {O}}_{\varepsilon  ,+}$, 
\begin{eqnarray*}
g^{(n)}(w) = \frac{\theta ^{(n)}(w) - ng^{(n-1)}(w)}{w}. 
\end{eqnarray*}
\end{lemma}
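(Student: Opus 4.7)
The plan is to derive the identity directly from the definition $\theta(w) = w \, g(w)$ (equation (\ref{def_g})) by applying the general Leibniz rule for the $n$-th derivative of a product. The author suggests induction, but in fact a one-line argument suffices, and induction is then essentially redundant.

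First I would note that for $n=0$ the asserted identity $g(w) = \theta(w)/w$ is simply the definition of $g$ in (\ref{def_g}). For $n \ge 1$, I would apply the Leibniz rule to $\theta(w) = w \cdot g(w)$:
\begin{eqnarray*}
\theta^{(n)}(w) = \sum_{k=0}^{n} \binom{n}{k} \frac{d^k w}{dw^k}\, g^{(n-k)}(w).
\end{eqnarray*}
Since $dw/dw = 1$ and $d^k w / dw^k = 0$ for all $k \ge 2$, only the terms $k=0$ and $k=1$ contribute, yielding
\begin{eqnarray*}
\theta^{(n)}(w) = w\, g^{(n)}(w) + n\, g^{(n-1)}(w),
\end{eqnarray*}
which, solved for $g^{(n)}(w)$ and using that $w \neq 0$ on $\hat{\mathcal{O}}_{\varepsilon,+}$ (since this set is the image under $w(\theta)$ of $\{\theta>0\}\cap \mathcal{O}_\varepsilon$ and we have already established $\hat{w}_\varepsilon > 0$ so $w$ stays away from $0$), gives the desired formula.

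Alternatively, if one prefers the stated induction, the base case $n=0$ is the definition, and the inductive step is obtained by differentiating $w\, g^{(n)}(w) = \theta^{(n)}(w) - n\, g^{(n-1)}(w)$ once more and rearranging. In either presentation there is no real obstacle: the only thing to guard against is division by zero, which is excluded because $w\in \hat{\mathcal{O}}_{\varepsilon,+}$ corresponds under the change of variables (\ref{change_variable}) to $\theta>0$, where the analysis of Section \ref{sec_proof_LR} already keeps $w$ bounded away from the origin. The lemma is purely a bookkeeping identity that will be used in the next subsection to express $\psi_\varepsilon^{(2m)}(\hat{w}_\varepsilon)$ — and hence $\Psi^\varepsilon_m(\hat{w}_\varepsilon)$ — in terms of the derivatives $\hat{\theta}^{(n)}_\varepsilon$, which in turn will be controlled via the order estimates in [A5].
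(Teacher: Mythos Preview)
Your proof is correct and essentially matches the paper's approach: the paper simply states the lemma is ``straightforward by using mathematical induction,'' and your Leibniz-rule argument applied to $\theta(w)=w\,g(w)$ is just the closed-form version of that induction. The one point you might tighten is the justification that $w\neq 0$ on $\hat{\mathcal{O}}_{\varepsilon,+}$: this follows because $w(\cdot)$ is strictly increasing on $\mathcal{O}_\varepsilon$ with $w(0)=0$, so the image of $\mathcal{O}_\varepsilon\cap(0,\infty)$ lies in $(0,\infty)$, rather than merely from $\hat{w}_\varepsilon>0$.
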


Note that $g(\hat{w}_\varepsilon ) = \hat{\theta }_\varepsilon / \hat{w}_\varepsilon > 0$. 
Therefore, we can define $h(w) = \log g(w)$ on a neighbourhood of $\hat{w}_\varepsilon $. 
Obviously, we have $\psi (w) = h'(w)$. Hence, 
\begin{eqnarray}\label{temp2}
\psi ^{(2m)}(\hat{w}_\varepsilon ) = h^{(2m+1)}(\hat{w}_\varepsilon ). 
\end{eqnarray}
Different but nevertheless 
straightforward calculations give 
\begin{eqnarray}\nonumber 
h'(w) &=& \frac{g'(w)}{g(w)}, \\\label{temp3}
h''(w) &=& \frac{g''(w)}{g(w)} - \frac{(g'(w))^2}{g(w)^2}, \\\nonumber 
h'''(w) &=& 
\frac{g'''(w)}{g(w)} - \frac{3g'(w)g''(w)}{g(w)^2} + \frac{2(g'(w))^3}{g(w)^3}. 
\end{eqnarray}
We can show by induction the following. 
\begin{lemma}\label{lem_h_diff} \ For each $n$ and $w\in \hat{\mathcal {O}}_{\varepsilon  ,+}$, 
\begin{eqnarray*}
h^{(n)}(w) = \sum ^{m_n}_{k = 1}\left( \frac{a_k}{g(w)^{b_k}}\prod ^n_{i=0}(g^{(c_{i,k})}(w))^{d_{i,k}}\right) 
\end{eqnarray*}
for some $m_n, a_k, b_k, c_{i, k}$ and $d_{i, k}$ with $\sum ^n_{i=0}c_{i_k}d_{i_k} = n$. 
\end{lemma}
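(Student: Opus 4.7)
The plan is to prove the lemma by straightforward induction on $n$. The base case $n=1$ is exactly the first line of (\ref{temp3}): $h'(w) = g'(w)/g(w)$ has the required form with $m_1 = 1$, $a_1 = 1$, $b_1 = 1$, and a single factor $(g^{(1)}(w))^1$, so the weight is $\sum_i c_{i,1}d_{i,1} = 1\cdot 1 = 1$, matching $n=1$. The second and third identities in (\ref{temp3}) serve as useful sanity checks for $n=2,3$ and also indicate the shape of the constants $a_k, b_k$ that will appear.

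For the inductive step, assume the formula holds at level $n$ and differentiate each summand using the product and chain rules. A generic summand
\begin{equation*}
\frac{a_k}{g(w)^{b_k}}\prod_{i=0}^{n}(g^{(c_{i,k})}(w))^{d_{i,k}}
\end{equation*}
produces two types of descendants. First, differentiating the denominator factor $g(w)^{-b_k}$ yields $-b_k a_k g(w)^{-b_k-1} g'(w)\prod_i (g^{(c_{i,k})})^{d_{i,k}}$, which increases $b_k$ by one and appends an extra factor of $g^{(1)}$; the weight becomes $\sum_i c_{i,k}d_{i,k} + 1 = n+1$. Second, differentiating one factor $(g^{(c_{j,k})})^{d_{j,k}}$ produces a term proportional to $g(w)^{-b_k}(g^{(c_{j,k})})^{d_{j,k}-1}g^{(c_{j,k}+1)}\prod_{i\neq j}(g^{(c_{i,k})})^{d_{i,k}}$; this lowers the exponent of $g^{(c_{j,k})}$ by one and raises the exponent of $g^{(c_{j,k}+1)}$ by one, changing the weight by $(c_{j,k}+1)-c_{j,k} = 1$ and again giving $n+1$. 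Summing over all $k$ and over the internal index $j$ yields an expression of the same form as the claim, now with weight $n+1$, with new constants and an updated $m_{n+1}$.

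The main obstacle is purely bookkeeping rather than substance: one must allow some exponents $d_{i,k}$ to equal zero so that a common range of index slots $i = 0,1,\ldots, n+1$ is available across all summands, and if some newly appearing derivative order $c_{j,k}+1$ coincides with an existing $c_{i,k}$, the corresponding factors should be merged by adding their exponents. Both of these operations preserve the weight identity $\sum_i c_{i,k}d_{i,k} = n+1$, so the induction closes and the lemma follows. Note that the formula $g^{(n)}(w) = (\theta^{(n)}(w) - n g^{(n-1)}(w))/w$ from Lemma \ref{lem_g_diff} is not needed in this argument; it will be used later to translate the structural statement here into bounds on $\psi^{(2m)}(\hat{w}_\varepsilon) = h^{(2m+1)}(\hat{w}_\varepsilon)$ via (\ref{temp2}).
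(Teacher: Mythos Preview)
Your proposal is correct and follows the same inductive approach indicated by the paper, which simply states ``We can show by induction the following'' without spelling out the details. Your verification of the base case via the first identity in (\ref{temp3}) and your term-by-term differentiation in the inductive step (tracking how the weight $\sum_i c_{i,k}d_{i,k}$ increases by exactly one under each of the product-rule contributions) is precisely the argument the paper leaves to the reader.
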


By (\ref {temp2}), Lemmas \ref {lem_SP_conti} and \ref {lem_h_diff}, 
it suffices to consider the estimation of the order of $g^{(m)}(\hat{w}_\varepsilon )$ for $m\in \Bbb {N}$. 
The next proposition gives the order estimate of $g'(\hat{w}_\varepsilon )$.

\begin{proposition}\label{prop_g_diff_1} \ 
$g'(\hat{w}_\varepsilon ) = O(\varepsilon )$ as $\varepsilon \rightarrow 0$. 
\end{proposition}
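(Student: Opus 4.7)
The plan is to reduce the claim to Proposition \ref{diff_w_theta} by a short algebraic manipulation. Applying Lemma \ref{lem_g_diff} with $n=1$ at $w=\hat{w}_\varepsilon$ gives
\begin{eqnarray*}
g'(\hat{w}_\varepsilon) \;=\; \frac{\hat{\theta}'_\varepsilon - g(\hat{w}_\varepsilon)}{\hat{w}_\varepsilon} \;=\; \frac{\hat{\theta}'_\varepsilon - \hat{\theta}_\varepsilon/\hat{w}_\varepsilon}{\hat{w}_\varepsilon},
\end{eqnarray*}
and then Proposition \ref{first_diff_theta} replaces $\hat{\theta}'_\varepsilon$ by $1/\sqrt{K''_\varepsilon(\hat{\theta}_\varepsilon)}$. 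Putting the two terms of the numerator over a common denominator, one obtains
\begin{eqnarray*}
g'(\hat{w}_\varepsilon) \;=\; \frac{\hat{w}_\varepsilon - \sqrt{K''_\varepsilon(\hat{\theta}_\varepsilon)}\,\hat{\theta}_\varepsilon}{\sqrt{K''_\varepsilon(\hat{\theta}_\varepsilon)}\;\hat{w}_\varepsilon^{\,2}}.
\end{eqnarray*}

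Next, I would show that the prefactor $1/\bigl(\sqrt{K''_\varepsilon(\hat{\theta}_\varepsilon)}\,\hat{w}_\varepsilon^{\,2}\bigr)$ is bounded uniformly in $\varepsilon$ small. By Lemma \ref{lem_SP_conti} the sequence $(\hat{\theta}_\varepsilon)_\varepsilon$ is bounded, so it lies in a fixed compact set $C\subset\mathbb{R}$; on $C$, [A5] gives $K''_\varepsilon\to K''_0$ uniformly, hence $\sup_\varepsilon K''_\varepsilon(\hat{\theta}_\varepsilon)<\infty$, while [A3] gives $K''_\varepsilon(\hat{\theta}_\varepsilon)\geq \delta_0>0$. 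Corollary \ref{cor_not_zero} together with Lemma \ref{lem_SP_conti} ensures $\hat{w}_\varepsilon\to\hat{w}_0>0$, so $\hat{w}_\varepsilon^{\,2}$ is bounded above and bounded away from $0$. Consequently the denominator is both bounded above and bounded away from zero.

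Finally, Proposition \ref{diff_w_theta} gives $\hat{w}_\varepsilon - \sqrt{K''_\varepsilon(\hat{\theta}_\varepsilon)}\,\hat{\theta}_\varepsilon = O(\varepsilon)$, which together with the two-sided control of the denominator yields $g'(\hat{w}_\varepsilon)=O(\varepsilon)$ as $\varepsilon\to 0$.

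There is no real obstacle here: the entire content of the proposition is already packaged in Proposition \ref{diff_w_theta}, so the only task is to (i) identify $g'(\hat{w}_\varepsilon)$ with the difference $1/\sqrt{K''_\varepsilon(\hat{\theta}_\varepsilon)}-\hat{\theta}_\varepsilon/\hat{w}_\varepsilon$ divided by $\hat{w}_\varepsilon$, and (ii) check that the rational expression in $\hat{w}_\varepsilon$ and $K''_\varepsilon(\hat{\theta}_\varepsilon)$ does not introduce spurious degeneracies as $\varepsilon\to0$. The latter is the only subtle point, and it is handled by combining [A3] (lower bound on $K''_\varepsilon$), [A5] (uniform boundedness on compacts), and the positivity of $\hat{w}_0$ coming from $\hat{\theta}_0>0$.
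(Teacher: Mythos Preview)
Your proof is correct and follows essentially the same route as the paper: both use Lemma \ref{lem_g_diff} with $n=1$, Proposition \ref{first_diff_theta}, and Proposition \ref{diff_w_theta} to arrive at the identity $g'(\hat{w}_\varepsilon)=\bigl(\hat{w}_\varepsilon-\sqrt{K''_\varepsilon(\hat{\theta}_\varepsilon)}\,\hat{\theta}_\varepsilon\bigr)\big/\bigl(\hat{w}_\varepsilon^{2}\sqrt{K''_\varepsilon(\hat{\theta}_\varepsilon)}\bigr)$. The only difference is cosmetic: the paper first substitutes the expression (\ref{theta_diff}) for $\theta'(w)$ and then recovers the value at $\hat{w}_\varepsilon$ via l'H\^opital, whereas you evaluate Lemma \ref{lem_g_diff} directly at $\hat{w}_\varepsilon$ and plug in $\hat{\theta}'_\varepsilon=1/\sqrt{K''_\varepsilon(\hat{\theta}_\varepsilon)}$, which is a slight streamlining.
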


\begin{proof} 
By Lemma \ref {lem_g_diff}, we have 
\begin{eqnarray*}
wg'(w) = \theta '(w) - g(w) = \frac{w\theta '(w) - \theta (w)}{w}. 
\end{eqnarray*}
Combining this with (\ref {theta_diff}), we get 
\begin{eqnarray}\label{temp_1_0}
wg'(w) = \frac{w(w-\hat{w}_\varepsilon ) - \theta (w)(K'(\theta (w)) - x)}{w(K'(\theta (w)) - x)}. 
\end{eqnarray}
Letting $w\rightarrow \hat{w}_\varepsilon $, 
both the numerator and the denominator of the right-hand side of (\ref {temp_1_0}) converge to zero. 
Then, we can apply l'H\^opital's rule to obtain 
\begin{eqnarray}\nonumber 
\lim _{w\rightarrow \hat{w}_\varepsilon }wg'(w) &=& 
\lim _{w\rightarrow \hat{w}_\varepsilon }
\frac{2w - \hat{w}_\varepsilon  - \theta '(w)(K'(\theta (w)) - x) - \theta (w)K''(\theta (w))\theta '(w)}
{K'(\theta (w)) - x + wK''(\theta (w))\theta '(w)}\\
&=& 
\frac{\hat{w}_\varepsilon - \hat{\theta }_\varepsilon K''(\hat{\theta }_\varepsilon )\hat{\theta }'_\varepsilon }
{\hat{w}_\varepsilon K''(\hat{\theta }_\varepsilon )\hat{\theta }'_\varepsilon } . 
\label{temp_1_1}
\end{eqnarray}
By Proposition \ref {first_diff_theta} and (\ref {temp_1_1}), we see that 
$g'(\hat{w}_\varepsilon ) = \lim _{w \rightarrow \hat{w}_\varepsilon }g'(w)$ exists and can be given as 
\begin{eqnarray}\label{temp_1_3}
g'(\hat{w}_\varepsilon )  = 
\frac{\hat{w}_\varepsilon  - \sqrt{K''(\hat{\theta }_\varepsilon )}\hat{\theta }_\varepsilon }{\hat{w}_\varepsilon ^2\sqrt{K''(\hat{\theta }_\varepsilon )}}. 
\end{eqnarray}
Our assertion follows from (\ref {temp_1_3}) and Proposition \ref {diff_w_theta}. 
\end{proof}

Differentiating both sides of (\ref {theta_diff}) with respect to $w$, we get the following proposition. 

\begin{proposition} \ \label{prop_2_diff_theta} For 
$w\in \hat{\mathcal {O}}_{\varepsilon  ,+}\setminus \{\hat{w}_\varepsilon \}$, 
\begin{eqnarray}\label{2_diff_theta}
\theta ''(w) =  
\frac{1 - (\theta '(w))^2K''(\theta (w))}{K'(\theta (w)) - x}. 
\end{eqnarray}
\end{proposition}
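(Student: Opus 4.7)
The plan is to prove the identity by straightforward differentiation of the relation between $\theta$ and $w$ given in (\ref{theta_diff}), and then solve for $\theta''(w)$. Specifically, equation (\ref{theta_diff}) can be rewritten as
\[
w - \hat{w}_\varepsilon \;=\; \bigl(K'(\theta(w)) - x\bigr)\,\theta'(w),
\]
which holds on all of $\hat{\mathcal{O}}_{\varepsilon,+}$ (including at $\hat{w}_\varepsilon$, since both sides vanish there). The first step is to differentiate this identity with respect to $w$, using the chain rule on $K'(\theta(w))$, which gives
\[
1 \;=\; K''(\theta(w))\,(\theta'(w))^2 \;+\; \bigl(K'(\theta(w)) - x\bigr)\,\theta''(w).
\]
The second step is simply to solve the resulting linear equation for $\theta''(w)$, producing exactly the formula (\ref{2_diff_theta}).

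For the algebraic step to be legitimate, one needs $K'(\theta(w)) - x \neq 0$, which is where the hypothesis $w \neq \hat{w}_\varepsilon$ enters. Since $\theta$ is locally invertible (by the analysis in Subsection \ref{sec_proof_LR} showing $w'(\hat\theta_\varepsilon) \neq 0$), the assumption $w \neq \hat{w}_\varepsilon$ translates into $\theta(w) \neq \hat\theta_\varepsilon$, and then Corollary \ref{cor_SP_unique} guarantees $K'(\theta(w)) \neq x$ on $\mathcal{O}_\varepsilon \times i\mathbb{R}$. Thus the division is valid precisely on the set specified in the statement.

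There is no real obstacle here: the entire argument is a one-line application of the chain rule plus a justification that dividing by $K'(\theta(w))-x$ is permissible. The only subtlety worth flagging is to start from the form $(w-\hat{w}_\varepsilon) = (K'(\theta(w))-x)\,\theta'(w)$ rather than from (\ref{theta_diff}) itself, since the former avoids a $0/0$ artifact at $\hat{w}_\varepsilon$ during differentiation and makes the chain-rule application cleanest.
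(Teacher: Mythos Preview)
Your proof is correct and follows essentially the same approach as the paper, which simply says ``Differentiating both sides of (\ref{theta_diff}) with respect to $w$.'' Your choice to rewrite (\ref{theta_diff}) in product form before differentiating is a cosmetic variation of the same computation, and your justification that $K'(\theta(w))-x\neq 0$ for $w\neq\hat w_\varepsilon$ is a welcome bit of extra care.
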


By (\ref {temp_1_0}) and Propositions \ref {lem_g_diff} and \ref {prop_2_diff_theta}, we obtain the following.

\begin{proposition}\label{prop_2_diff_g} \ For 
$w\in \hat{\mathcal {O}}_{\varepsilon  ,+}\setminus \{\hat{w}_\varepsilon \}$, 
\begin{eqnarray*}
g''(w) = 
\frac{w^2(1-(\theta ')^2K''(\theta )) - 2(K'(\theta ) - x)(w\theta ' - \theta )}{w^3(K'(\theta ) - x)}, 
\end{eqnarray*}
with $\theta = \theta (w)$ and $\theta ' = \theta '(w)$ for brevity. 
\end{proposition}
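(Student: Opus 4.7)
The plan is to derive $g''(w)$ by directly combining the $n=2$ instance of Lemma~\ref{lem_g_diff} with the formula for $\theta''(w)$ supplied by Proposition~\ref{prop_2_diff_theta}, and then simplifying. Specifically, Lemma~\ref{lem_g_diff} with $n=2$ gives
\begin{eqnarray*}
g''(w) = \frac{\theta''(w) - 2g'(w)}{w},
\end{eqnarray*}
so the task reduces to expressing both $g'(w)$ and $\theta''(w)$ in terms of $\theta(w)$, $\theta'(w)$ and the derivatives of $K$.

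First I would invoke the $n=1$ case of Lemma~\ref{lem_g_diff} (already exploited in the proof of Proposition~\ref{prop_g_diff_1}), which yields
\begin{eqnarray*}
g'(w) \;=\; \frac{\theta'(w) - g(w)}{w} \;=\; \frac{w\theta'(w) - \theta(w)}{w^2}.
\end{eqnarray*}
Next I would substitute this expression for $g'(w)$, together with
\begin{eqnarray*}
\theta''(w) = \frac{1 - (\theta'(w))^2 K''(\theta(w))}{K'(\theta(w)) - x}
\end{eqnarray*}
from Proposition~\ref{prop_2_diff_theta}, into the display for $g''(w)$ above. This produces
\begin{eqnarray*}
g''(w) \;=\; \frac{1}{w}\cdot \frac{1 - (\theta')^2 K''(\theta)}{K'(\theta) - x} \;-\; \frac{2(w\theta' - \theta)}{w^3},
\end{eqnarray*}
and placing the two fractions over the common denominator $w^3(K'(\theta) - x)$ gives exactly the claimed identity.

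Since the argument is essentially a symbolic manipulation, there is no real obstacle; the main thing to watch is that all operations are legitimate. On $\hat{\mathcal{O}}_{\varepsilon,+}\setminus\{\hat{w}_\varepsilon\}$ the denominator $K'(\theta(w)) - x$ is nonzero by Corollary~\ref{cor_SP_unique}, so Proposition~\ref{prop_2_diff_theta} is applicable; the factor $w$ in the denominator is harmless because, as noted in Section~\ref{sec_proof_LR}, the curve $\gamma$ stays away from $0$ under the running assumption $\hat{\theta}_\varepsilon > 0$. Hence the formula follows by straightforward algebraic combination, and no further analytic input is required beyond Lemma~\ref{lem_g_diff} and Proposition~\ref{prop_2_diff_theta}.
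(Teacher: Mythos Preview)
Your proof is correct and follows essentially the same route as the paper: the paper derives the proposition by combining Lemma~\ref{lem_g_diff} (for $n=2$), the formula for $g'(w)$ obtained earlier as~(\ref{temp_1_0}), and Proposition~\ref{prop_2_diff_theta}, which is exactly what you do (your use of Lemma~\ref{lem_g_diff} with $n=1$ is equivalent to invoking~(\ref{temp_1_0})). The algebraic simplification you carry out is correct and matches the stated formula.
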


Next, we consider the second derivative $\hat{\theta }''_\varepsilon = \hat{\theta }^{(2)}_\varepsilon $ 
of $\theta (w)$ at $\hat{w}_\varepsilon $. 

\begin{proposition}\label{second_diff_theta} \ 
\begin{eqnarray}\label{hat_theta_2}
\hat{\theta }''_\varepsilon = -\frac{K'''(\hat{\theta }_\varepsilon )}{3(K''(\hat{\theta }_\varepsilon ))^2}. 
\end{eqnarray}
\end{proposition}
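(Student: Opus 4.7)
The plan is to extract $\hat\theta''_\varepsilon$ from the identity (\ref{2_diff_theta}) of Proposition \ref{prop_2_diff_theta} by taking the limit $w\to\hat{w}_\varepsilon$ using l'H\^opital's rule, exactly as was done for $\hat\theta'_\varepsilon$ in Proposition \ref{first_diff_theta}.

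First I would observe that the formula
\begin{eqnarray*}
\theta''(w) = \frac{1 - (\theta'(w))^2 K''(\theta(w))}{K'(\theta(w)) - x}
\end{eqnarray*}
is a $0/0$ indeterminate form as $w\to\hat{w}_\varepsilon$: the denominator vanishes because $K'(\hat\theta_\varepsilon) = x$ by definition of the saddlepoint, while the numerator vanishes because Proposition \ref{first_diff_theta} gives $(\hat\theta'_\varepsilon)^2 K''(\hat\theta_\varepsilon) = 1$. Both numerator and denominator are analytic in $w$ at $\hat w_\varepsilon$ (since $\theta(w)$ is), so l'H\^opital's rule applies.

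Next I would differentiate numerator and denominator in $w$. The denominator gives $K''(\theta(w))\theta'(w)$. The numerator gives $-2\theta'(w)\theta''(w) K''(\theta(w)) - (\theta'(w))^3 K'''(\theta(w))$. Dividing and evaluating at $w=\hat w_\varepsilon$ yields
\begin{eqnarray*}
\hat\theta''_\varepsilon \ = \ -2\hat\theta''_\varepsilon - \frac{(\hat\theta'_\varepsilon)^2 K'''(\hat\theta_\varepsilon)}{K''(\hat\theta_\varepsilon)},
\end{eqnarray*}
which is a self-referential linear equation for $\hat\theta''_\varepsilon$. Substituting $(\hat\theta'_\varepsilon)^2 = 1/K''(\hat\theta_\varepsilon)$ from Proposition \ref{first_diff_theta} and solving gives exactly (\ref{hat_theta_2}).

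The only subtlety — and the step I would be most careful about — is the legitimacy of l'H\^opital's rule here: one has to be sure that $\theta''(w)$ itself is continuous at $\hat w_\varepsilon$ and that the one-sided limit coincides with the value prescribed by analyticity of $\theta(w)$ at $\hat w_\varepsilon$. This was already established in Section \ref{sec_proof_LR}, where $\theta(w)$ was shown to be holomorphic on a neighbourhood of $\hat w_\varepsilon$ obtained by inverting the real-analytic map $\theta\mapsto w(\theta)$; hence all derivatives $\hat\theta^{(n)}_\varepsilon$ exist and can be computed by taking the limit from the right as in the statement of the proposition. Beyond this, the argument is a purely mechanical l'H\^opital calculation with no analytic obstacle.
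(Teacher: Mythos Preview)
Your proposal is correct and follows essentially the same approach as the paper: apply l'H\^opital's rule to the $0/0$ form in (\ref{2_diff_theta}), obtain the self-referential equation $\hat\theta''_\varepsilon = -2\hat\theta''_\varepsilon - K'''(\hat\theta_\varepsilon)/(K''(\hat\theta_\varepsilon))^2$, and solve. Your added remarks on the legitimacy of l'H\^opital via the analyticity of $\theta(w)$ near $\hat w_\varepsilon$ are accurate and make the argument slightly more explicit than the paper's version.
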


\begin{proof} 
Apply l'H\^opital's rule for (\ref {2_diff_theta}) and observe that
\begin{eqnarray*}
\hat{\theta }''_\varepsilon  
&=& 
-\lim _{w\rightarrow \hat{w}_\varepsilon }
\frac{2\theta '(w)\theta ''(w)K''(\theta (w)) +(\theta '(w))^3K'''(\theta (w))}{K''(\theta (w))\theta '(w)}\\
&=& 
-2\hat{\theta }_\varepsilon ''
-\frac{K'''(\hat{\theta }_\varepsilon )}{(K''(\hat{\theta }_\varepsilon ))^2}. 
\end{eqnarray*}
We then obtain our assertion by solving the above equation for $\hat{\theta }''_\varepsilon $. 
\end{proof}

\begin{proposition} \ \label{prop_g2}
$g''(\hat{w}_\varepsilon ) = O(\varepsilon ^2)$ as $\varepsilon \rightarrow 0$. 
\end{proposition}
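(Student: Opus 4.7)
The plan is to derive a closed-form expression for $g''(\hat{w}_\varepsilon)$ using the preceding propositions, and then show that two contributions that are individually of order $O(\varepsilon)$ cancel to leave an $O(\varepsilon^2)$ remainder.

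Taking the case $n=2$ of Lemma \ref{lem_g_diff} gives $g''(\hat{w}_\varepsilon)=(\hat{\theta}''_\varepsilon-2g'(\hat{w}_\varepsilon))/\hat{w}_\varepsilon$. Substituting the explicit formula $(\ref{temp_1_3})$ for $g'(\hat{w}_\varepsilon)$ from the proof of Proposition \ref{prop_g_diff_1}, together with Proposition \ref{second_diff_theta}, I obtain (writing $K''$ and $K'''$ for $K''_\varepsilon(\hat{\theta}_\varepsilon)$ and $K'''_\varepsilon(\hat{\theta}_\varepsilon)$):
\[
g''(\hat{w}_\varepsilon)=-\frac{K'''}{3\hat{w}_\varepsilon(K'')^2}-\frac{2(\hat{w}_\varepsilon-\sqrt{K''}\,\hat{\theta}_\varepsilon)}{\hat{w}_\varepsilon^3\sqrt{K''}}.
\]
Each term on the right is individually $O(\varepsilon)$: the first because $K'''=O(\varepsilon)$ by [A5] while the denominators are bounded and bounded away from zero by Lemma \ref{lem_SP_conti}, Corollary \ref{cor_not_zero}, and [A3]; the second by the estimate in Proposition \ref{prop_g_diff_1}. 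The improvement to $O(\varepsilon^2)$ must come from a cancellation.

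To expose this cancellation I rationalise via $\hat{w}_\varepsilon-\sqrt{K''}\,\hat{\theta}_\varepsilon=(\hat{w}_\varepsilon^2-K''\hat{\theta}_\varepsilon^2)/(\hat{w}_\varepsilon+\sqrt{K''}\,\hat{\theta}_\varepsilon)$ and apply Taylor's theorem to $K_\varepsilon$ at $\hat{\theta}_\varepsilon$ evaluated at $0$; combining with the identity $\hat{w}_\varepsilon^2=2(K'_\varepsilon(\hat{\theta}_\varepsilon)\hat{\theta}_\varepsilon-K_\varepsilon(\hat{\theta}_\varepsilon))$ and [A5] gives
\[
\hat{w}_\varepsilon^2-K''\hat{\theta}_\varepsilon^2=-\frac{K'''}{3}\hat{\theta}_\varepsilon^3+O(\varepsilon^2).
\]
Substituting this back into the closed form for $g''(\hat{w}_\varepsilon)$ and placing the two terms over a common denominator, the leading pieces proportional to $K'''\hat{\theta}_\varepsilon^3/((K'')^2\hat{w}_\varepsilon^3)$ cancel to leading order (using $\hat{w}_\varepsilon=\sqrt{K''}\,\hat{\theta}_\varepsilon+O(\varepsilon)$ from Proposition \ref{diff_w_theta}), and what survives is a product of the factor $K'''=O(\varepsilon)$ with an $O(\varepsilon)$ correction divided by an $O(1)$ denominator. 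This yields $g''(\hat{w}_\varepsilon)=O(\varepsilon^2)$.

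The hard part is the cancellation itself. A naive triangle-inequality bound on the closed-form expression gives only $O(\varepsilon)$, and the upgrade to $O(\varepsilon^2)$ requires tracking the Taylor expansion $\hat{w}_\varepsilon=\sqrt{K''}\,\hat{\theta}_\varepsilon-K'''\hat{\theta}_\varepsilon^2/(6\sqrt{K''})+O(\varepsilon^2)$ to sub-leading order, so as to verify that the two $O(\varepsilon)$ contributions in $g''(\hat{w}_\varepsilon)$ have equal magnitude and opposite sign at leading order in $\varepsilon$.
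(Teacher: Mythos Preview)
Your argument is correct and follows essentially the same route as the paper. You reach the identical closed form for $\hat w_\varepsilon g''(\hat w_\varepsilon)$ (the paper's display~(\ref{temp_1_10})) and then use the same fourth-order Taylor expansion $\hat w_\varepsilon^2 - K''\hat\theta_\varepsilon^2 = -\tfrac{1}{3}K'''\hat\theta_\varepsilon^3 + O(\varepsilon^2)$ (the paper's~(\ref{temp_w_hat_2})) to exhibit the cancellation. The only cosmetic differences are that you obtain the closed form directly from Lemma~\ref{lem_g_diff} with $n=2$ together with the already-computed values of $\hat\theta''_\varepsilon$ and $g'(\hat w_\varepsilon)$, whereas the paper applies l'H\^opital's rule to the expression in Proposition~\ref{prop_2_diff_g}; and you package the cancellation via rationalisation of $\hat w_\varepsilon-\sqrt{K''}\,\hat\theta_\varepsilon$ rather than the paper's $\sqrt{1+x}=1+x/2+O(x^2)$ expansion.
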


\begin{proof} 
Applying l'H\^opital's rule for the equality in Proposition \ref {prop_2_diff_g} 
and using Proposition \ref {second_diff_theta}, we have 
\begin{eqnarray}
\lim _{w\rightarrow \hat{w}_\varepsilon }wg''(w)
= 
\frac{-\hat{w}_\varepsilon ^2K'''(\hat{\theta }_\varepsilon ) - 
6(K''(\hat{\theta }_\varepsilon ))^{3/2}(\hat{w}_\varepsilon  - 
\hat{\theta }_\varepsilon \sqrt{K''(\hat{\theta }_\varepsilon )})}{3\hat{w}_\varepsilon ^2(K''(\hat{\theta }_\varepsilon ))^2}. 
\label{temp_1_10}
\end{eqnarray}
Similarly to Proposition \ref {diff_w_theta}, by applying Taylor's theorem, we get 
\begin{eqnarray}\label{temp_w_hat_2}
\hat{w}^2_\varepsilon - K''(\hat{\theta }_\varepsilon )\hat{\theta }_\varepsilon ^2 + 
\frac{1}{3}K'''(\hat{\theta }_\varepsilon )\hat{\theta }_\varepsilon ^3 = 
\hat{\theta }_\varepsilon ^2K''(\hat{\theta }_\varepsilon ) v_\varepsilon , 
\end{eqnarray}
where 
\begin{eqnarray*}
v_\varepsilon = -\frac{\hat{\theta }_\varepsilon }{3K''(\hat{\theta }_\varepsilon )}
\int ^1_0K^{(4)}(-u\hat{\theta }_\varepsilon )(1-u)^3du. 
\end{eqnarray*}
Note that $v_\varepsilon  = O(\varepsilon ^2)$ as $\varepsilon \rightarrow 0$ by [A5]. 
From (\ref {temp_w_hat_2}), we get 
\begin{eqnarray*}
\hat{w}_\varepsilon = 
\hat{\theta }_\varepsilon \sqrt{K''(\hat{\theta }_\varepsilon )}
\sqrt{1 - \frac{K'''(\hat{\theta }_\varepsilon )}{3K''(\hat{\theta }_\varepsilon )}\hat{\theta }_\varepsilon + v_\varepsilon }. 
\end{eqnarray*}
Therefore, 
we can rewrite the numerator of the right-hand side of (\ref {temp_1_10}) as 
\begin{eqnarray*}
&&-\left \{ K''(\hat{\theta }_\varepsilon )\hat{\theta }_\varepsilon ^2 - 
\frac{1}{3}K'''(\hat{\theta }_\varepsilon )\hat{\theta }_\varepsilon ^3 + 
\hat{\theta }_\varepsilon ^2K''(\hat{\theta }_\varepsilon )v_\varepsilon \right \} K'''(\hat{\theta }_\varepsilon )\\&& - 
6(K''(\hat{\theta }_\varepsilon ))^2
\hat{\theta }_\varepsilon \left\{ 
\sqrt{1 - \frac{K'''(\hat{\theta }_\varepsilon )}{3K''(\hat{\theta }_\varepsilon )}\hat{\theta }_\varepsilon + v_\varepsilon } - 1 \right \} \\
&=& 
-\left \{ K''(\hat{\theta }_\varepsilon )\hat{\theta }_\varepsilon ^2 - 
\frac{1}{3}K'''(\hat{\theta }_\varepsilon )\hat{\theta }_\varepsilon ^3 \right \} 
K'''(\hat{\theta }_\varepsilon ) + 
K''(\hat{\theta }_\varepsilon )\hat{\theta }_\varepsilon ^2K'''(\hat{\theta }_\varepsilon ) + O(\varepsilon ^2)\\
&=& 
\frac{1}{3}(K'''(\hat{\theta }_\varepsilon ))^2\hat{\theta }_\varepsilon ^3 + O(\varepsilon ^2)\ = \ O(\varepsilon ^2) \ \ \mathrm {as} 
\ \ \varepsilon \rightarrow 0. 
\end{eqnarray*}
Here, we use the relations 
$\sqrt{1+x} = 1 + x/2 + O(x^2)$ for small $x$, 
$K'''(\hat{\theta }_\varepsilon ) = O(\varepsilon )$, and 
$v_\varepsilon  = O(\varepsilon ^2)$ as $\varepsilon \rightarrow 0$. 
This completes the proof. 
\end{proof}

In fact, we can refine the assertion of the above proposition. 
From Taylor's theorem, we observe that 
\begin{eqnarray*}
\hat{w}^2_\varepsilon - K''(\hat{\theta }_\varepsilon )\hat{\theta }_\varepsilon ^2 + 
\frac{1}{3}K'''(\hat{\theta }_\varepsilon )\hat{\theta }_\varepsilon ^3 - 
\frac{1}{12}K^{(4)}(\hat{\theta }_\varepsilon )\hat{\theta }^4_\varepsilon  = 
\hat{\theta }_\varepsilon ^2K''(\hat{\theta }_\varepsilon )\tilde{v}_\varepsilon 
\end{eqnarray*}
to arrive at 
\begin{eqnarray*}
\hat{w}_\varepsilon = \hat{\theta }_\varepsilon \sqrt{K''(\hat{\theta }_\varepsilon )}
\sqrt{ 1 - \frac{K'''(\hat{\theta }_\varepsilon )}{3K''(\hat{\theta }_\varepsilon )}\hat{\theta }_\varepsilon + 
\frac{K^{(4)}(\hat{\theta }_\varepsilon )}{12K''(\hat{\theta }_\varepsilon )}\hat{\theta }_\varepsilon ^2 + 
\tilde{v}_\varepsilon }, 
\end{eqnarray*}
where 
\begin{eqnarray*}
\tilde {v}_\varepsilon = 
\frac{\hat{\theta }_\varepsilon ^2}{12K''(\hat{\theta }_\varepsilon )}
\int ^1_0K^{(5)}(-u\hat{\theta }_\varepsilon )(1-u)^4du \ \left(  = O(\varepsilon ^3)\ \ \mathrm {as} \ \ \varepsilon \rightarrow 0\right) . 
\end{eqnarray*}
Then, 
by a calculation similar to that in the proof of the above proposition, we have 
\begin{eqnarray*}
&&3\hat{w}^2_\varepsilon (K''(\hat{\theta }_\varepsilon ))^2\lim _{w\rightarrow \hat{w}_\varepsilon }wg''(w)\\
&=& 
-\left \{ K''(\hat{\theta }_\varepsilon )\hat{\theta }_\varepsilon ^2 - 
\frac{1}{3}K'''(\hat{\theta }_\varepsilon )\hat{\theta }_\varepsilon ^3 + 
\frac{1}{12}K^{(4)}(\hat{\theta }_\varepsilon )\hat{\theta }_\varepsilon ^4 + 
\hat{\theta }_\varepsilon ^2K''(\hat{\theta }_\varepsilon )\tilde{v}_\varepsilon \right \} K'''(\hat{\theta }_\varepsilon )\\&& - 
6(K''(\hat{\theta }_\varepsilon ))^2
\hat{\theta }_\varepsilon \left\{ 
\sqrt{1 - \frac{K'''(\hat{\theta }_\varepsilon )}{3K''(\hat{\theta }_\varepsilon )}\hat{\theta }_\varepsilon + 
\frac{K^{(4)}(\hat{\theta }_\varepsilon )}{12K''(\hat{\theta }_\varepsilon )}\hat{\theta }_\varepsilon ^2 + 
\tilde{v}_\varepsilon } - 1 \right \} \\
&=& 
-\left \{ K''(\hat{\theta }_\varepsilon )\hat{\theta }_\varepsilon ^2 - 
\frac{1}{3}K'''(\hat{\theta }_\varepsilon )\hat{\theta }_\varepsilon ^3\right \} 
K'''(\hat{\theta }_\varepsilon )\\&& + 
K''(\hat{\theta }_\varepsilon )\hat{\theta }_\varepsilon ^2K'''(\hat{\theta }_\varepsilon ) - 
\frac{1}{4}K''(\hat{\theta }_\varepsilon )\hat{\theta }_\varepsilon ^3K^{(4)}(\hat{\theta }_\varepsilon ) + 
\frac{3}{4}(K''(\hat{\theta }_\varepsilon ))^2
\hat{\theta }_\varepsilon \left( \frac{K'''(\hat{\theta }_\varepsilon )}{3K''(\hat{\theta }_\varepsilon )}\hat{\theta }_\varepsilon \right)^2 + O(\varepsilon ^3)\\
&=& 
\frac{5}{12}\hat{\theta }_\varepsilon ^3(K'''(\hat{\theta }_\varepsilon ))^2 - 
\frac{1}{4}K''(\hat{\theta }_\varepsilon )\hat{\theta }_\varepsilon ^3K^{(4)}(\hat{\theta }_\varepsilon ) + O(\varepsilon ^3) \ \ \mathrm {as} 
\ \ \varepsilon \rightarrow 0, 
\end{eqnarray*}
where we have applied the relation $\sqrt{1 + x} = 1 + x/2 - x^2/8 + O(x^3)$ for small $x$. 
This implies that 
\begin{eqnarray}\label{temp_g_2_diff}
g''(\hat{w}_\varepsilon ) = 
\left( \frac{5(K'''(\hat{\theta }_\varepsilon ))^2}{36(K''(\hat{\theta }_\varepsilon ))^2} - 
\frac{K^{(4)}(\hat{\theta }_\varepsilon )}{12K''(\hat{\theta }_\varepsilon )}
\right)\frac{\hat{\theta }_\varepsilon ^3}{\hat{w}_\varepsilon ^3} + O(\varepsilon ^3) \ \ \mathrm {as} 
\ \ \varepsilon \rightarrow 0. 
\end{eqnarray}

Here, we calculate the third derivative of $\theta (w)$ at $\hat{w}_\varepsilon $ 
($\hat{\theta }'''_\varepsilon $). 

\begin{proposition}\label{prop_3_diff_theta_eps} \ 
\begin{eqnarray}\label{3_diff_theta_eps}
\hat{\theta }'''_\varepsilon  = 
\frac{5(K'''(\hat{\theta }_\varepsilon ))^2}{12(K''(\hat{\theta }_\varepsilon ))^{7/2}} - 
\frac{K^{(4)}(\hat{\theta }_\varepsilon )}{4(K''(\hat{\theta }_\varepsilon ))^{5/2}}. 
\end{eqnarray}
\end{proposition}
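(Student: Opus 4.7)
The plan is to follow exactly the same recipe used to establish Propositions \ref{first_diff_theta} and \ref{second_diff_theta}: differentiate the defining relation for $\theta(w)$ one more time, then take $w\to\hat w_\varepsilon$ with l'H\^opital's rule, and finally solve the resulting linear equation for $\hat\theta'''_\varepsilon$.

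First, I would rewrite Proposition \ref{prop_2_diff_theta} in cleared-denominator form,
\begin{eqnarray*}
(K'(\theta(w))-x)\,\theta''(w) \;=\; 1-(\theta'(w))^2 K''(\theta(w)),
\end{eqnarray*}
and differentiate both sides once more in $w$. Using the chain rule, this gives
\begin{eqnarray*}
(K'(\theta(w))-x)\,\theta'''(w) \;=\; -3\theta'(w)\theta''(w)K''(\theta(w)) - (\theta'(w))^3 K'''(\theta(w)),
\end{eqnarray*}
after the term $K''(\theta)\theta'\cdot\theta''$ produced by differentiating the left factor cancels with part of the right-hand side. Dividing by $K'(\theta(w))-x$ yields an expression for $\theta'''(w)$ whose limit as $w\to\hat w_\varepsilon$ has indeterminate form $0/0$ (the numerator vanishes thanks to Propositions \ref{first_diff_theta} and \ref{second_diff_theta}, by direct substitution).

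Next, I would apply l'H\^opital's rule to this quotient. Differentiating the numerator produces the terms
\begin{eqnarray*}
-3(\theta'')^2 K''(\theta) - 3\theta'\theta''' K''(\theta) - 6(\theta')^2\theta'' K'''(\theta) - (\theta')^4 K^{(4)}(\theta),
\end{eqnarray*}
while the denominator differentiates to $K''(\theta)\theta'$. Taking $w\to\hat w_\varepsilon$ and noting that $\theta'''(w)$ itself appears on both sides through its limit $\hat\theta'''_\varepsilon$, I obtain
\begin{eqnarray*}
K''(\hat\theta_\varepsilon)\hat\theta'_\varepsilon\,\hat\theta'''_\varepsilon \;=\; -3(\hat\theta''_\varepsilon)^2 K''(\hat\theta_\varepsilon) - 3\hat\theta'_\varepsilon\,\hat\theta'''_\varepsilon\, K''(\hat\theta_\varepsilon) - 6(\hat\theta'_\varepsilon)^2\hat\theta''_\varepsilon K'''(\hat\theta_\varepsilon) - (\hat\theta'_\varepsilon)^4 K^{(4)}(\hat\theta_\varepsilon),
\end{eqnarray*}
a scalar linear equation in $\hat\theta'''_\varepsilon$. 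Collecting the $\hat\theta'''_\varepsilon$ terms on the left gives $4K''(\hat\theta_\varepsilon)\hat\theta'_\varepsilon\,\hat\theta'''_\varepsilon$ equal to the remaining three terms on the right.

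Finally, I would substitute the closed-form values $\hat\theta'_\varepsilon = 1/\sqrt{K''(\hat\theta_\varepsilon)}$ from Proposition \ref{first_diff_theta} and $\hat\theta''_\varepsilon = -K'''(\hat\theta_\varepsilon)/(3(K''(\hat\theta_\varepsilon))^2)$ from Proposition \ref{second_diff_theta}, and simplify. The two $(K''')^2$ contributions combine as $-\tfrac{1}{3}+2 = \tfrac{5}{3}$ (after clearing $K''$-powers), while the $K^{(4)}$ contribution enters with coefficient $-1$; dividing by the factor $4K''(\hat\theta_\varepsilon)\hat\theta'_\varepsilon = 4\sqrt{K''(\hat\theta_\varepsilon)}$ produces exactly (\ref{3_diff_theta_eps}). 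There is no genuine mathematical obstacle here: the only risk is bookkeeping, namely tracking signs and the fractional powers of $K''(\hat\theta_\varepsilon)$ through the final simplification.
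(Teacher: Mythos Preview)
Your proposal is correct and follows essentially the same route as the paper's own proof: differentiate (\ref{2_diff_theta}) to obtain $\theta'''(w)=-\dfrac{3\theta'\theta''K''(\theta)+(\theta')^3K'''(\theta)}{K'(\theta)-x}$, apply l'H\^opital's rule, and solve the resulting linear equation $4\hat\theta'''_\varepsilon=\dfrac{5(K''')^2}{3(K'')^{7/2}}-\dfrac{K^{(4)}}{(K'')^{5/2}}$. The only cosmetic difference is that you clear denominators before differentiating, whereas the paper writes the quotient for $\theta'''(w)$ directly; the computations are otherwise identical.
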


\begin{proof} 
Differentiating both sides of (\ref {2_diff_theta}), we have 
\begin{eqnarray}
\theta '''(w) 
&=& 
-\frac{3\theta '\theta ''K''(\theta ) + (\theta ')^3K'''(\theta )}{K'(\theta ) - x}. 
\label{3_diff_theta}
\end{eqnarray}
Now we apply l'H\^opital's rule for (\ref {3_diff_theta}) to obtain 
\begin{eqnarray*}
\hat{\theta }'''_\varepsilon 
= \lim _{w\rightarrow \hat{w}_\varepsilon }\theta '''(w) = 
-\left\{ -\frac{(K'''(\hat{\theta }_\varepsilon ))^2}{3(K''(\hat{\theta }_\varepsilon ))^{7/2}} + 
3\hat{\theta }'''_\varepsilon  - 
\frac{2(K'''(\hat{\theta }_\varepsilon ))^2}{(K''(\hat{\theta }_\varepsilon ))^{7/2}} + 
\frac{K^{(4)}(\hat{\theta }_\varepsilon )}{(K''(\hat{\theta }_\varepsilon ))^{5/2}}\right\}. 
\end{eqnarray*}
This can be simplified to 
\begin{eqnarray*}
4\hat{\theta }'''_\varepsilon  = 
\frac{5(K'''(\hat{\theta }_\varepsilon ))^2}{3(K''(\hat{\theta }_\varepsilon ))^{7/2}} - 
\frac{K^{(4)}(\hat{\theta }_\varepsilon )}{(K''(\hat{\theta }_\varepsilon ))^{5/2}}. 
\end{eqnarray*}
We have obtained the desired assertion. 
\end{proof}

Substituting (\ref {3_diff_theta_eps}) into (\ref {temp_g_2_diff}), we have the following proposition. 
\begin{proposition} \ \label{prop_g2_further}
\begin{eqnarray}\label{g2_further}
g''(\hat{w}_\varepsilon ) = 
\frac{(\hat{\theta }_\varepsilon \sqrt{K''(\hat{\theta }_\varepsilon )})^3}{3\hat{w}^3_\varepsilon }\times 
\hat{\theta }'''_\varepsilon  + O(\varepsilon ^3), \ \ \varepsilon \rightarrow 0. 
\end{eqnarray}
\end{proposition}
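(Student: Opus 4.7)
My plan is to derive the statement by direct algebraic substitution, since all the hard analytic work (establishing the intermediate formula \eqref{temp_g_2_diff} via Taylor expansion and l'Hôpital's rule, and computing $\hat{\theta}'''_\varepsilon$ in \eqref{3_diff_theta_eps}) has already been done in the preceding two propositions. The only remaining task is to recognise that the right-hand side of \eqref{temp_g_2_diff} factors, up to an $O(\varepsilon^3)$ error, through the expression for $\hat{\theta}'''_\varepsilon$.

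The first step is simply to rewrite the claimed right-hand side. Multiplying out,
\begin{eqnarray*}
\frac{(\hat{\theta}_\varepsilon\sqrt{K''(\hat{\theta}_\varepsilon)})^3}{3\hat{w}_\varepsilon^3}\hat{\theta}'''_\varepsilon
&=& \frac{\hat{\theta}_\varepsilon^3 (K''(\hat{\theta}_\varepsilon))^{3/2}}{3\hat{w}_\varepsilon^3}
\left(\frac{5(K'''(\hat{\theta}_\varepsilon))^2}{12(K''(\hat{\theta}_\varepsilon))^{7/2}}
-\frac{K^{(4)}(\hat{\theta}_\varepsilon)}{4(K''(\hat{\theta}_\varepsilon))^{5/2}}\right),
\end{eqnarray*}
by plugging in Proposition \ref{prop_3_diff_theta_eps}. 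The second step is to simplify the powers of $K''(\hat{\theta}_\varepsilon)$ in each summand: the first summand produces $(K'')^{3/2-7/2}=(K'')^{-2}$ with the coefficient $5/(3\cdot 12)=5/36$, and the second produces $(K'')^{3/2-5/2}=(K'')^{-1}$ with the coefficient $1/(3\cdot 4)=1/12$. This yields
\begin{eqnarray*}
\frac{(\hat{\theta}_\varepsilon\sqrt{K''(\hat{\theta}_\varepsilon)})^3}{3\hat{w}_\varepsilon^3}\hat{\theta}'''_\varepsilon
&=& \left(\frac{5(K'''(\hat{\theta}_\varepsilon))^2}{36(K''(\hat{\theta}_\varepsilon))^2}
-\frac{K^{(4)}(\hat{\theta}_\varepsilon)}{12 K''(\hat{\theta}_\varepsilon)}\right)\frac{\hat{\theta}_\varepsilon^3}{\hat{w}_\varepsilon^3},
\end{eqnarray*}
which is precisely the leading term appearing in \eqref{temp_g_2_diff}.

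The final step is to invoke \eqref{temp_g_2_diff}, which asserts
\begin{eqnarray*}
g''(\hat{w}_\varepsilon)= \left(\frac{5(K'''(\hat{\theta}_\varepsilon))^2}{36(K''(\hat{\theta}_\varepsilon))^2}
-\frac{K^{(4)}(\hat{\theta}_\varepsilon)}{12 K''(\hat{\theta}_\varepsilon)}\right)\frac{\hat{\theta}_\varepsilon^3}{\hat{w}_\varepsilon^3}+O(\varepsilon^3)
\end{eqnarray*}
as $\varepsilon\rightarrow 0$, and substitute the identity just derived to obtain \eqref{g2_further}. Since both displayed quantities are equal up to an $O(\varepsilon^3)$ remainder, no further error analysis is needed; the $O(\varepsilon^3)$ term already present in \eqref{temp_g_2_diff} carries over verbatim.

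There is essentially no obstacle here beyond bookkeeping of exponents of $K''(\hat{\theta}_\varepsilon)$; the conceptual content is entirely in \eqref{temp_g_2_diff} and Proposition \ref{prop_3_diff_theta_eps}. The ``main difficulty'', such as it is, lies in recognising the clean factorisation: the seemingly ad hoc combination $\tfrac{5}{36}(K''')^2/(K'')^2-\tfrac{1}{12}K^{(4)}/K''$ is, up to the natural scaling $(\hat{\theta}_\varepsilon\sqrt{K''(\hat{\theta}_\varepsilon)})^3/(3\hat{w}_\varepsilon^3)$, just the third derivative $\hat{\theta}'''_\varepsilon$. This compact form will presumably be useful when carrying out the higher-order induction in the estimates of $g^{(n)}(\hat{w}_\varepsilon)$ later in Section \ref{sec_proof_estimate}.
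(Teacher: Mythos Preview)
Your proof is correct and follows essentially the same approach as the paper: the paper simply states that the proposition is obtained by ``substituting \eqref{3_diff_theta_eps} into \eqref{temp_g_2_diff}'', and your argument carries out exactly this algebraic substitution (in the reverse direction, starting from the claimed form and matching it to \eqref{temp_g_2_diff}).
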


Now we are prepared to prove the next proposition. 

\begin{proposition} \ \label{prop_g3}
$g'''(\hat{w}_\varepsilon ) = O(\varepsilon ^3)$ as $\varepsilon \rightarrow 0$. 
\end{proposition}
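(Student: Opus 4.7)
The plan is to mirror the arguments used for Propositions~\ref{prop_g_diff_1} and~\ref{prop_g2}, but to exploit the refined expansion of $g''(\hat{w}_\varepsilon)$ provided by Proposition~\ref{prop_g2_further}. First I apply the identity of Lemma~\ref{lem_g_diff} with $n=3$ directly at $w=\hat{w}_\varepsilon$, which is legitimate because $\hat{w}_\varepsilon>0$ is bounded away from $0$ by Corollary~\ref{cor_not_zero} and Lemma~\ref{lem_SP_conti}. This yields
\[
g'''(\hat{w}_\varepsilon) = \frac{\hat{\theta}^{(3)}_\varepsilon - 3g''(\hat{w}_\varepsilon)}{\hat{w}_\varepsilon},
\]
so it suffices to show that the numerator is $O(\varepsilon^3)$.

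Substituting Proposition~\ref{prop_g2_further}, the numerator rewrites as
\[
\hat{\theta}^{(3)}_\varepsilon\left(1 - \frac{(\hat{\theta}_\varepsilon\sqrt{K''(\hat{\theta}_\varepsilon)})^3}{\hat{w}^3_\varepsilon}\right) + O(\varepsilon^3),
\]
and I would bound the two factors of the explicit term separately. From the closed form~(\ref{3_diff_theta_eps}) for $\hat{\theta}^{(3)}_\varepsilon$, together with the uniform lower bound on $K''(\hat{\theta}_\varepsilon)$ supplied by [A3] and the orders $K^{(r)}(\hat{\theta}_\varepsilon)=O(\varepsilon^{r-2})$ for $r=3,4$ supplied by [A5] (applied on the compact set containing the bounded family $(\hat{\theta}_\varepsilon)_\varepsilon$ from Lemma~\ref{lem_SP_conti}), the first factor is $O(\varepsilon^2)$. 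For the second factor, Proposition~\ref{diff_w_theta} gives $\hat{w}_\varepsilon-\hat{\theta}_\varepsilon\sqrt{K''(\hat{\theta}_\varepsilon)}=O(\varepsilon)$, and since both quantities tend to a common positive limit, the algebraic identity $1-a^3/b^3=(b-a)(b^2+ab+a^2)/b^3$ shows this factor is $O(\varepsilon)$.

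Multiplying the two bounds produces $O(\varepsilon^3)$ for the numerator, and dividing by the bounded-away-from-zero $\hat{w}_\varepsilon$ preserves this order, giving the claim. I do not anticipate a substantive obstacle: the genuine technical content--namely the second-order Taylor refinement of $\hat{w}_\varepsilon^2$ beyond $K''(\hat{\theta}_\varepsilon)\hat{\theta}_\varepsilon^2$--has already been absorbed into Proposition~\ref{prop_g2_further}, whose sharper form was arranged precisely so that the leading term of $3g''(\hat{w}_\varepsilon)$ matches $\hat{\theta}^{(3)}_\varepsilon$ up to the factor $(\hat{\theta}_\varepsilon\sqrt{K''(\hat{\theta}_\varepsilon)}/\hat{w}_\varepsilon)^3 = 1+O(\varepsilon)$, triggering the cancellation required.
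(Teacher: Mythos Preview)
Your proposal is correct and follows essentially the same route as the paper: both combine the identity $\hat{w}_\varepsilon g'''(\hat{w}_\varepsilon)=\hat{\theta}_\varepsilon'''-3g''(\hat{w}_\varepsilon)$ from Lemma~\ref{lem_g_diff} with Proposition~\ref{prop_g2_further}, then use $\hat{\theta}_\varepsilon'''=O(\varepsilon^2)$ and the factorisation of $\hat{w}_\varepsilon^3-(\hat{\theta}_\varepsilon\sqrt{K''(\hat{\theta}_\varepsilon)})^3$ together with Proposition~\ref{diff_w_theta} to obtain the required cancellation. The only cosmetic difference is that the paper phrases the first step as a limit $w\to\hat{w}_\varepsilon$ while you evaluate directly at $\hat{w}_\varepsilon$; since Lemma~\ref{lem_g_diff} is stated for all $w\in\hat{\mathcal O}_{\varepsilon,+}$, your version is equally valid.
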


\begin{proof} 
By Lemma \ref {lem_g_diff}, it holds that 
\begin{eqnarray*}
wg'''(w) = \theta '''(w) - 3g''(w)
\end{eqnarray*}
for $w\neq \hat{w}_\varepsilon $. 
Letting $w\rightarrow \hat{w}_\varepsilon $ and substituting (\ref {g2_further}), we have 
\begin{eqnarray*}
\hat{w}_\varepsilon \lim _{w\rightarrow \hat{w}_\varepsilon }g'''(w) &=& 
\hat{\theta }'''_\varepsilon  - 3\left\{ 
\frac{(\hat{\theta }_\varepsilon \sqrt{K''(\hat{\theta }_\varepsilon )})^3}{3\hat{w}^3_\varepsilon }\times 
\hat{\theta }'''_\varepsilon  + O(\varepsilon ^3)\right\} \\
&=& 
\frac{\hat{\theta }'''_\varepsilon }{\hat{w}^3_\varepsilon }
\left\{ \hat{w}^3_\varepsilon  - (\hat{\theta }_\varepsilon \sqrt{K''(\hat{\theta }_\varepsilon )})^3\right\}  + O(\varepsilon ^3). 
\end{eqnarray*}
By [A5] and Proposition \ref {prop_3_diff_theta_eps}, we see that 
$\hat{\theta }'''_\varepsilon  = O(\varepsilon ^2)$. 
Moreover, Proposition \ref {diff_w_theta} implies that 
\begin{eqnarray*}
&&\hat{w}^3_\varepsilon  - (\hat{\theta }_\varepsilon \sqrt{K''(\hat{\theta }_\varepsilon )})^3\\ 
&=& 
\left( \hat{w}_\varepsilon  - \hat{\theta }_\varepsilon \sqrt{K''(\hat{\theta }_\varepsilon )}\right) 
\left( \hat{w}_\varepsilon ^2 + 
\hat{w}_\varepsilon \hat{\theta }_\varepsilon \sqrt{K''(\hat{\theta }_\varepsilon )} + 
\hat{\theta }_\varepsilon ^2K''(\hat{\theta }_\varepsilon ) \right) = O(\varepsilon ), 
\ \ \varepsilon \rightarrow 0. 
\end{eqnarray*}
By the above arguments, we deduce that 
$\hat{w}_\varepsilon g''(\hat{w}_\varepsilon ) = O(\varepsilon ^3)$ as $\varepsilon \rightarrow 0$. 
\end{proof}

Next we estimate $\hat{\theta }^{(n)}_\varepsilon $ and $g^{(n)}(\hat{w}_\varepsilon )$ 
for $n\geq 4$. 
We let 
\begin{eqnarray*}
f_n(w) = \theta ^{(n)}(w)(K'(\theta (w)) - x).
\end{eqnarray*} 

\begin{lemma} \ \label{rel_f} $f_{n+1}(w) = f'_n(w) - K''(\theta (w))\theta '(w)\theta ^{(n)}(w)$ 
for each $n\geq 1$. 
\end{lemma}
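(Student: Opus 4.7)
The identity is essentially the product rule applied to the definition of $f_n$. The plan is to differentiate $f_n(w) = \theta^{(n)}(w)(K'(\theta(w)) - x)$ with respect to $w$, treating the two factors separately, and then rearrange to solve for $f_{n+1}(w)$.

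More concretely, by the product rule,
\begin{eqnarray*}
f_n'(w) &=& \theta^{(n+1)}(w)(K'(\theta(w)) - x) + \theta^{(n)}(w) \cdot \frac{d}{dw}\bigl(K'(\theta(w)) - x\bigr).
\end{eqnarray*}
The chain rule gives $\frac{d}{dw}(K'(\theta(w)) - x) = K''(\theta(w))\theta'(w)$, since $x$ is a fixed parameter independent of $w$. Substituting this yields
\begin{eqnarray*}
f_n'(w) &=& \theta^{(n+1)}(w)(K'(\theta(w)) - x) + K''(\theta(w))\theta'(w)\theta^{(n)}(w) \\
&=& f_{n+1}(w) + K''(\theta(w))\theta'(w)\theta^{(n)}(w),
\end{eqnarray*}
where the last equality uses the definition $f_{n+1}(w) = \theta^{(n+1)}(w)(K'(\theta(w)) - x)$. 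Rearranging gives the stated recursion.

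The only subtlety to verify is that the manipulations are valid, i.e.\ that $\theta(w)$ is sufficiently smooth so that $\theta^{(n)}$ and $\theta^{(n+1)}$ exist as genuine derivatives on the relevant domain. This is guaranteed by the analyticity of $\theta(w)$ on $\hat{\mathcal{O}}_{\varepsilon,+}$ (established earlier via the real-analytic factorization $w(\theta) = \hat{w} + \sqrt{k(\theta)}(\theta - \hat{\theta})$ with $k > 0$, together with the inverse function theorem). Since everything is analytic, all derivatives exist and the product/chain rules apply without qualification. There is no real obstacle here; the lemma is a one-line calculation recorded for later inductive use in estimating $\hat{\theta}^{(n)}_\varepsilon$ and $g^{(n)}(\hat{w}_\varepsilon)$ for $n \geq 4$.
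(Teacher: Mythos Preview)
Your proof is correct and essentially the same as the paper's: you apply the product rule to $f_n = \theta^{(n)}\cdot(K'(\theta)-x)$, while the paper applies the quotient rule to $\theta^{(n)} = f_n/(K'(\theta)-x)$ and simplifies, but these are algebraically identical one-line computations.
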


\begin{proof} 
A straightforward calculation gives 
\begin{eqnarray*}
\theta ^{(n+1)} = \frac{d}{dw}\left( \frac{f_n}{K'(\theta ) - x}\right) = 
\frac{f'_n\cdot (K'(\theta ) - x) - f_n\cdot K''(\theta )\theta '}{(K'(\theta ) - x)^2} = 
\frac{f'_n - \theta ^{(n)}K''(\theta )\theta '}{K'(\theta ) - x}, 
\end{eqnarray*}
which implies the desired assertion. 
\end{proof}

\begin{proposition} \ \label{estimate_fn}For each $n\geq 3$, the following two assertions hold. \\
$\mathrm {(i)}$ \ There are nonnegative integers $m^n, a^n_i, r^n_i, s^n_i, k^n_{i,2}, \ldots , k^n_{i, n-2}$ $(i = 1, \ldots , m^n)$ such that 
\begin{eqnarray*}
f_n(w) &=& -K^{(n)}(\theta (w))(\theta '(w))^n - nK''(\theta (w))\theta '(w)\theta ^{(n-1)}(w)\\
&& - 
\sum ^{m^n}_{i = 1}a^n_iK^{(r^n_i)}(\theta (w))(\theta '(w))^{s^n_i}\prod ^{n-2}_{j=2}(\theta ^{(j)}(w))^{k^n_{i, j}} 
\end{eqnarray*}
and also $\sum ^{n-2}_{j=2}(j-1)k^n_{i, j} + r^n_i = n$, $r^n_i\geq 2$ for each $i = 1, \ldots , m^n$. \\
$\mathrm {(ii)}$ \ $f_n(\hat{w}_\varepsilon ) = 0$. 
\end{proposition}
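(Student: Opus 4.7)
The plan is to treat (ii) as an immediate consequence of the saddlepoint equation and to prove the structural claim (i) by induction on $n$, using the recurrence of Lemma \ref{rel_f} as the engine. For (ii), since $\theta(w)$ is analytic at $\hat{w}_\varepsilon$, each $\theta^{(n)}(\hat{w}_\varepsilon)$ exists as a finite limit, while $K'(\theta(\hat{w}_\varepsilon)) - x = K'(\hat{\theta}_\varepsilon) - x = 0$ by (\ref{def_SP}). Hence $f_n(\hat{w}_\varepsilon) = \theta^{(n)}(\hat{w}_\varepsilon)\cdot 0 = 0$, independent of whatever explicit form (i) provides.

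For (i), the base case $n = 3$ comes from (\ref{3_diff_theta}) by multiplying through by $K'(\theta) - x$: one gets $f_3(w) = -K'''(\theta)(\theta')^3 - 3K''(\theta)\theta'\theta''$, which matches the claimed form with $m^3 = 0$ (the product $\prod_{j=2}^{n-2}$ being empty when $n=3$). For the inductive step, assuming the decomposition holds for $f_n$, I will differentiate each piece term by term and then add $-K''(\theta)\theta'\theta^{(n)}$ per Lemma \ref{rel_f}. The first leading term $-K^{(n)}(\theta)(\theta')^n$ produces $-K^{(n+1)}(\theta)(\theta')^{n+1}$, which is the new top term, plus a residual $-nK^{(n)}(\theta)(\theta')^{n-1}\theta''$ destined for the sum. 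The second leading term $-nK''(\theta)\theta'\theta^{(n-1)}$ differentiates into three pieces; the one carrying $\theta^{(n)}$ combines with the Lemma \ref{rel_f} subtraction to yield exactly $-(n+1)K''(\theta)\theta'\theta^{(n)}$, the new second leading term, while the other two pieces fall into the sum. Derivatives of the existing sum terms of $f_n$ likewise become sum terms of $f_{n+1}$.

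The main obstacle, and where the bookkeeping lives, is preserving the degree invariant $\sum_{j=2}^{n-2}(j-1)k^n_{i,j} + r^n_i = n$ with $r^n_i \geq 2$, while ruling out leakage of sum-derived terms into the declared leading positions. I will verify that differentiation raises this weight by exactly one in each of the three possible cases: differentiating $K^{(r)}(\theta)$ raises $r$ by one; differentiating $(\theta')^s$ introduces $\theta''$, contributing $2-1 = 1$; and differentiating $(\theta^{(j)})^{k_j}$ with $j \ge 2$ swaps one factor for $\theta^{(j+1)}$, changing the weight by $-(j-1) + j = 1$. Since $r$ never decreases under differentiation, the constraint $r \geq 2$ is preserved. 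Moreover, no derivative of an existing sum term of $f_n$ can manufacture $K^{(n+1)}(\theta)(\theta')^{n+1}$ (this would demand a pre-existing $K^{(n)}$, which appears only in the first leading term) or a factor $\theta^{(n)}$ (sum terms of $f_n$ carry $\theta^{(j)}$ only for $j \le n-2$, so differentiation reaches at most $\theta^{(n-1)}$). This rules out contamination and gives a clean decomposition at each induction step, completing (i).
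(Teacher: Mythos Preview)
Your proof is correct and follows essentially the same route as the paper: induction on $n$ via the recurrence of Lemma~\ref{rel_f}, with (ii) an immediate consequence of the saddlepoint equation. You are in fact more explicit than the paper about the weight bookkeeping---verifying that differentiation raises $\sum_j(j-1)k_{i,j}+r_i$ by exactly one and that no derivative of a sum term can produce a new leading term---which the paper leaves to the reader after displaying the differentiated form.
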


\begin{proof} We will prove assertion (i) 
by induction. First, we consider the case $n = 3$. 
By Proposition \ref {prop_2_diff_theta} and Lemma \ref {rel_f}, we know 
\begin{eqnarray*}
f_2(w) = 1 - K''(\theta (w))(\theta '(w))^2
\end{eqnarray*}
and 
\begin{eqnarray}\nonumber 
f_3(w) &=& f_2'(w) - K''(\theta (w))\theta '(w)\theta ''(w)\\ 
&=& 
-K'''(\theta (w))(\theta '(w))^3 - 3K''(\theta (w))\theta '(w)\theta ''(w); 
\label{temp_f3}
\end{eqnarray}
thus, (i) is true for $n = 3$. 

Now we assume that (i) holds for any integer in $\{ 3, \ldots , n \}$. 
Thus, 
\begin{eqnarray*}
f_{n+1}(w) &=& f'_n(w) - K''(\theta )\theta '\theta ^{(n)}\\
&=& 
-K^{(n+1)}(\theta )(\theta ')^{n+1} - (n+1)K''(\theta )\theta '\theta ^{(n)}\\
&& - \Big\{ nK^{(n)}(\theta )(\theta ')^{n-1}\theta '' + nK'''(\theta )(\theta ')^2\theta ^{(n-1)}\\&&\hspace{5mm} + 
nK''(\theta )\theta ''\theta ^{(n-1)} + \sum ^{m^n}_{i=1}a^n_iF^n_i(\theta (w))\Big\} 
\end{eqnarray*}
by virtue of Lemma \ref {rel_f}, where 
\begin{eqnarray*}
F^n_i(\theta ) &=& 
K^{(r^n_i + 1)}(\theta )(\theta ')^{s^n_i + 1}\prod ^{n-2}_{j=2}(\theta ^{(j)})^{k^n_{i, j}} \\
&& + 
s^n_iK^{(r^n_i)}(\theta )(\theta ')^{s^n_i-1}\theta ''\prod ^{n-2}_{j=2}(\theta ^{(j)})^{k^n_{i, j}} \\
&& + 
\sum ^{n-2}_{l=2}
k^n_{i, l}K^{(r^n_i)}(\theta )(\theta ')^{s^n_i}(\theta ^{(l)})^{k^n_{i, j} - 1}\theta ^{(l+1)}
\prod ^{n-2}_{j=2: j\neq l}(\theta ^{(j)})^{k^n_{i, j}}. 
\end{eqnarray*}
Replacing $n$ with $n+1$ again gives (i). By induction, (i) holds for $n \geq 3$.
The assertion (ii) is obvious from (\ref {def_SP}) and the definition of $f_n(w)$. 
\end{proof}

\begin{proposition} \ \label{prop_theta_n}For each $n\geq 2$, we have 
$\hat{\theta }^{(n)}_\varepsilon = O(\varepsilon ^{n-1})$ as $\varepsilon \rightarrow 0$. 
\end{proposition}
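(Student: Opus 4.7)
The plan is to prove this by induction on $n$, using Proposition \ref{estimate_fn} as the key structural input and the cumulant bound from [A5] as the key order input.

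The base cases $n=2$ and $n=3$ are already established by Propositions \ref{second_diff_theta} and \ref{prop_3_diff_theta_eps}: since $K'''_\varepsilon(\hat{\theta}_\varepsilon)=O(\varepsilon)$ and $K^{(4)}_\varepsilon(\hat{\theta}_\varepsilon)=O(\varepsilon^2)$ by [A5], while $K''_\varepsilon(\hat{\theta}_\varepsilon)$ is bounded away from $0$ by [A3], the formulas for $\hat{\theta}''_\varepsilon$ and $\hat{\theta}'''_\varepsilon$ yield $O(\varepsilon)$ and $O(\varepsilon^2)$ respectively. For the inductive step, assume $\hat{\theta}^{(j)}_\varepsilon=O(\varepsilon^{j-1})$ for $2\leq j\leq n-2$, and prove $\hat{\theta}^{(n-1)}_\varepsilon=O(\varepsilon^{n-2})$.

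The core idea is to evaluate the identity of Proposition \ref{estimate_fn}(i) at $w=\hat{w}_\varepsilon$. Since $f_n(\hat{w}_\varepsilon)=0$ by Proposition \ref{estimate_fn}(ii), I obtain
\begin{eqnarray*}
n K''_\varepsilon(\hat{\theta}_\varepsilon)\hat{\theta}'_\varepsilon \hat{\theta}^{(n-1)}_\varepsilon
&=& -K^{(n)}_\varepsilon(\hat{\theta}_\varepsilon)(\hat{\theta}'_\varepsilon)^n \\
&& - \sum_{i=1}^{m^n} a^n_i K^{(r^n_i)}_\varepsilon(\hat{\theta}_\varepsilon)(\hat{\theta}'_\varepsilon)^{s^n_i}\prod_{j=2}^{n-2}\bigl(\hat{\theta}^{(j)}_\varepsilon\bigr)^{k^n_{i,j}}.
\end{eqnarray*}
The crucial point is the constraint $r^n_i + \sum_{j=2}^{n-2}(j-1)k^n_{i,j} = n$ with $r^n_i\geq 2$ provided by Proposition \ref{estimate_fn}(i). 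Using [A5] to bound $K^{(r^n_i)}_\varepsilon(\hat{\theta}_\varepsilon)=O(\varepsilon^{r^n_i-2})$ together with the inductive hypothesis $\hat{\theta}^{(j)}_\varepsilon=O(\varepsilon^{j-1})$, each summand on the right-hand side has order
\begin{eqnarray*}
O\Bigl(\varepsilon^{(r^n_i-2) + \sum_{j=2}^{n-2}(j-1)k^n_{i,j}}\Bigr) = O(\varepsilon^{n-2}),
\end{eqnarray*}
and the leading term $K^{(n)}_\varepsilon(\hat{\theta}_\varepsilon)(\hat{\theta}'_\varepsilon)^n$ is likewise $O(\varepsilon^{n-2})$. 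Since $K''_\varepsilon(\hat{\theta}_\varepsilon)\hat{\theta}'_\varepsilon = \sqrt{K''_\varepsilon(\hat{\theta}_\varepsilon)}$ by Proposition \ref{first_diff_theta} and is bounded away from $0$ by [A3], dividing yields $\hat{\theta}^{(n-1)}_\varepsilon = O(\varepsilon^{n-2})$, closing the induction.

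The main obstacle is purely a bookkeeping one: verifying that the exponent of $\varepsilon$ contributed by each summand really is exactly $n-2$. This rests entirely on the combinatorial identity $r^n_i + \sum_{j}(j-1)k^n_{i,j}=n$ in Proposition \ref{estimate_fn}(i), which was tailored precisely for this accounting; thus no further difficulty arises, and no delicate estimates beyond those already in place are required.
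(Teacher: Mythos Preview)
Your proof is correct and follows essentially the same inductive strategy as the paper, with one streamlining worth noting. The paper computes $\hat{\theta}^{(n)}_\varepsilon$ by applying l'H\^opital's rule to $f_n(w)/(K'(\theta(w))-x)$, then invokes Lemma~\ref{rel_f} together with the structural formula for $f_{n+1}$ to express $f'_n(\hat{w}_\varepsilon)$; after solving the resulting linear equation for $\hat{\theta}^{(n)}_\varepsilon$, the order estimate follows from the same constraint $\sum_j(j-1)k_{i,j}+r_i=n+1$. You bypass the l'H\^opital and Lemma~\ref{rel_f} steps entirely by evaluating the identity of Proposition~\ref{estimate_fn}(i) at $\hat{w}_\varepsilon$ and using $f_n(\hat{w}_\varepsilon)=0$ directly, which immediately yields a linear equation for $\hat{\theta}^{(n-1)}_\varepsilon$ with the constraint $\sum_j(j-1)k_{i,j}+r_i=n$. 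Up to the index shift $n\mapsto n+1$, the two arguments coincide; yours is simply a more direct route to the same equation.
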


\begin{proof} 
When $n=2$, the assertion is obvious by [A5] and Proposition \ref {second_diff_theta}. 
We suppose that the assertion is true for $1, \ldots , n-1$. 
By the definition of $f_n$, we have 
\begin{eqnarray*}
\theta ^{(n)}(w) = \frac{f_n(w)}{K'(\theta (w)) - x} 
\end{eqnarray*}
for $w\neq \hat{w}_\varepsilon $. 
By Proposition \ref {estimate_fn}(ii) and the definition of $\hat{\theta }_\varepsilon $, we see that 
both the numerator and the denominator of the right-hand side of the above equality converge to zero by letting $w\rightarrow \hat{w}_\varepsilon $. 
Therefore, we can apply l'H\^opital's rule to obtain 
\begin{eqnarray}\label{temp_2_1}
\hat{\theta }^{(n)}_\varepsilon = 
\lim _{w\rightarrow \hat{w}_\varepsilon }\frac{f'_n(w)}{K''(w)\theta '(w)} = 
\frac{f'_n(\hat{w}_\varepsilon )}{\sqrt{K''(\hat{\theta }_\varepsilon )}}. 
\end{eqnarray}
By Lemma \ref {rel_f} and Proposition \ref {estimate_fn}, 
we see that $f'_n(\hat{w}_\varepsilon )$ has the form 
\begin{eqnarray}
f'_n(\hat{w}_\varepsilon ) &=& -n\sqrt{K''(\hat{\theta }_\varepsilon )}\hat{\theta }^{(n)}_\varepsilon - 
\sum ^{m^n}_{i = 1}a^n_iK^{(r^n_i)}(\hat{\theta }_\varepsilon )(\hat{\theta }'_\varepsilon )^{s^n_i}\prod ^{n-1}_{j=2}(\hat{\theta }^{(j)}_\varepsilon )^{k^n_{i, j}}\label{temp_2_2}
\end{eqnarray}
for some $m^n, a^n_i, r^n_i, s^n_i, k^n_{i,2}, \ldots , k^n_{i, n-1}$ $(i = 1, \ldots , m^n)$ with $\sum ^{n-1}_{j=2}(j-1)k^n_{i, j} + r^n_i = n+1$. 
By (\ref {temp_2_1})--(\ref {temp_2_2}), we have 
\begin{eqnarray*}
\hat{\theta }^{(n)}_\varepsilon = \frac{1}{(n+1)\sqrt{K''(\hat{\theta }_\varepsilon )}}\sum ^{m^n}_{i = 1}a^n_iK^{(r^n_i)}(\hat{\theta }_\varepsilon )(\hat{\theta }'_\varepsilon )^{s^n_i}\prod ^{n-1}_{j=2}(\hat{\theta }^{(j)}_\varepsilon )^{k^n_{i, j}}. 
\end{eqnarray*}
Here, by the supposition $\hat{\theta }^{(j)}_\varepsilon  = O(\varepsilon ^{j-1})$ as 
$\varepsilon \rightarrow 0$ for $j = 2, \ldots , j = n-1$ and that [A4] holds, 
we see that the term 
\begin{eqnarray*}
K^{(r^n_i)}(\hat{\theta }_\varepsilon )(\hat{\theta }'_\varepsilon )^{s^n_i}\prod ^{n-1}_{j=2}(\hat{\theta }^{(j)}_\varepsilon )^{k^n_{i, j}}
\end{eqnarray*}
has order $O(\varepsilon ^{r^n_i-2+\sum _j(j-1)k^n_{i,j}}) = O(\varepsilon ^{n-1})$ 
as $\varepsilon \rightarrow 0$. 
Thus, $\hat{\theta }^{(n)}_\varepsilon  = O(\varepsilon ^{n-1})$ as $\varepsilon \rightarrow 0$. 
Therefore, the assertion is also true for $n$. Induction completes the proof.
\end{proof}

\begin{lemma}\label{lem_order_g_diff} \ For each $n\geq 3$, $g^{(n)}(\hat{w}_\varepsilon ) = O(\varepsilon ^3)$ as $\varepsilon \rightarrow 0$. 
\end{lemma}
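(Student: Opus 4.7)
The plan is to prove the lemma by induction on $n$, starting from the base case $n=3$ which has already been established as Proposition \ref{prop_g3}. All the heavy machinery has already been assembled; what remains is a clean bookkeeping argument that exploits the recurrence from Lemma \ref{lem_g_diff} together with the order bound on $\hat{\theta}^{(n)}_\varepsilon$ from Proposition \ref{prop_theta_n}.

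For the inductive step, suppose that $g^{(k)}(\hat{w}_\varepsilon)=O(\varepsilon^3)$ for some $k\geq 3$. I would apply Lemma \ref{lem_g_diff} at $w=\hat{w}_\varepsilon$, which is permitted because $\hat{w}_\varepsilon\in\hat{\mathcal{O}}_{\varepsilon,+}$ and, by Lemma \ref{lem_SP_conti} together with Corollary \ref{cor_not_zero}, $\hat{w}_\varepsilon$ converges to the strictly positive limit $\hat{w}_0$ (so in particular $1/\hat{w}_\varepsilon$ is bounded for small $\varepsilon$). This yields
\begin{equation*}
g^{(k+1)}(\hat{w}_\varepsilon)=\frac{\hat{\theta}^{(k+1)}_\varepsilon-(k+1)\,g^{(k)}(\hat{w}_\varepsilon)}{\hat{w}_\varepsilon}.
\end{equation*}
By Proposition \ref{prop_theta_n}, $\hat{\theta}^{(k+1)}_\varepsilon=O(\varepsilon^{k})$, and since $k\geq 3$ we have $\hat{\theta}^{(k+1)}_\varepsilon=O(\varepsilon^3)$. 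The inductive hypothesis gives $g^{(k)}(\hat{w}_\varepsilon)=O(\varepsilon^3)$ directly, and dividing by $\hat{w}_\varepsilon$ (bounded away from zero) preserves the order. Hence $g^{(k+1)}(\hat{w}_\varepsilon)=O(\varepsilon^3)$, closing the induction.

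I do not anticipate any serious obstacle. The only points deserving a brief check are: (a) that the identity in Lemma \ref{lem_g_diff} is indeed valid at $w=\hat{w}_\varepsilon$, which follows because $g$ is holomorphic on a neighbourhood of $\hat{w}_\varepsilon$ (as $\theta(w)$ is analytic there and $\hat{w}_\varepsilon\neq 0$), so one may simply differentiate the identity $wg(w)=\theta(w)$ by the Leibniz rule; and (b) that the orders $O(\varepsilon^{k})$ coming from Proposition \ref{prop_theta_n} are automatically subsumed by $O(\varepsilon^3)$ once $k\geq 3$. These are both transparent, so the argument is essentially a one-line induction once the base case $n=3$ and Proposition \ref{prop_theta_n} are in hand.
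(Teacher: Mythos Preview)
Your proposal is correct and follows essentially the same approach as the paper's own proof, which simply states that the case $n=3$ is Proposition \ref{prop_g3} and that the case $n\geq 4$ follows by induction from Lemma \ref{lem_g_diff} and Proposition \ref{prop_theta_n}. You have spelled out the inductive step in exactly the way the paper intends.
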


\begin{proof} The assertion is true for $n = 3$ by Proposition \ref {prop_g3}. 
For $n \geq 4$, the assertion is obtained by Lemma \ref {lem_g_diff}, Proposition \ref {prop_theta_n}, and induction. 
\end{proof}

\begin{proof}[Proof of Theorem \ref {th_main}] 
Since $(\phi (\hat{w}_\varepsilon ))_{0\leq \varepsilon \leq 1}$ is bounded, 
it suffices to show that 
$\psi ^{(m)}(\hat{w}_\varepsilon ) =
$\\ $
\allowbreak O(\varepsilon ^{\min \{ 2m + 1, 3 \}})$, $\varepsilon \rightarrow 0$ for $m\geq 0$. 
From (\ref {temp2})--(\ref {temp3}), we have that 
\begin{eqnarray*}
\psi _\varepsilon (\hat{w}_\varepsilon ) = \frac{g'(\hat{w}_\varepsilon )}{g(\hat{w}_\varepsilon )} = O(\varepsilon )\ \ \mathrm {as} \ \ \varepsilon \rightarrow 0
\end{eqnarray*}
by Proposition \ref {prop_g_diff_1} and that 
\begin{eqnarray*}
\psi _\varepsilon ''(\hat{w}_\varepsilon ) = 
\frac{g'''(\hat{w}_\varepsilon )}{g(\hat{w}_\varepsilon )} - 
\frac{g'(\hat{w}_\varepsilon )g''(\hat{w}_\varepsilon )}{g(\hat{w}_\varepsilon )^2} + 
\frac{2(g'(\hat{w}_\varepsilon ))^3}{g(\hat{w}_\varepsilon )^3} = O(\varepsilon ^3)\ \ \mathrm {as} \ \ 
\varepsilon \rightarrow 0
\end{eqnarray*}
by Propositions \ref {prop_g_diff_1}, \ref {prop_g2}, and \ref {prop_g3}. 
For $m\geq 2$, we get the assertion by Lemmas \ref {lem_h_diff} and \ref {lem_order_g_diff}. 
\end{proof}

\section{Extentions}\label{sec_extention}

\subsection{Error Estimates of the Higher Order LR Formulae}\label{sec_error_estimate}

In the beginning of this subsection, 
we introduce the following proposition. 

\begin{proposition}\label{prop_rep_g} \ For each $n$, 
\begin{eqnarray}\label{Taylor_series_g}
g^{(n)}(\hat{w}_\varepsilon ) = \sum ^\infty _{k = n + 1}
\frac{n!}{k!}\hat{\theta }^{(k)}_\varepsilon (-\hat{w}_\varepsilon )^{k - n - 1}. 
\end{eqnarray}
\end{proposition}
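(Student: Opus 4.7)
The plan is to derive \eqref{Taylor_series_g} by reading off the Taylor coefficients of $g(w) = \theta(w)/w$ at $\hat{w}_\varepsilon$ from those of $\theta(w)$, using crucially the fact that $\theta(0) = 0$.

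First I would establish $\theta(0) = 0$. Substituting $w = 0$ into the defining change of variables \eqref{change_variable} yields $K_\varepsilon(\theta(0)) - x\theta(0) = 0$, and comparing with $K_\varepsilon(0) - x\cdot 0 = 0$, uniqueness of the analytic branch selected by $\theta(w(\hat{\theta}_\varepsilon)) = \hat{\theta}_\varepsilon$ forces $\theta(0) = 0$. Next, I would use the fact that $\theta$ is analytic on a disc $D$ centred at $\hat{w}_\varepsilon$ whose radius exceeds $|\hat{w}_\varepsilon|$ (so that $0 \in D$); consequently $g(w) = \theta(w)/w$ is analytic on $D$ (the apparent singularity at $0$ is removable since $\theta(0) = 0$), and in particular the Taylor expansion of $\theta$ at $\hat{w}_\varepsilon$ converges on $D$:
\begin{equation*}
\theta(w) = \sum_{k=0}^\infty \frac{\hat{\theta}^{(k)}_\varepsilon}{k!}(w - \hat{w}_\varepsilon)^k.
\end{equation*}

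The key computational step is the factorisation identity
\begin{equation*}
(w - \hat{w}_\varepsilon)^k - (-\hat{w}_\varepsilon)^k = w\sum_{j=0}^{k-1}(w-\hat{w}_\varepsilon)^j(-\hat{w}_\varepsilon)^{k-1-j},
\end{equation*}
applied term-by-term to $\theta(w) = \theta(w) - \theta(0)$. After dividing by $w$, interchanging the order of summation (justified by absolute convergence on a closed subdisc of $D$, which follows from the Cauchy estimates on $\hat{\theta}^{(k)}_\varepsilon$), one obtains
\begin{equation*}
g(w) = \sum_{j=0}^\infty (w-\hat{w}_\varepsilon)^j \Biggl(\sum_{k=j+1}^\infty \frac{\hat{\theta}^{(k)}_\varepsilon}{k!}(-\hat{w}_\varepsilon)^{k-1-j}\Biggr),
\end{equation*}
and comparing with $g(w) = \sum_{j=0}^\infty \frac{g^{(j)}(\hat{w}_\varepsilon)}{j!}(w-\hat{w}_\varepsilon)^j$ on $D$ yields \eqref{Taylor_series_g} with $n$ in place of $j$.

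The main obstacle I expect is not the formal manipulation, which is elementary, but the verification that $\theta$ (equivalently, the selected branch coming from the change of variables) is actually analytic on a disc around $\hat{w}_\varepsilon$ containing $0$, so that the Taylor series converges up to and including $w = 0$. Under [B1]--[B2] this should follow from the fact that the map $w \mapsto \theta(w)$ is well-defined and holomorphic throughout the region $\hat{U}$ produced in the proof of Theorem \ref{th_exact_LR}, together with $0 \in \hat{\mathcal{O}}_{\varepsilon,+}$; once this is in hand, the swap of summation and the termwise identification of Taylor coefficients are routine.
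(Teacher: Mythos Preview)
Your argument is correct and reaches the same destination as the paper, but by a different mechanical route. The paper does not manipulate the power series of $g$ directly; instead it first uses Lemma~\ref{lem_g_diff} (the recursion $wg^{(n)}(w)=\theta^{(n)}(w)-ng^{(n-1)}(w)$) together with induction to derive the \emph{finite} identity
\[
\hat{w}_\varepsilon^{\,n+1} g^{(n)}(\hat{w}_\varepsilon)=\sum_{k=0}^{n}(-1)^{n-k}\frac{n!}{k!}\hat{w}_\varepsilon^{\,k}\hat{\theta}^{(k)}_\varepsilon,
\]
and only then inserts the Taylor expansion of $\theta$ at $\hat{w}_\varepsilon$ evaluated at $w=0$ (using $\theta(0)=0$) to replace the $k=0$ term by the infinite tail $k\ge n+1$. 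Your route---subtract $\theta(0)$, apply the factorisation $(w-\hat{w}_\varepsilon)^k-(-\hat{w}_\varepsilon)^k=w\sum_{j=0}^{k-1}(w-\hat{w}_\varepsilon)^j(-\hat{w}_\varepsilon)^{k-1-j}$, divide by $w$, swap sums, and read off the coefficient of $(w-\hat{w}_\varepsilon)^n$---is more self-contained: it does not need Lemma~\ref{lem_g_diff} or the inductive step, and it yields all $n$ simultaneously. The paper's approach, on the other hand, leverages a recursion already proved and used elsewhere (Propositions~\ref{prop_g_diff_1}, \ref{prop_g3}, Lemma~\ref{lem_order_g_diff}), and produces the useful intermediate finite-sum formula as a by-product. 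Both arguments hinge on exactly the same analytic input you flagged: that the Taylor series of $\theta$ about $\hat{w}_\varepsilon$ converges at $w=0$ (equivalently, that $0$ lies in the disc of analyticity). The paper does not address this point more carefully than you do; indeed, immediately after the proposition it introduces condition~[A6] precisely to control the convergence of the series \eqref{Taylor_series_g}.
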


\begin{proof} 
Using Lemma \ref {lem_g_diff} and induction, 
we see that $g^{(n)}(\hat{w}_\varepsilon )$ can be represented as 
\begin{eqnarray}\nonumber 
\hat{w}_\varepsilon ^{n + 1}g^{(n)}(\hat{w}_\varepsilon ) &=& 
\sum ^n_{k = 0}(-1)^{n-k}\frac{n!}{k!}\hat{w}_\varepsilon ^k\hat{\theta }_\varepsilon ^{(k)}\\
&=& 
(-1)^nn!\theta (\hat{w}_\varepsilon ) + 
\sum ^n_{k = 1}(-1)^{n-k}\frac{n!}{k!}\hat{w}_\varepsilon ^k\hat{\theta }_\varepsilon ^{(k)}. 
\label{temp_g_1}
\end{eqnarray}
Combining (\ref {temp_g_1}) with the Taylor expansion 
\begin{eqnarray*}\label{temp_g_2}
n!\theta (\hat{w}_\varepsilon ) = 
-n!(\theta (0) - \theta (\hat{w}_\varepsilon )) = 
-\sum ^\infty _{k = 1}\frac{n!}{k!}\hat{\theta }^{(k)}_\varepsilon (-\hat{w}_\varepsilon )^k, 
\end{eqnarray*}
we get the desired assertion. 
\end{proof}

Here, by Proposition \ref {prop_theta_n}, there are positive constants 
$C_n$ (with $n\geq 2$), such that 
\begin{eqnarray}\label{estimate_hat_theta}
|\hat{\theta }^{(n)}_\varepsilon |\leq C_n\varepsilon ^{n - 1}. 
\end{eqnarray}
Therefore, if we assume the further condition [A6] below, 
then the series (\ref {Taylor_series_g}) converges absolutely when $\varepsilon $ is small. 
\begin{itemize}
 \item [ {[A6]} ] There exists $\varepsilon _0 \in (0, 1]$ such that 
\begin{eqnarray*}
\sum ^\infty _{k = 2}\frac{C_k}{k!}\varepsilon _0^k < \infty . 
\end{eqnarray*}
\end{itemize}
Moreover, we obtain the following theorem.
\begin{theorem}\label{th_main2} \ 
Assume $\mathrm {[A1]}$--$\mathrm {[A6]}$. 
Then $h^{(n)}(\hat{w}_\varepsilon ) = O(\varepsilon ^n)$, $\varepsilon \rightarrow 0$ holds for each $n\geq 1$. 
Moreover, $\Psi ^\varepsilon _m(\hat{w}_\varepsilon ) = O(\varepsilon ^{2m + 1})$, $\varepsilon \rightarrow 0$ holds for each $m\geq 0$. 
\end{theorem}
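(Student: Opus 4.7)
The plan is to prove the two asserted bounds in sequence: first derivative bounds for $g$, then for $h=\log g$, then finally the claim about $\Psi^\varepsilon_m$. The whole argument is essentially an order-chasing exercise on the Taylor series from Proposition \ref{prop_rep_g}, leveraging assumption [A6] in a crucial way.

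\textbf{Step 1 (order of $g^{(n)}(\hat w_\varepsilon)$).} I would begin from the series representation in Proposition \ref{prop_rep_g} and apply the bound $|\hat\theta^{(k)}_\varepsilon|\le C_k\varepsilon^{k-1}$ from (\ref{estimate_hat_theta}). Factoring out the dominant $\varepsilon^n$ (coming from $k=n+1$), one gets
\begin{eqnarray*}
|g^{(n)}(\hat w_\varepsilon)|
\;\le\; n!\,\varepsilon^{n}\sum_{k=n+1}^{\infty}\frac{C_k}{k!}(\varepsilon|\hat w_\varepsilon|)^{k-n-1}.
\end{eqnarray*}
Since $(\hat w_\varepsilon)_\varepsilon$ is bounded (Lemma \ref{lem_SP_conti}), for $\varepsilon$ small enough that $\varepsilon|\hat w_\varepsilon|\le\varepsilon_0$, the tail sum is dominated by $\sum_{k\ge 2}C_k\varepsilon_0^k/k!<\infty$, which is finite by [A6]. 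This yields $g^{(n)}(\hat w_\varepsilon)=O(\varepsilon^n)$ for each $n\ge 1$ (which sharpens Propositions \ref{prop_g_diff_1}, \ref{prop_g2}, \ref{prop_g3} and extends them to all $n$).

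\textbf{Step 2 (order of $h^{(n)}(\hat w_\varepsilon)$).} I would then invoke Lemma \ref{lem_h_diff}, which expresses
\begin{eqnarray*}
h^{(n)}(w)\;=\;\sum_{k=1}^{m_n}\frac{a_k}{g(w)^{b_k}}\prod_{i=0}^{n}\bigl(g^{(c_{i,k})}(w)\bigr)^{d_{i,k}},\qquad
\sum_{i=0}^{n}c_{i,k}d_{i,k}=n.
\end{eqnarray*}
Lemma \ref{lem_SP_conti} and Corollary \ref{cor_not_zero} ensure that $g(\hat w_\varepsilon)=\hat\theta_\varepsilon/\hat w_\varepsilon$ is bounded and bounded away from zero for small $\varepsilon$, so $g(\hat w_\varepsilon)^{-b_k}=O(1)$. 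Factors with $c_{i,k}=0$ are $O(1)$; factors with $c_{i,k}\ge 1$ contribute $O(\varepsilon^{c_{i,k}d_{i,k}})$ by Step 1. Multiplying and using $\sum_i c_{i,k}d_{i,k}=n$, each term is $O(\varepsilon^{n})$, whence $h^{(n)}(\hat w_\varepsilon)=O(\varepsilon^{n})$.

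\textbf{Step 3 ($\Psi^\varepsilon_m$ estimate).} Since $\psi=h'$, we have $\psi^{(2m)}(w)=h^{(2m+1)}(w)$, so Step 2 gives $\psi^{(2m)}(\hat w_\varepsilon)=O(\varepsilon^{2m+1})$. Combining with the definition (\ref{def_Psi_m}) and the uniform boundedness of $\phi(\hat w_\varepsilon)$ yields $\Psi^\varepsilon_m(\hat w_\varepsilon)=O(\varepsilon^{2m+1})$.

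The delicate point is Step 1: the preceding order estimates (Propositions \ref{prop_g_diff_1}--\ref{prop_g3}) were handled term by term via l'H\^opital-style computations that become prohibitive for large $n$, and Proposition \ref{prop_theta_n} alone is not quite enough — one needs the Taylor-series identity of Proposition \ref{prop_rep_g} together with a summability assumption that lets one control the full tail of the series uniformly in $\varepsilon$. Assumption [A6] is introduced precisely so that the radius-of-convergence-type bound survives when all terms are combined. Once Step 1 is secured, Steps 2 and 3 are essentially bookkeeping against the combinatorial formula for derivatives of $\log g$.
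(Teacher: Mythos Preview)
Your proposal is correct and follows exactly the route the paper takes: the paper's proof is the single line ``immediate consequence of (\ref{Taylor_series_g}) and Lemma~\ref{lem_h_diff}'', and your Steps~1--3 simply unpack that line. One small imprecision: in Step~1 the tail sum $\sum_{k\ge n+1}\frac{C_k}{k!}(\varepsilon|\hat w_\varepsilon|)^{k-n-1}$ is dominated by $\varepsilon_0^{-(n+1)}\sum_{k\ge 2}\frac{C_k}{k!}\varepsilon_0^{k}$ rather than by the [A6] sum itself, but this extra constant factor is harmless for the $O(\varepsilon^n)$ conclusion.
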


\begin{proof}
This is an immediate consequence of (\ref {Taylor_series_g}) and Lemma \ref {lem_h_diff}. 
\end{proof}

By the above theorem, we see that 
there are positive constants 
$C'_n$ (where $n\geq 2$) such that 
$|h^{(n)}(\hat{w}_\varepsilon )|\leq C'_n\varepsilon ^{n}$, and hence 
\begin{eqnarray*}
|\Psi ^\varepsilon _m(\hat{w}_\varepsilon )| \leq \phi (\hat{w}_\varepsilon )\frac{C'_{2m + 1}}{(2m)!!}\varepsilon ^{2m+1}. 
\end{eqnarray*}
Now we introduce the condition [A7].
\begin{itemize}
 \item [ {[A7]} ] There exists $\varepsilon _1 \in (0, 1]$ such that 
\begin{eqnarray*}
\sum ^\infty _{m = 1}\frac{C'_{2m + 1}}{(2m)!!}\varepsilon _1^{2m+1} < \infty. 
\end{eqnarray*}
\end{itemize}

Then, obviously we have the theorem below.

\begin{theorem}\label{th_main3} \ 
Assume $\mathrm {[A1]}$--$\mathrm {[A7]}$and that $(\ref {exact_LR})$ holds. 
Then the expansion formula $(\ref {error_estimate})$ holds. 
\end{theorem}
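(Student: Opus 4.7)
The plan is to directly combine the exact expansion (\ref{exact_LR}) from Theorem \ref{th_exact_LR} with the term-by-term bound produced just before the statement of [A7]. Since the expansion holds by hypothesis, for each $M\geq 0$ I can simply write
\begin{eqnarray*}
\bar{F}_\varepsilon(x) - \bar{\Phi}(\hat{w}_\varepsilon) - \sum_{m=0}^M \Psi_m^\varepsilon(\hat{w}_\varepsilon)
\ = \ \sum_{m=M+1}^\infty \Psi_m^\varepsilon(\hat{w}_\varepsilon),
\end{eqnarray*}
so the entire task reduces to showing that this tail sum is $O(\varepsilon^{2M+3})$.

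For the tail, I would invoke the inequality
$|\Psi_m^\varepsilon(\hat{w}_\varepsilon)|\leq \phi(\hat{w}_\varepsilon) C'_{2m+1}\varepsilon^{2m+1}/(2m)!!$
that was derived from Theorem \ref{th_main2} (this is where [A6] is used, by way of Proposition \ref{prop_rep_g} and Lemma \ref{lem_h_diff}). Factoring out the expected power of $\varepsilon$, for any $\varepsilon\in(0,\varepsilon_1]$ and $m\geq M+1$ we have $\varepsilon^{2m+1}=\varepsilon^{2M+3}\varepsilon^{2(m-M-1)}\leq \varepsilon^{2M+3}\varepsilon_1^{2(m-M-1)}$, so
\begin{eqnarray*}
\sum_{m=M+1}^\infty \frac{C'_{2m+1}}{(2m)!!}\varepsilon^{2m+1}
\ \leq\ \varepsilon^{2M+3}\varepsilon_1^{-(2M+3)}\sum_{m=M+1}^\infty \frac{C'_{2m+1}}{(2m)!!}\varepsilon_1^{2m+1}.
\end{eqnarray*}
Condition [A7] is precisely what guarantees that the series on the right is finite, so the tail is bounded by a constant times $\varepsilon^{2M+3}$. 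Since $\phi(\hat{w}_\varepsilon)$ is bounded in $\varepsilon$ (by Lemma \ref{lem_SP_conti}), this yields the claim.

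The only delicate point is the uniform-in-$m$ factorisation of $\varepsilon^{2m+1}$ that turns the termwise estimate of Theorem \ref{th_main2} into a geometric-like remainder; this is exactly why [A7] is imposed on $\varepsilon_1$-scaled data rather than on the raw constants $C'_{2m+1}/(2m)!!$. Because the truncation error equals a tail of the convergent series (\ref{exact_LR}), no further interchange of limits or contour manipulation is needed, and the proof is essentially an application of the dominated-style bound above. I therefore anticipate no real obstacles beyond bookkeeping.
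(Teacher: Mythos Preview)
Your argument is correct and is precisely the ``obvious'' verification the paper has in mind: the paper gives no proof beyond the sentence ``Then, obviously we have the theorem below,'' relying on the termwise bound $|\Psi_m^\varepsilon(\hat{w}_\varepsilon)|\leq \phi(\hat{w}_\varepsilon)C'_{2m+1}\varepsilon^{2m+1}/(2m)!!$ together with [A7]. Your factorisation $\varepsilon^{2m+1}\le \varepsilon^{2M+3}\varepsilon_1^{2(m-M-1)}$ for $\varepsilon\le\varepsilon_1$ is exactly the intended step that converts the summability in [A7] into the $O(\varepsilon^{2M+3})$ remainder.
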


Note that [A6]--[A7] are technical conditions that may be hard to verify directly in the general case. 
However, the results in Section \ref {sec_eg} suggest that 
the assertions of Theorems \ref {th_main2}--\ref {th_main3} are likely to be valid 
in many cases. 

\subsection{Application to the Daniels Formula for Density Functions}
\label{sec_Daniels}

In this subsection, 
we study the order estimates for the saddlepoint approximation formula of Daniels (1954), 
which approximates the probability density function. 
Let $x\in \Bbb {R}$ and define $\hat{\theta }^{(n)}_\varepsilon , \hat{w}_\varepsilon $ 
as are done in Section \ref {sec_proof_estimate}. 
By an argument similar to that in Section \ref {sec_LR}, 
we can prove the following ``exact'' Daniels expansion: 
\begin{eqnarray}\label{exact_Daniels}
f_\varepsilon (x) = \sum ^\infty _{m = 0}\Theta _m 
\end{eqnarray}
under suitable conditions, 
where $f_\varepsilon $ is the probability density function of $\mu _\varepsilon $ and 
\begin{eqnarray*}
\Theta _m= \phi (\hat{w}_\varepsilon )\frac{\hat{\theta }^{(2m + 1)}_\varepsilon }{(2m)!!}. 
\end{eqnarray*}
In the case of the sample mean of i.i.d.\hspace{1mm}random variables, 
this version of (\ref {exact_Daniels}) was studied as (3.3) in Daniels (1954) and 
(2.5) in Daniels (1980). 
In the general case, 
we can obtain (\ref {exact_Daniels}) 
under, for instance, [A1]--[A5], [B1]--[B2] and the following additional condition. 
\begin{itemize}
 \item [ {[A8]}] There exists  $\varepsilon _2\in (0, 1]$ such that 
\begin{eqnarray*}
\sum ^{\infty }_{n = 1}\frac{C_n}{n!!}\varepsilon _2^n < \infty , 
\end{eqnarray*}
where $C_n > 0$ is a constant appearing in (\ref {estimate_hat_theta}). 
\end{itemize}

We can easily show the following by arguments similar to those in Section 
\ref {sec_proofs} and Subsection \ref {sec_error_estimate} (we omit the proof here). 
\begin{theorem}\label{th_Daniels} \ 
Assume $\mathrm {[A1]}$--$\mathrm {[A5]}$. 
Moreover assume that $(\ref {exact_Daniels})$ holds. 
Then $\Theta _m = O(\varepsilon ^{2m})$ as $\varepsilon \rightarrow 0$ for each $m\geq 0$. 
Moreover, if we further assume $\mathrm {[A8]}$, 
it holds that 
\begin{eqnarray*}
f_\varepsilon (x) = \sum ^M_{m = 0}\Theta _m + O(\varepsilon ^{2(M+1)})\ \ \mathrm {as} 
\ \ \varepsilon \rightarrow 0\ \ \mathrm {for \ each} \ \ M\geq 0. 
\end{eqnarray*}
\end{theorem}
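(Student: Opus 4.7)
The proof proposal has two independent parts that both reduce to the order estimates already developed in Section~\ref{sec_proof_estimate}.

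For the first assertion, that $\Theta_m = O(\varepsilon^{2m})$ for each $m \geq 0$, the plan is to read off the estimate directly from Proposition~\ref{prop_theta_n}. By Lemma~\ref{lem_SP_conti}, $\hat{w}_\varepsilon \to \hat{w}_0$, so $\phi(\hat{w}_\varepsilon)$ is bounded in $\varepsilon$. For $m = 0$, Proposition~\ref{first_diff_theta} gives $\hat{\theta}'_\varepsilon = 1/\sqrt{K''_\varepsilon(\hat{\theta}_\varepsilon)}$, which is $O(1) = O(\varepsilon^0)$ by [A3] and [A5]. For $m \geq 1$, Proposition~\ref{prop_theta_n} with $n = 2m+1 \geq 3$ yields $\hat{\theta}^{(2m+1)}_\varepsilon = O(\varepsilon^{2m})$, and dividing by the fixed constant $(2m)!!$ preserves the order. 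This half of the statement is essentially a one-line consequence.

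For the second assertion, I would start from the exact Daniels expansion (\ref{exact_Daniels}) and truncate:
\begin{equation*}
f_\varepsilon(x) - \sum_{m=0}^M \Theta_m = \sum_{m=M+1}^\infty \Theta_m.
\end{equation*}
Using the explicit constants $C_n$ from (\ref{estimate_hat_theta}), I bound
\begin{equation*}
|\Theta_m| \;\leq\; \phi(\hat{w}_\varepsilon)\,\frac{C_{2m+1}}{(2m)!!}\,\varepsilon^{2m},
\end{equation*}
factor out $\varepsilon^{2(M+1)}$, and reduce the problem to showing that
\begin{equation*}
\sum_{m=M+1}^\infty \frac{C_{2m+1}}{(2m)!!}\,\varepsilon^{2m-2(M+1)}
\end{equation*}
remains bounded as $\varepsilon \to 0$. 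Restricting to $\varepsilon \leq \varepsilon_2$ from [A8], this is dominated by the tail of the absolutely convergent series $\sum_n (C_n/n!!)\varepsilon_2^n$ (up to the elementary comparison between $(2m)!!$ and $(2m+1)!!$, which costs only a polynomial factor absorbed by choosing $\varepsilon$ slightly below $\varepsilon_2$). Hence the tail is uniformly bounded and the truncation error is $O(\varepsilon^{2(M+1)})$.

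The main technical point — and really the only non-routine step — is passing cleanly from the bound on $(C_{2m+1}/(2m+1)!!)$ provided by [A8] to the bound on $(C_{2m+1}/(2m)!!)$ that naturally appears in $\Theta_m$; this is handled by the double-factorial identity $(2m+1)!!/(2m)!! = O(\sqrt{m})$ (Wallis) together with a small shrinkage of the convergence radius. Otherwise, the entire proof is a direct application of the order estimates from Section~\ref{sec_proof_estimate}, which is why the authors write that it can be obtained ``by arguments similar to those in Section~\ref{sec_proofs} and Subsection~\ref{sec_error_estimate}'' and omit it.
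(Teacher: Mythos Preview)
Your proposal is correct and is exactly the argument the paper has in mind: the first assertion is an immediate consequence of Proposition~\ref{prop_theta_n} (together with Proposition~\ref{first_diff_theta} for $m=0$ and the boundedness of $\phi(\hat{w}_\varepsilon)$ from Lemma~\ref{lem_SP_conti}), and the second assertion follows by truncating (\ref{exact_Daniels}) and bounding the tail via the constants $C_n$ and [A8], precisely mirroring how [A7] is used for Theorem~\ref{th_main3}. The paper omits the proof for this reason, and your identification of the only delicate point---the passage from $(2m+1)!!$ in [A8] to $(2m)!!$ in $\Theta_m$, handled by Wallis and a shrinkage $\varepsilon' < \varepsilon_2$ of the radius---is accurate and sufficient.
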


\section{Concluding Remarks}\label{sec_conclusion}

For a general, parametrised sequence of
random variables $(X^{(\varepsilon)})_{\varepsilon>0}$, 
assuming that the $r$th cumultant of $X^{(\varepsilon)}$
has order $O(\varepsilon ^{r-2})$ 
as $\varepsilon \rightarrow 0$ for each $r\geq 3$, 
we derive the ``exact'' Lugannnani-Rice expansion formula
for the right tail probability
\[
P\left(X^{(\varepsilon)}>x \right) 
= 1 - \Phi (\hat{w}_\varepsilon ) 
+ \sum ^\infty _{m = 0}\Psi ^\varepsilon _m(\hat{w}_\varepsilon ),
\]
where $x\in {\mathbb R}$ is fixed to a given value. 
In particular, we have obtained 
the order estimates of each term in the expansion. 
For the first two terms, we have that 
$\Psi ^\varepsilon _0(\hat{w}_\varepsilon )=O(\varepsilon )$ and 
$\Psi ^\varepsilon _1(\hat{w}_\varepsilon )=O(\varepsilon ^3)$
as $\varepsilon \to 0$, respectively. 
Under some additional conditions, 
the $m$th term satisfies
$\Psi ^\varepsilon _m(\hat{w}_\varepsilon ) = O(\varepsilon ^{2m + 1})$ 
as $\varepsilon \to 0$. 
Using these, we have established (\ref {error_estimate}) 
for each $m, M\geq 0$. 
As numerical examples, 
we chose stochastic volatility models in financial mathematics;
we checked the validity of our order estimates for the LR formula.

The following are interesting and important 
future research topics related to this work.
\begin{itemize}
 \item[(i)]
Analysing the far-right tail probability 
\begin{eqnarray*}
P\left( X^{(\varepsilon)}>\frac{x}{\varepsilon}\right), 
\end{eqnarray*}
using an LR type expansion,  
which is compatible with the classical LR formula 
(see Remark \ref {rem_intro} in Introduction).
In this case, the saddlepoint diverges as $\varepsilon \rightarrow 0$ 
allowing us to avoid the difficulty in calculating
(\ref {integral_sum}) 
by using Watson's lemma (see Watson (1918) or Kolassa (1997)). 
Hence, we can expect that condition [B3] may be omitted;
this condition was imposed when we derived the exact LR expansion.

 \item[(ii)]
Seeking more ``natural'' conditions 
than [A6]--[A7] for obtaining the error estimate (\ref {error_estimate}).

 \item[(iii)]
Studying order estimates for 
generalized LR expansions with non-Gaussian bases. 
Among studies of the expansions without order estimates are  
Wood, Booth and Butler (1993), Rogers and Zane (1999), 
Butler (2007), and Carr and Madan (2009).
\end{itemize}

\appendix

\section{Explicit Forms of Higher Order Approximation Terms}
\label{sec_higher_order}
In this section, we introduce the derivation of $\Psi ^\varepsilon _2(\hat{w}_\varepsilon )$ and 
$\Psi ^\varepsilon _3(\hat{w}_\varepsilon )$. 
First, we can inductively calculate $\hat{\theta }^{(r)}_\varepsilon $ for $r\geq 4$ by 
the same calculation as the proof of Proposition \ref {prop_theta_n}. 
\begin{proposition} \ \label{theta_high}
\begin{eqnarray*}
\hat{\theta }^{(4)}_\varepsilon &=& 
-\frac{K^{(5)}(\hat{\theta }_\varepsilon )}{5(K''(\hat{\theta }_\varepsilon ))^3} + 
\frac{K^{(3)}(\hat{\theta }_\varepsilon )K^{(4)}(\hat{\theta }_\varepsilon )}{(K''(\hat{\theta }_\varepsilon ))^4} - 
\frac{8(K^{(3)}(\hat{\theta }_\varepsilon ))^3}{9(K''(\hat{\theta }_\varepsilon ))^5}, \\
\hat{\theta }^{(5)}_\varepsilon &=& 
-\frac{K^{(6)}(\hat{\theta }_\varepsilon )}{6(K''(\hat{\theta }_\varepsilon ))^{7/2}} + 
\frac{35 (K^{(4)}(\hat{\theta }_\varepsilon ))^2}{48(K''(\hat{\theta }_\varepsilon ))^{9/2}} + 
\frac{7K^{(3)}(\hat{\theta }_\varepsilon )K^{(5)}(\hat{\theta }_\varepsilon )}{6(K''(\hat{\theta }_\varepsilon ))^{9/2}}\\
&& - 
\frac{35(K^{(3)}(\hat{\theta }_\varepsilon ))^2K^{(4)}(\hat{\theta }_\varepsilon )}{8(K''(\hat{\theta }_\varepsilon ))^{11/2}} + 
\frac{385(K^{(3)}(\hat{\theta }_\varepsilon ))^4}{144(K''(\hat{\theta }_\varepsilon ))^{13/2}}, \\
\hat{\theta }^{(6)}_\varepsilon &=& 
-\frac{K^{(7)}(\hat{\theta }_\varepsilon )}{7(K''(\hat{\theta }_\varepsilon ))^{4}} - 
\frac{280(K^{(3)}(\hat{\theta }_\varepsilon ))^5}{27(K''(\hat{\theta }_\varepsilon ))^{8}} + 
\frac{200(K^{(3)}(\hat{\theta }_\varepsilon ))^3K^{(4)}(\hat{\theta }_\varepsilon )}
{9(K''(\hat{\theta }_\varepsilon ))^{7}}\\
&& - 
\frac{25(K^{(4)}(\hat{\theta }_\varepsilon ))^2}{3(K''(\hat{\theta }_\varepsilon ))^{6}} - 
\frac{20(K^{(3)}(\hat{\theta }_\varepsilon ))^2K^{(5)}(\hat{\theta }_\varepsilon )}
{3(K''(\hat{\theta }_\varepsilon ))^{6}} + 
\frac{2K^{(4)}(\hat{\theta }_\varepsilon )K^{(5)}(\hat{\theta }_\varepsilon )}
{(K''(\hat{\theta }_\varepsilon ))^{5}}\\&& + 
\frac{4K^{(3)}(\hat{\theta }_\varepsilon )K^{(6)}(\hat{\theta }_\varepsilon )}
{3(K''(\hat{\theta }_\varepsilon ))^{5}}, \\
\hat{\theta }^{(7)}_\varepsilon &=& 
-\frac{K^{(8)}(\hat{\theta }_\varepsilon )}{8(K''(\hat{\theta }_\varepsilon ))^{9/2}} - 
\frac{85085(K^{(3)}(\hat{\theta }_\varepsilon ))^6}{1728(K''(\hat{\theta }_\varepsilon ))^{19/2}} - 
\frac{25025(K^{(3)}(\hat{\theta }_\varepsilon ))^4K^{(4)}(\hat{\theta }_\varepsilon )}
{192(K''(\hat{\theta }_\varepsilon ))^{17/2}}\\
&& + 
\frac{5005(K^{(3)}(\hat{\theta }_\varepsilon ))^2(K^{(4)}(\hat{\theta }_\varepsilon ))^2}
{64(K''(\hat{\theta }_\varepsilon ))^{15/2}} - 
\frac{385(K^{(4)}(\hat{\theta }_\varepsilon ))^3}
{64(K''(\hat{\theta }_\varepsilon ))^{13/2}} + 
\frac{1001(K^{(3)}(\hat{\theta }_\varepsilon ))^3K^{(5)}(\hat{\theta }_\varepsilon )}
{24(K''(\hat{\theta }_\varepsilon ))^{15/2}}\\
&& - 
\frac{231K^{(3)}(\hat{\theta }_\varepsilon )K^{(4)}(\hat{\theta }_\varepsilon )K^{(5)}(\hat{\theta }_\varepsilon )}
{8(K''(\hat{\theta }_\varepsilon ))^{13/2}} + 
\frac{63(K^{(5)}(\hat{\theta }_\varepsilon ))^2}
{40(K''(\hat{\theta }_\varepsilon ))^{11/2}} - 
\frac{77(K^{(3)}(\hat{\theta }_\varepsilon ))^2K^{(6)}(\hat{\theta }_\varepsilon )}
{8(K''(\hat{\theta }_\varepsilon ))^{13/2}}\\
&& + 
\frac{21K^{(4)}(\hat{\theta }_\varepsilon )K^{(6)}(\hat{\theta }_\varepsilon )}
{8(K''(\hat{\theta }_\varepsilon ))^{11/2}} + 
\frac{3K^{(3)}(\hat{\theta }_\varepsilon )K^{(7)}(\hat{\theta }_\varepsilon )}
{2(K''(\hat{\theta }_\varepsilon ))^{11/2}}. 
\end{eqnarray*}
\end{proposition}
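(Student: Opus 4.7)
The plan is to extend the inductive computation already carried out for $\hat{\theta}''_\varepsilon$ and $\hat{\theta}'''_\varepsilon$ in Propositions \ref{second_diff_theta} and \ref{prop_3_diff_theta_eps}. The starting point is the recursion of Lemma \ref{rel_f},
\[
f_{n+1}(w) = f'_n(w) - K''(\theta(w))\,\theta'(w)\,\theta^{(n)}(w),
\]
together with the identity $\theta^{(n+1)}(w) = f_{n+1}(w)/(K'(\theta(w)) - x)$. Since $f_{n+1}(\hat{w}_\varepsilon) = 0$ by Proposition \ref{estimate_fn}(ii) and $K'(\hat{\theta}_\varepsilon) - x = 0$, l'H\^opital's rule at $w = \hat{w}_\varepsilon$ gives
\[
(n+1)\,\sqrt{K''(\hat{\theta}_\varepsilon)}\,\hat{\theta}^{(n+1)}_\varepsilon
= -\sum_{i=1}^{m^{n+1}} a^{n+1}_i\,K^{(r^{n+1}_i)}(\hat{\theta}_\varepsilon)(\hat{\theta}'_\varepsilon)^{s^{n+1}_i}\prod_{j=2}^{n-1}(\hat{\theta}^{(j)}_\varepsilon)^{k^{n+1}_{i,j}},
\]
after transferring the term containing $\hat{\theta}^{(n+1)}_\varepsilon$ on the right-hand side to the left-hand side, as in the proof of Proposition \ref{prop_theta_n}.

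To obtain the formula for $\hat{\theta}^{(4)}_\varepsilon$ I would start from the explicit expression $f_3(w) = -K'''(\theta)(\theta')^3 - 3K''(\theta)\theta'\theta''$ in (\ref{temp_f3}), compute $f_4 = f'_3 - K''(\theta)\theta'\theta^{(3)}$ by differentiation, evaluate at $w = \hat{w}_\varepsilon$, and substitute $\hat{\theta}'_\varepsilon = 1/\sqrt{K''(\hat{\theta}_\varepsilon)}$ together with (\ref{hat_theta_2}) and (\ref{3_diff_theta_eps}); solving the resulting linear equation yields the claimed rational function in $K''(\hat{\theta}_\varepsilon), K^{(3)}(\hat{\theta}_\varepsilon), K^{(4)}(\hat{\theta}_\varepsilon), K^{(5)}(\hat{\theta}_\varepsilon)$. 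The cases $n = 5, 6, 7$ proceed identically, each using the previously computed $\hat{\theta}^{(k)}_\varepsilon$ ($k \le n$) as input.

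The main obstacle is not conceptual but purely combinatorial. At each step $n$, differentiation of the product structure of $f_n$ produces a number of terms that grows rapidly, and every factor $\theta^{(j)}$ for $j \ge 2$ must be replaced with its previously computed expression in $K''(\hat{\theta}_\varepsilon), \ldots, K^{(j+1)}(\hat{\theta}_\varepsilon)$, after which substantial algebraic simplification is required to collapse common denominators into a single power of $K''(\hat{\theta}_\varepsilon)$. For $n = 6, 7$ the intermediate expressions involve dozens of monomials, so in practice I would carry out the bookkeeping with a computer algebra system and sanity-check the output against the known orders $\hat{\theta}^{(n)}_\varepsilon = O(\varepsilon^{n-1})$ from Proposition \ref{prop_theta_n}, which each term in the displayed formulas must individually respect.
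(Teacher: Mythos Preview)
Your approach is correct and is exactly the one the paper indicates: it simply states that one ``can inductively calculate $\hat{\theta}^{(r)}_\varepsilon$ for $r\geq 4$ by the same calculation as the proof of Proposition \ref{prop_theta_n},'' without writing out any details, so your outline is in fact more explicit than what the paper provides. One small slip: in your displayed linear equation the coefficient on the left should be $(n+2)$ rather than $(n+1)$, since applying l'H\^opital to $\theta^{(n+1)}=f_{n+1}/(K'(\theta)-x)$ and isolating the $\hat{\theta}^{(n+1)}_\varepsilon$ term coming from differentiating $-(n+1)K''(\theta)\theta'\theta^{(n)}$ in $f_{n+1}$ yields $(n+2)\sqrt{K''(\hat{\theta}_\varepsilon)}\,\hat{\theta}^{(n+1)}_\varepsilon$ on the left; this does not affect the method.
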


Second, 
by continuing the differentiation in (\ref {temp3}), we have 
\begin{eqnarray*}
h^{(4)}(w) &=& 
\frac{g^{(4)}(w)}{g(w)} - 
\frac{6(g'(w))^4}{g(w)^4} + 
\frac{12(g'(w))^2g''(w)}{g(w)^2} - 
\frac{3(g''(w))^2}{g(w)^2} - 
\frac{4g'(w)g^{(3)}(w)}{g(w)^2}, \\
h^{(5)}(w) &=& 
\frac{g^{(5)}(w)}{g(w)} - 
\frac{24(g'(w))^5}{g(w)^5} - \frac{60(g'(w))^3g''(w)}{g(w)^4} + 
\frac{30g'(w)(g''(w))^2}{g(w)^3}\\
&& + 
\frac{20(g'(w))^2g^{(3)}(w)}{g(w)^2} - 
\frac{10g''(w)g^{(3)}(w)}{g(w)^2} - 
\frac{5g'(w)g^{(4)}(w)}{g(w)^2}, \\
h^{(6)}(w) &=& 
\frac{g^{(6)}(w)}{g(w)} - 
\frac{120(g'(w))^6}{g(w)^6} + 
\frac{360(g'(w))^4g''(w)}{g(w)^5} - 
\frac{270(g'(w))^2(g''(w))^2}{g(w)^4}\\
&& + 
\frac{30(g''(w))^3}{g(w)^2} - 
\frac{120(g'(w))^3g^{(3)}(w)}{g(w)^4} + 
\frac{120g'(w)g''(w)g^{(3)}(w)}{g(w)^3}\\
&& - 
\frac{10(g^{(3)}(w))^2}{g(w)^2} + 
\frac{30(g'(w))^2g^{(4)}(w)}{g(w)^3} - 
\frac{15g''(w)g^{(4)}(w)}{g(w)^2} - 
\frac{6g'(w)g^{(5)}(w)}{g(w)^2}, \\
h^{(7)}(w) &=& 
\frac{g^{(7)}(w)}{g(w)} + 
\frac{720(g'(w))^7}{g(w)^7} - 
\frac{2520(g'(w))^5g''(w)}{g(w)^6} + 
\frac{2520(g'(w))^3(g''(w))^2}{g(w)^5}\\
&& - 
\frac{630g'(w)(g''(w))^3}{g(w)^4} + 
\frac{840(g'(w))^4g^{(3)}(w)}{g(w)^5} - 
\frac{1260(g'(w))^2g''(w)g^{(4)}(w)}{g(w)^4}\\
&& + 
\frac{210(g''(w))^2g^{(3)}(w)}{g(w)^2} + 
\frac{140g'(w)(g^{(3)}(w))^2}{g(w)^3} - 
\frac{210(g'(w))^3g^{(4)}(w)}{g(w)^4}\\
&& + 
\frac{210g'(w)g''(w)g^{(4)}(w)}{g(w)^3} - 
\frac{35g^{(3)}(w)g^{(4)}(w)}{g(w)^2} + 
\frac{42(g'(w))^2g^{(5)}(w)}{g(w)^3}\\
&& - 
\frac{21g''(w)g^{(5)}(w)}{g(w)^2} - 
\frac{7g'(w)g^{(6)}(w)}{g(w)^2}, 
\end{eqnarray*}
where $g(w)$ and $h(w)$ are defined as (\ref {def_g}). 
Combining this with (\ref {temp2}), Lemma \ref {lem_g_diff}, and Propositions \ref {first_diff_theta}, \ref {second_diff_theta}, \ref {prop_3_diff_theta_eps}, and \ref {theta_high}, 
we can calculate $\Psi ^\varepsilon _2(\hat{w}_\varepsilon )$ and $\Psi ^\varepsilon _3(\hat{w}_\varepsilon )$ explicitly.

\section*{Acknowledgement}

The authors thank communications with 
Masaaki Fukasawa of Osaka University, 
who directed their attentions to the Lugannani--Rice formula.
Jun Sekine's research was supported by a Grant-in-Aid
for Scientific Research (C), No.\hspace{1mm}23540133,
from the Ministry of Education, Culture,
Sports, Science, and Technology, Japan.

\end{document}